\DeclareMathOperator{\EX}{\mathbb{E}}
\DeclareMathOperator{\PX}{\mathbb{P}}
\newcommand{\Ex}{\mathbb{E}}
\newcommand{\p}{\mathbb{P}}
\newtheorem{theorem}{Theorem}
\newtheorem{example}{Example}
\newtheorem{definition}{Definition}[section]
\newtheorem{remark}{Remark}
\newtheorem{lemma}{Lemma}
\newcommand{\R}{\mathbb{R}}
\newcommand{\bde}{\begin{definition}}
\newcommand{\ede}{\end{definition}}
\newcommand{\bX}{\mathbf{X}}
\newcommand{\bvr}{\boldsymbol{\varrho}}
\newcommand{\Xhat}{\widehat X}
\newcommand{\Yhat}{\widehat Y}
\newcommand{\bhx}{\widehat{\bf X}}
\newcommand{\bZ}{\mathbf{Z}}
\newcommand{\E}{\mathbb{E}}
\newcommand{\X}{\Tilde{X}_i}
\newcommand{\TP}{\widetilde{P}}
\newcommand{\fM}{\mathfrak{M}}
\newcommand{\sm}{S_{\fM}}
\newcommand{\um}{U_{\fM}}
\newcommand{\spp}{S_{\TP}}
\newcommand{\up}{U_{\TP}}
\newcommand{\TQ}{\widetilde Q}
\begin{document}
%
\title{The Importance of Being Correlated:\\
	Implications of Dependence in Joint Spectral Inference across Multiple Networks}
\author{Konstantinos Pantazis$^1$, Avanti Athreya$^2$, Jes\'us Arroyo$^3$, William N. Frost$^4$, Evan S. Hill$^4$, Vince Lyzinski$^1$\\
	\small{$^1$ University of Maryland, College Park, Department of Mathematics\\
		$^2$ Johns Hopkins University, Department of Applied Mathematics and Statistics\\
		$^3$ Texas A \& M University, College Station, TX, Department of Statistics\\
		$^4$ Rosalind Franklin University of Medicine and Science, Chicago Medical School, Chicago, IL, Department of Cell Biology and Anatomy and Center for Brain Function and Repair}}
	\maketitle
\begin{abstract}
Spectral inference on multiple networks is a rapidly-developing subfield of graph statistics. Recent work has demonstrated that joint, or simultaneous, spectral embedding of multiple independent networks can deliver more accurate estimation than individual spectral decompositions of those same networks. Such inference procedures typically rely heavily on independence assumptions across the multiple network realizations, and even in this case, little attention has been paid to the induced network correlation that can be a consequence of such joint embeddings. 
In this paper, we present a {\em generalized omnibus} embedding methodology and we provide a detailed analysis of this embedding across both independent and correlated networks, the latter of which significantly extends the reach of such procedures, and we describe how this omnibus embedding can itself induce correlation. 
This leads us to distinguish between {\em inherent} correlation---that is, the correlation that arises naturally in multisample network data---and {\em induced} correlation, which is an artifice of the joint embedding methodology. We show that the generalized omnibus embedding procedure is flexible and robust, and we prove both consistency and a central limit theorem for the embedded points. 
We examine how induced and inherent correlation can impact inference for network time series data, and we provide network analogues of classical questions such as the effective sample size for more generally correlated data. 
Further, we show how an appropriately calibrated generalized omnibus embedding can detect changes in real biological networks that previous embedding procedures could not discern, confirming that the effect of inherent and induced correlation can be subtle and transformative. 
By allowing for and deconstructing both forms of correlation, our methodology widens the scope of spectral techniques for network inference, with import in theory and practice.
\end{abstract}

\section{Introduction}

Networks and graphs, which consist of objects of interest and a vast array of possible relationships between them, arise very naturally in fields as diverse as political science (party affiliations among voters); bioinformatics (gene interactions);
physics (dimer systems); and sociology (social network analysis), to name but a few.
As such, they are a useful data structure for modeling complex interactions between different experimental entities. 
Network data, however, is qualitatively distinct from more traditional Euclidean data, and statistical inference on networks is a comparatively new discipline, one that has seen explosive growth over the last two decades.
While there is a significant literature devoted to the rigorous statistical study of single networks, multiple network inference---the analogue of the classical problem of multiple-sample Euclidean inference---is still relatively nascent.

Much recent progress in network inference has relied on extracting Euclidean representations of networks, and popular methods include spectral embeddings of network adjacency \cite{athreya_survey} or Laplacian \cite{rohe2011spectral} matrices, 
representation learning  \cite{grover2016node2vec,ribeiro2017struc2vec}, 
or Bayesian hierarchical methods \cite{durante2017nonparametric}.
Moreover, many network models \cite{Hoff2002} allow for important properties of network entities to be hidden, or {\em latent}, and posit that relationships between entities depend on these latent variables. 
Such models, known as {\em latent position networks}, have wide intuitive appeal. For instance, relationships among participants in a social network are a function of the participants' personal interests, which are typically not directly observed. 
In these cases, spectral embeddings can provide useful estimates of latent variables, effectively transforming, via eigendecompositions, a non-Euclidean inference problem into a Euclidean one.

For single latent position networks, spectrally-derived estimates of important graph parameters are well-understood, and under mild assumptions, these estimates satisfy classical notions of consistency \cite{STFP-2011,rohe2011spectral}, asymptotic normality \cite{athreya2013limit,tang_lse}, and efficiency \cite{tang_lse,tang2017asymptotically,xie2019efficient,LSM}.
More recently, spectral methods have also proven useful in multi-sample network inference, including (non)parametric estimation \cite{durante2017nonparametric,tang2018connectome}, two-sample hypothesis testing \cite{tang14:_semipar,tang14:_nonpar,asta,li2018two}, and graph matching \cite{lyzinski2015spectral,zhang2018consistent,zhang2018unseeded}.
Typically, these methods rely upon separately embedding multiple networks into a lower-dimensional Euclidean space and then aligning the embeddings via Procrustes analysis \cite{gower_procrustes} or point set registration methods \cite{myronenko2010point}. 
An important issue in multi-sample inference, however, is the use of {\em multiple} networks both for improved estimation of underlying model parameters and for more streamlined testing across several populations of networks. To this end, a number of recent papers are dedicated to the development of novel techniques for simultaneously embedding several networks into a common Euclidean space, employing spectral graph techniques \cite{levin_omni_2017,nielsen2018multiple,wang2019joint,arroyo2019inference}, tensor factorizations \cite{zhang2018tensor,zhang2019cross,jing2020community}, 
multilayer network decompositions \cite{kivela2014multilayer,paul2016consistent,paul2020spectral}, and nonparametric Bayesian algorithms \cite{durante2017nonparametric,durante2014bayesian}.

While multisample joint embedding methods allow for accurate graph inference and are often superior to individual separate embeddings \cite{levin_omni_2017, arroyo2019inference}, there are a number of potential pitfalls in joint embeddings.  In particular, network statisticians must confront issues of noisy vertex alignments across graphs \cite{lyzinski16:_infoGM}; 
large, high-rank matrices that arise in a joint embedding \cite{draves2020bias}; the relationship between individual network sparsity and the signal in a joint embedding; and the {\em induced} correlation across estimates that arise from the joint embedding, the last of which is inevitable in any simultaneous embedding procedure. What is more, virtually all existing procedures for multisample network inference rely, like their classical analogues, on an assumption of independence across network realizations.  In this sense, existing methodology is ill-equipped to handle, at least in a principled manner, the {\em inherent} network correlation---for example, the natural and unavoidable correlation across edges in a network time series---to say nothing of the additional correlation induced by any dimension-reduction procedure.

This paper is devoted to broadening spectral analysis to account for both types of correlation, and to understanding how the correlation induced by joint spectral procedures can mask or amplify important signal. We focus on a generalization of the omnibus multiple graph embedding procedure (OMNI) of \cite{levin_omni_2017}, in which multiple networks are simultaneously embedded into a single lower-dimensional subspace, with a distinct representation for each vertex across the networks.
The work of \cite{levin_omni_2017} considers this problem in the case where the network realizations themselves are independent, though even when independent network samples are jointly  embedded, correlation across the embedded point clouds is automatically induced by the OMNI procedure (this is the price we pay to circumvent the pairwise Procrustes/registration analysis necessary in separate embedding settings), to say nothing of the impact of OMNI in preserving or masking the a priori present inherent correlation across networks.
It is natural, then, to seek to adapt the OMNI embedding technique in order to preserve in the embedded (independent or correlated) graphs the same correlation that would be present if edge-wise inherently correlated networks are embedded separately and then aligned.
This would allow for the jointly-embedded networks to be a more appropriate proxy in embedding space for sequences of graphs with complex dependency structures. 




To understand these phenomena more rigorously, we anchor our analysis in a specific class of latent position random graphs, the {\em random dot product graph} (RDPG; see \cite{young2007random}). 
Random dot product graphs have proven to be a theoretically tractable family of latent position networks \cite{athreya_survey} suitable for modeling a host of complex real-data networks \cite{tang_lse,priebe2019two,patsolic2020vertex}. In Section \ref{ssec:RDPG}, we formulate several models for inherent correlation across a series of random dot product graphs, and, thereafter, examine the impact of a joint spectral embedding of such a collection of multiple networks.
Given $m$ adjacency matrices of multiple independent, $n$-vertex, aligned RDPGs,
the OMNI embedding of \cite{levin_omni_2017} and its more general counterpart---the genOMNI embedding we define here---provide $m$ distinct representations for the latent attributes of each of the $n$ vertices in the collection of graphs. 
This permits both consistent estimation of underlying RDPG latent positions (in which the omnibus embedding is empirically shown to be competitive with embedding the sample mean of the respective adjacencies) and inference {\em across} the latent positions, including testing, classification, and change-point detection.
The generalized omnibus embedding jointly embeds the collection of graphs, though by construction the $m$ distinct $n \times d$ blocks in the genOMNI embedding (there are $m$ such blocks) are necessarily correlated. 
This is not a unique feature of the generalized omnibus methodology; all joint embedding procedures typically induce correlation across network pairs. 
What is unique, at least to our knowledge, about the genOMNI embedding is that the dual impact of induced and inherent correlation in the embedding is theoretically tractable.

In light of this, the major contributions of this paper are as follows. The first is an entirely novel treatment of method-induced correlation in the output of joint embedding procedures, and the second is the development, through genOMNI, of a flexible joint embedding that can not only reproduce complex correlation in the embedded space, but can also accommodate inherently correlated network data while retaining important theoretical performance guarantees of consistency and asymptotic normality.
By comparing the omnibus embedding of independent graphs to the separate embeddings of correlated (and subsequently Procrustes-aligned) graphs, we can explicitly capture the level of correlation the joint OMNI embedding induces, in the limit, across independent networks.
This, in turn, motivates the creation of the generalized omnibus embedding (Definition \ref{def:genOMNI}), which produces more complex correlation structure in the embedded space, enabling higher-fidelity application of the omnibus methodology in real data.
This replication of more complicated correlation structure renders the generalized omnibus embedding suitable for inference on network time series, because it can reproduce, via realizations of independent networks, the correlation that is an important component of a time series. 

The core result underlying the above is a central limit theorem (Theorems \ref{theorem:ind_lat_classical_omni} and \ref{thm:genOMNI}) for the row-wise residuals of the estimated latent positions in both inherently correlated or independent RDPGs in a generalized omnibus embedding framework.
In addition, 
we are able to precisely characterize the dual effects of inherent and induced correlation on the limiting covariance structure across the embedded networks. In Sections \ref{sec:damp} and \ref{sec:experiments-simulations2}, we show how the weights of genOMNI can be adapted for certain specific inherent correlation structures.
As an illustration of the power of the more nuanced embedding correlation enabled by the genOMNI setting, we present in Section \ref{sec:aplysia} an analysis of a motor program time series of networks in the brain of the marine mollusk {\em Aplysia californica}. 
The classical omnibus embedding on this time series homogenizes the inherent correlation across the time-series, 
effectively obscuring important network changes corresponding directly to transitions in animal behavior (from stimulus to gallop and crawl). Our genOMNI embedding, however, is flexible enough to permit different weightings of networks over time, and this more general joint inference procedure captures exactly the signal the earlier omnibus embedding misses. Figure \ref{fig:omniaplysia} in Section \ref{sec:aplysia} and Figure \ref{fig:domni_aplysia} in Section  \ref{sec:damp} demonstrate this contrast in inferential accuracy between the two.

Lastly, in Section \ref{sec:effective-sample-size}, we further show, with theory, simulated and real data examples, how inherent and induced correlation across networks impact the effective sample size for subsequent inference tasks in the joint embedded space. This provides a network analogue of the classical statistical challenge of quantifying, via a comparison of sample sizes, the extent to which dependence in data can impact inference. 
In sum, our generalized omnibus embedding and accompanying correlation analysis form a tractable, scalable inference methodology that can be applied to independent and correlated data, carries straightforward theoretical guarantees, has demonstrable empirical utility, and correctly identifies important and subtle network changes that its predecessors miss.

\vspace{2mm}

\noindent\textbf{Notation:} For a positive integer $n$, we let $[n]=\{1,2,\cdots,n\}$, let $0_n$ be the zero $n\times n$ matrix, and let $\vec{1}_n\in\R^{n}$ (resp. $J_n\in\R^{n\times n}$) be the vector (resp. matrix) with all entries identically equal to one.
The set of $n\times n$ real orthogonal matrices is denoted by $\mathcal{O}_n$.
We represent a simple (no self-loops or multiple edges), un-weighted and un-directed graph as the ordered pair $G= (V,E)$, 
where $V=[n]$ represents the set of nodes and $E\subset \binom{n}{2}$ the set of edges of the graph; we 
denote the set of all $n$-vertex labeled graphs via $\mathcal{G}_n$. 
For the graph $G=(V,E)$, we will denote its adjacency matrix via $A\in\{0,1\}^{n\times n}$; i.e., $A_{ij}$ is equal to 1 if there exists an edge between nodes $i$ and $j$ in $G$, and 0 otherwise. 
Where there is no danger of confusion, we will often refer to a graph $G$ and its adjacency matrix $A$ interchangeably. The Kronecker product is denoted by $\otimes$ and the direct sum by $\oplus$. 
Finally, the symbols $\|\cdot\|_F,\,\|\cdot\|$, and $\|\cdot\|_{2\rightarrow \infty}$  correspond to the Frobenius, spectral and two-to-infinity norms respectively.
\section{Background}
\label{sec:bg}
In this section, we will introduce the modeling and spectral embedding frameworks that we build our theory and methods upon.

\subsection{Random Dot Product Graphs}
\label{sec:rdpg}

The theoretical developments to follow are situated in the context of the \emph{random dot product graph} (as mentioned above, abbreviated RDPG) model of \cite{young2007random}. 
Random dot product graphs are a special case of the more general {\em latent position random graphs} (abbreviated LPGs) of \cite{hoff_raftery_handcock}. Every vertex in a latent position random graph has associated to it a (typically unobserved) {\em latent position}, itself an object belonging to some (often Euclidean) space $\mathcal{X}$.  Probabilities of an edge between two vertices $i$ and $j$, $p_{ij}$, are then a function $\kappa(\cdot,\cdot): \mathcal{X} \times \mathcal{X} \rightarrow [0,1]$ (known as the {\em link function}) of their associated latent positions $(x_i, x_j)$. Thus $p_{ij}=\kappa(x_i, x_j)$, and edges between vertices arise independently of one another. Given these probabilities, the entries $A_{ij}$ of the adjacency matrix $A$ are conditionally independent Bernoulli random variables with success probabilities $p_{ij}$. We consolidate these probabilities into a matrix $P=(p_{ij})$, and we write $A \sim P$ to denote this relationship.
 
 In a $d$-dimensional random dot product graph, the latent space is an appropriately-constrained subspace of $\mathbb{R}^d$, and the link function is simply the dot product of the two latent $d$-dimensional vectors. Random dot product graphs are often divided into two types: those in which the latent positions are fixed, and those in which the latent positions are themselves random. Specifically, we consider the case in which the latent position $X_i \in \mathbb{R}^d$ for vertex $i$ is drawn from some distribution $F$ on $\mathbb{R}^d$, and we further assume that the latent positions for each vertex are drawn independently and identically from this distribution $F$.
 Random dot product graphs have proven to be a tractable and useful model for low-rank latent position networks, and variants of the RDPG model have recently emerged that extend the framework to allow for modeling more complex network topologies \cite{rubin-delanchy_tang_priebe_grdpg,tang_lse}.
 
\begin{definition}[$d$-dimensional RDPG]
\label{def:RDPG}
Let $F$ be a distribution on a set $\mathcal{X}\in \R^d $ satisfying $\langle x,x' \rangle \in [0,1]$ for all $x,x'\in \mathcal{X}$.
Let $X_1,X_2,\cdots,X_n\sim F$ be i.i.d. random variables distributed via $F$,
and let $P=\bX\bX^T$, where  
$\bX=[X_1^T| X_2^T| \cdots| X_n^T]^T\in\R^{n\times d}$.
Let $A$ be a symmetric, hollow adjacency matrix with above diagonal entries distributed via
\begin{align}
\label{eq:rdpg}
P(A|\bX)=\prod_{i<j}(X_i^TX_j)^{A_{ij}}(1-X_i^TX_j)^{1-A_{ij}};
\end{align}
i.e., conditioned on $\bX$ the above diagonal entries are independent Bernoulli random variables with success probabilities provided by the corresponding above diagonal entries in $P$.
The pair $(A,\bX)$ is then said to be an instantiation of a $d$-dimensional Random Dot Product Graph with distribution $F$, denoted $(A,\bX)\sim\mathrm{RDPG}(F,n)$. 
\end{definition}

\noindent Note that there is a rotational  non-identifiability inherent to the RDPG model.
Indeed, if $\mathbf{Y}=\bX W$ for $W\in\mathcal{O}_d$, then the distribution over graphs induced by Eq.\@ 
(\ref{eq:rdpg}) by $\bX$ and $\mathbf{Y}$ are identical; i.e., $\p(A|\bX)=\p(A|\mathbf{Y})$ for all $A$.
As inference in the RDPG setting often proceeds by first estimating the latent positions $\bX$, which can only be done up to a rotation factor, this model is not generally suitable for inference tasks that are not rotationally invariant.

\begin{remark}
We note that a generalization of the $\mathrm{RDPG}$ model has recently been developed, namely the \emph{Generalized Random Dot Product Graph} of \cite{rubin-delanchy_tang_priebe_grdpg}, which allows for modeling latent position graphs where $P$ is not necessarily positive definite (this, for example, allows for disassortative connectivity behavior in stochastic block model networks modeled via the $\mathrm{RDPG}$ framework which is not possible under Definition \ref{def:RDPG}).
In more general latent space models (see, for example, \cite{Hoff2002, lei2020network,NIPS2007_766ebcd5}), the probability of connections in the network are computed via more general similarity functions (i.e., kernels) compared to the dot product in the $\mathrm{RDPG}$. 
While we suspect that our results translate immediately to the generalized $\mathrm{RDPG}$ setting (and perhaps less immediately to the general latent space setting; see \cite{tang2012universally}), we do not pursue this further here.
\end{remark}


\subsection{Modeling Multiple Correlated RDPGs}
\label{ssec:RDPG}

Next, we present a natural extension of the RDPG model to the multiple network setting, namely the \emph{Joint Random Dot Product Graph} from \cite{levin_omni_2017}.
This model allows us to simultaneously characterize multiple graphs $(A^{(k)})_{i=1}^m$ with a common set of latent positions $\bX$ so that the collection of $A^{(k)}$ are conditionally independent given $\bX$.
\begin{definition}[Joint Random Dot Product Graph]
\label{def:JRDPG}
Let $F$ be a distribution on a set $\mathcal{X}\in \R^d $ satisfying $\langle x,x' \rangle \in [0,1]$ for all $x,x'\in \mathcal{X}$.
Let $X_1,X_2,\cdots,X_n\stackrel{i.i.d.}{\sim} F$,
and let $P=\bX\bX^T$, where  
$\bX=[X_1^T| X_2^T| \cdots| X_n^T]^T\in\R^{n\times d}$.
We say that the random graphs $(A^{(1)},A^{(2)},\cdots,A^{(m)})$ are an instantiation of a \emph{Joint
Random Dot Product Graph} model (abbreviated $\mathrm{JRDPG}$), written 
$$
(A^{(1)},A^{(2)},\cdots,A^{(m)},\bX)\sim \mathrm{JRDPG}(F,n,m)
$$
if marginally each $(A^{(k)},\bX)\sim\mathrm{RDPG}(F,n)$ and conditioned on $\bX$, the $A^{(k)}$'s are independent with distribution given by Eq.\@  \eqref{eq:rdpg}.
\end{definition}

\noindent
While this model allows for modeling multiple networks simultaneously, the conditional independence is ill suited for a number of inference tasks (e.g., time-series analysis in graphs as in \cite{wang2013locality,tang2013attribute,bhattacharyya2020consistent}) that necessitate more nuanced dependency structure across graphs.
While correlated RDPG models exist for pairs of graphs (see, for example, \cite{patsolic2020vertex}), we seek a framework that allows for (pairwise) correlation across the entire collection of $A^{(i)}$.
To do so, we assume, as in JRDPG, that each network is marginally distributed as an RDPG, and then we assign an edge-wise correlation among the network pairs.
We will then provide a few constructive methods via which such networks can be sampled.
 
\begin{definition}[Pairwise multiple edge-correlated RDPG]
\label{defn:multirhordpg}
With notation as in Definition \ref{def:JRDPG}, 
we say that the random graphs $(A^{(1)},A^{(2)},\cdots,A^{(m)})$ are an instantiation of a \emph{R-correlated Joint
Random Dot Product Graph} model, written $(A^{(1)},A^{(2)},\cdots,A^{(m)},\bX)\sim\mathrm{JRDPG}(F,n,m,R)$, if 
\begin{itemize}
    \item[i.] Marginally, $(A^{(k)},\bX)\sim\mathrm{RDPG}(F,n$) for every $k\in[m]$;
    \item[ii.] The matrix $R\in[-1,1]^{m\times m}$ is symmetric and has diagonal entries identically equal to 1.  We will write the $(k_1,k_2)$-element of $R$ via $\rho_{k_1,k_2}$.
    \item[iii.] Conditioned on $\bX$ the collection
    $$\{A^{(k)}_{i,j}\}_{k\in[m],i<j}$$ is mutually independent except that for each $\{i,j\}\in\binom{V}{2}$, we have for each $k_1,k_2\in[m]$,
    $$\mathrm{correlation}(A^{(k_1)}_{i,j},A^{(k_2)}_{i,j})=\rho_{k_1,k_2}.$$
\end{itemize}
Note that if all off-diagonal elements of $R$ are identically equal to $\rho$, then we will often write $(A^{(1)},A^{(2)},\cdots,A^{(m)},\bX)\sim\mathrm{JRDPG}(F,n,m,\rho)$.
\end{definition}
\noindent Below we will assume that $R\geq 0_m$ entry-wise, although negatively correlated graphs can be considered in Definition \ref{defn:multirhordpg}.
The pairwise multiple edge-correlated RDPG is a better candidate for modeling time series of networks and multilayer networks than the conditionally independent JRDPG model, as it allows for generating conditionally (within graph) edge-independent networks and induces correlation across networks pairwise. 
A natural extension would be to allow for correlation across edges within each network, and we are actively working on this extension; see \cite{babkin2020large} for an example of how this structure could be introduced in a model related to the stochastic blockmodel.

We next illustrate two constructions that result in RDPG$(F,n,m,R)$ networks.

\subsubsection{Forward Propagation (Sequential) Model}
\label{sec:FWD}

The \emph{Forward Propagation (Sequential) model} (abbreviated $\mathrm{JRDPG}_{\text{for}}$) fits time-varying networks with forward propagation of correlation suitable for network time-series inference tasks such as (spectral) clustering \cite{pensky2019} and anomaly detection of vertices (or networks) \cite{chen_AnDet2020} in a given time period. 
It derives the JRDPG$(F,n,m,R)$ for $R$ equal to the symmetric matrix
\[R_{for}=\begin{bmatrix}
    1&\varrho_{1,2}&\varrho_{1,2}\varrho_{2,3}&\cdots&\prod_{k=1}^{m-1}\varrho_{k,k+1}\\\varrho_{1,2}&1&\varrho_{2,3}&\cdots&\prod_{k=2}^{m-1}\varrho_{k,k+1}\\\varrho_{1,2}\varrho_{2,3}&\varrho_{2,3}&1&\cdots&\prod_{k=3}^{m-1}\varrho_{k,k+1}\\\vdots&\vdots&\vdots&\ddots&\vdots\\\prod_{k=1}^{m-1}\varrho_{k,k+1}&\prod_{k=2}^{m-1}\varrho_{k,k+1}&\prod_{k=3}^{m-1}\varrho_{k,k+1}&\cdots&1
    \end{bmatrix}\in\R^{m\times m}.\]
The formal definition is presented as follows.
\begin{definition}\emph{(Forward Propagation (Sequential) model)}
\label{def:seq2_RDPG}
With the notation as in Definition \ref{defn:multirhordpg},
 we say that the random graphs $A^{(1)},A^{(2)},\cdots,A^{(m)}$ are an instantiation of a sequential (correlated) $\mathrm{JRDPG}$ and we write $(A^{(1)},A^{(2)},\cdots,A^{(m)},\bX)\sim\mathrm{JRDPG}_{\text{for}}$ $(F,n,m,\boldsymbol{\varrho})$ if 
\begin{itemize}
\item[i.] $\boldsymbol{\varrho}\in[0,1]^{m-1}$ is a vector whose $k$-th element is denoted $\varrho_{k,k+1}$.   
\item[ii.] $(A^{(1)},A^{(2)},\cdots,A^{(m)},\bX)\sim \mathrm{JRDPG}(F,n,m,R)$ where 
for $1\leq k_1<k_2\leq m$,
$$R_{k_1,k_2}
:=\rho_{k_1,k_2}
=\prod_{k=k_1}^{k_2-1} \varrho_{k,k+1}.$$
\end{itemize}
\end{definition}

\noindent 
Practically, we can sample from $(A^{(1)},A^{(2)},\cdots,A^{(m)},\bX)\sim \mathrm{JRDPG}_{\text{for}}(F,n,m,\bvr)$ by first sampling from $(A^{(1)},\bX)\sim $RDPG$(F,n)$, and then conditional on $\bX$ and $A^{(\ell)}$ for $\ell\geq 1$, independently sampling the edges of $A^{(\ell+1)}$ according to the following scheme
\[
 \forall\{i,j\}\in \binom{V}{2},\hspace{1cm}   A^{(\ell+1)}_{ij} \sim  \left.
        \begin{cases}
            \text{Bern}(P_{ij}+\varrho_{\ell,\ell+1}(1-P_{ij}))  &\text{ if }A^{(\ell)}_{ij}=1,\\
            \text{Bern}(P_{ij}(1-\varrho_{\ell,\ell+1}))  &\text{ if }A^{(\ell)}_{ij}=0.
        \end{cases}
    \right.
\]
A straightforward induction on $\ell$ guarantees that marginally, $(A^{(\ell)},\bX)\sim$ RDPG$(F,n)$. The form of the correlation follows from the recursion that for $k_1<k_2$ (suppressing the conditioning on $\bX$ below),
\begin{align*}
\text{correlation}(A^{(k_1)}_{i,j},A^{(k_2)}_{i,j})=\varrho_{k_1,k_2}\cdot \text{correlation}(A^{(k_1)}_{i,j},A^{(k_2-1)}_{i,j}).
\end{align*}
It is also possible to model non-stationary time series of graphs (i.e., allowing for distinct (latent) distributions of $A^{(k_1)}$ and $A^{(k_2)}$ still with pairwise correlation; see \cite{het} for details). 

%
%
\subsubsection{Single Generator Model}
\label{sec:GEN}

Similarly, the \emph{Single Generator model} derives from JRDPG$(F,n,m,R)$ when the correlation matrix $R$ is equal to
    \[R_{gen}=\begin{bmatrix}
   1&\varrho_1\varrho_2&\cdots&\varrho_1\varrho_{m}\\
   \varrho_2\varrho_1&1&\cdots&\varrho_2\varrho_{m}\\
   \vdots&\vdots&\ddots&\vdots\\
   \varrho_{m}\varrho_1&\varrho_{m}\varrho_2&\cdots&1
    \end{bmatrix}\in\R^{m\times m}.\]
    Define the (generator) vector $\nu=[\varrho_1, \cdots ,\varrho_{m}]^T\in\R^{m}$, so that the correlation matrix $R_{gen}$ can be also written as $R_{gen}=\nu\nu^T+\text{diag}(I_{m}-\nu\nu^T)$. 
    It is reasonable to add the term $\text{diag}(I_{m}-\nu\nu^T)$; this ensures that the correlation of each graph with itself remains the same for all graphs and is equal to $1$.
\begin{definition}\label{def:generator_RDPG}\emph{(Single Generator Model)}
With notation as in Definition \ref{defn:multirhordpg}, 
 we say that the random graphs $A^{(0)},A^{(1)},\cdots,A^{(m)}$ are an instantiation of a multiple $\mathrm{RDPG}$ with generator matrix $A^{(0)}$ and we write  $(A^{(0)},A^{(1)},A^{(2)},\cdots,A^{(m)},\bX)\sim \mathrm{JRDPG}_{\text{gen}}(F,n,m,\nu)$ if 
\begin{itemize}
\item[i.] Marginally, $(A^{(0)},\bX)\sim\mathrm{RDPG}(F,n$);
\item[ii.] $\nu\in[0,1]^{m}$ is a nonnegative vector with entries in $[0,1]$; we denote the $k$-th entry of $\nu$ via $\nu_k=\varrho_{k}$.  
\item[iii.] $(A^{(1)},A^{(2)},\cdots,A^{(m)},\bX)\sim \mathrm{JRDPG}(F,n,m,R)$ where 
$R=\nu\nu^T+\mathrm{diag}(I_{m}-\nu\nu^T)$ so that 
for $1\leq k_1<k_2\leq m$,
$$R_{k_1,k_2}
:=\rho_{k_1,k_2}
=\varrho_{k_1}\varrho_{k_2}.$$
\end{itemize}
\end{definition}

The single generator model, JRDPG$_{gen}$, mimics the correlated Erd\H os-R\'enyi graph pairs that are a common model in the graph matching literature (see, for example, \cite{Deanon_Pedarshani,cullina2016improved}) where the pair of graphs are noisy realizations of a background network (here $A^{(0)}$).
The single generator model with $m$ networks provides a suitable framework for studying problems of aligning multiple networks.
As in the JRDPG$_{for}$ case, we can sample from $(A^{(0)},\cdots,A^{(m)},\bX)\sim \mathrm{JRDPG}_{\text{gen}}(F,n,m,\nu)$) by first sampling from $(A^{(0)},\bX)\sim $RDPG$(F,n)$), and then conditional on $A^{(0)}$ and $\bX$, independently sampling the edges of each $A^{(\ell)}$, $\ell\in[m]$, according to the following scheme
\[ \forall\{i,j\}\in\binom{V}{2},\hspace{1cm}
  A^{(\ell)}_{ij}\sim \left.
  \begin{cases}
        \text{Bern}(P_{ij}+\varrho_{\ell}(1-P_{ij}))  &\text{ if }A^{(0)}_{ij}=1,\\
        \text{Bern}(P_{ij}(1-\varrho_{\ell}))  &\text{ if }A^{(0)}_{ij}=0.
  \end{cases}
  \right.
\]
The above scheme implies that the correlation between two networks $A^{(\ell_1)},A^{(\ell_2)}$ is given by the product $\varrho_{\ell_1}\varrho_{\ell_2}$, i.e., $\text{correlation}(A^{(\ell_1)}_{i,j},A^{(\ell_2)}_{i,j})=\varrho_{\ell_1}\varrho_{\ell_2}$.

\subsection{Spectral graph embeddings}
\label{sec:omni}
One of the key inference tasks in latent position random graphs (LPGs) is to estimate the unobserved latent positions for each of the vertices based on a single observation of the adjacency matrix of a sufficiently large graph. Since the matrix of connection probabilities for an RDPG is expressible as an outer product of the matrix of true latent positions, and since the adjacency matrix $A$ can be regarded as a ``small" perturbation of $P$, the inference of properties of $P$ from an observation of $A$ is a problem well-suited to spectral graph methods, such as singular value decompositions of adjacency or Laplacian matrices. Indeed, these spectral decompositions have been the basis for a suite of approaches to graph estimation, community detection, and hypothesis testing for random dot product graphs. For a comprehensive summary of these techniques, see \cite{athreya_survey}.
  Note that the popular stochastic blockmodel (SBM) with positive semidefinite block connection probabilities can be regarded as a random dot product graph. In an SBM, there are a finite number of possible latent positions for each vertex---one for each block---and the latent position exactly determines the block assignment for that vertex. 
%
 
 For a random dot product graph in which the latent position $X_i \in \mathbb{R}^d$ for each vertex $i$, $1 \leq i \leq n$, are drawn i.i.d from some distribution $F$ on $\mathbb{R}^d$, a common graph inference task is to infer properties of $F$ from an observation of the graph alone. For example, in a stochastic block model, in which the distribution $F$ is discretely supported, we may wish to estimate the point masses in the support of $F$. In the graph inference setting, however, there are two sources of randomness that culminate in the generation of the actual graph: first, the randomness in the latent positions, and second, {\em given these latent positions}, the conditional randomness in the existence of edges between vertices. 
 
 
A rank-$d$ RDPG has a connection probability matrix $P$ that is necessarily low rank (rank $d$, regardless of the number of vertices in the graph); hence random dot product graphs can be productively analyzed using low-dimensional embeddings. 
Under mild assumptions, the adjacency matrix $A$ of a random dot product graph approximates the matrix $P=\E(A)$. To be more precise, $P=\E(A|\bX)$ 
in the sense that the spectral norm of $A-P$ can be controlled; see  \cite{oliveira2009concentration} and \cite{lu13:_spect}.
 It is reasonable to ask how close the spectrum and associated invariant subspaces of $A$ are to those of $P$. Weyl's Theorem \cite{horn85:_matrix_analy} describes how the eigenvalues of $A$ differ from those of $P$. 
 Sharp bounds on differences between the associated invariant subspaces are fewer, with the Davis-Kahan Theorem \cite{davis70, Bhatia1997,DK_usefulvariant} perhaps the best known. Because of the invariance of the inner product to orthogonal transformations, however, the RDPG exhibits a clear nonidentifiability: latent positions can be estimated only up to an orthogonal transformation.

Since $P$ is a symmetric, positive definite, rank $d$-matrix of the form $P=\bX\bX^T$,  the latent position matrix $\bX$ can be written as $\bX=U_PS_P^{1/2} W$ for some orthogonal matrix $W$, where $S_P$ is the diagonal matrix of the $d$ nonzero eigenvalues of $P$, sorted by magnitude, and $U_P$ the associated eigenvectors.
The Davis-Kahan Theorem translates spectral norm bounds on $A-P$ to projection operator bounds between $U_AU_A^T$ and $U_PU_P^T$, and, in turn, into Frobenius norm bounds between $U_A$ and a rotation of $U_P$ \cite{rohe2011spectral}.

These bounds can be sufficiently sharpened to ensure that the rows of a partial spectral decomposition of $A$, known as the {\em adjacency spectral embedding} (ASE)  are accurate estimates of the latent positions $X_i$ for each vertex.  With this in mind, we define the adjacency spectral embedding (ASE) as follows.
\begin{definition}[Adjacency Spectral Embedding (ASE)] Let $d\geq 1$ be a positive integer.  
The $d$-dimensional \emph{adjacency spectral embedding} of a graph $A\in\mathcal{G}_n$ into $\mathcal{R}^d$, denoted by $\mathrm{ASE}(A, d)$, is defined to be $\widehat \bX=U_{A}S_{A}^{1/2}$, where 
$$|A| = (A^{\top} A)^{1/2}=\left[U_A\,|\,\widetilde U_A\right]\left[S_A
\oplus \widetilde S_A\right]\left[U_A\,|\,\widetilde U_A\right]^T,$$ 
is the spectral decomposition of $|A|$, $S_A\in\mathbb{R}^{d\times d}$ is the diagonal matrix with the $d$ largest eigenvalues of $|A|$, and $U_A\in\mathbb{R}^{n\times d}$ the corresponding matrix of the $d$-largest eigenvectors.
\end{definition}
\vspace{2mm}
Now, if we define $\bhx=U_AS_A^{1/2}$, where $S_A$ is the diagonal matrix of the top $d$ eigenvalues of $(A^TA)^{1/2}$, sorted by magnitude, and the columns of $U_A$ are the associated unit eigenvectors, results in \cite{STFP-2011} and \cite{lyzinski13:_perfec} establish that, under assumptions on the spectrum of $P$, the rows $\widehat{\bX}$ are consistent estimates of the latent positions $\{X_i\}$ (up to orthogonal rotation).
Further, in \cite{athreya2013limit}, it is shown that under the RDPG, the (suitably-scaled) ASE of the adjacency matrix converges in distribution to a Gaussian mixture. 

The utility of the ASE in single graph inference points us to a natural test statistic for determining whether two random dot product graphs have the same latent positions. Namely, we can perform Procrustes alignment of two graphs' embeddings \cite{tang14:_semipar}.
Specifically,
let $A^{(1)}$ and $A^{(2)}$ be the adjacency matrices of two random dot product graphs on the same vertex set, with vertices aligned so that vertex $i$ in $A^{(1)}$ can be sensibly identified with vertex $i$ in $A^{(2)}$ for all $i \in [n]$. Letting ${\bf\Xhat}$ and ${\bf\Yhat}$ be the respective adjacency spectral embeddings of these two graphs, if the two graphs have the same generating $P$ matrices, then it is reasonable to surmise that the Procrustes distance
\begin{equation} \label{eq:semipar:proc}  \min_{W \in \mathcal{O}^{d \times d}}\|{\bf \Xhat -\Yhat} W \|_F,\end{equation}
will be small.
In \cite{tang14:_semipar}, the authors show that a scaled version of the Procrustes distance in~\eqref{eq:semipar:proc} provides a valid and consistent
test for the equality of latent positions for pairs of graphs. Unfortunately, the fact that a Procrustes minimization must be performed both complicates the test statistic and compromises its power.

The Procrustes alignment is necessary, though, because these two embeddings may well inhabit different $d$-dimensional subspaces of $\mathbb{R}^n$.  An alternative approach is to consider jointly embedding a collection of random graphs into the {\em same} subspace, which is the topic of \cite{levin_omni_2017} and which we extend here. 
Consider $(A^{(1)},A^{(2)},\cdots,A^{(m)},\bX)\sim \text{JRDPG}(F,n,m,R)$.
Once the latent position matrix is generated, the $m$ correlated random graphs with respective adjacency matrices $A_1, \cdots, A_m$ all have the same connection probability matrix $P$. 
That is,
$\mathbb{E}(A^{(k)}|\bX)=P$ for all $k$.
This is a direct graph-analogue of correlated Euclidean data from the same generating distribution, and given that we have {\em multiple} adjacency matrices from the same distribution, it is plausible that a latent position inference procedure using all the $A^{(k)}$ matrices is superior to an inference procedure that depends on a single $A^{(k)}$.  
In addition, since individual graph embeddings cannot be compared without Procrustes alignments, a joint embedding procedure that eliminates post-hoc pairwise alignments can be particularly useful.  

In \cite{levin_omni_2017}, the authors consider the setting where the $A^{(k)}$ are independent (i.e., the conditionally independent JRDPG model), and they build a spectral embedding of an $mn \times mn$ matrix $M$ from the $A^{(k)}$ matrices by placing each $A^{(k)}$ on the main block-diagonal and, on the $(k,\ell)$-th off-diagonal block, the average $\frac{A^{(k)}+A^{(\ell)}}{2}$. 
That is, for, say the $m=2$ case, the matrix $M$ is 
\begin{equation}\label{eq:omnibus_M}
M=\begin{bmatrix}
A^{(1)} 			& \frac{A^{(1)}+A^{(2)}}{2}\\
\frac{A^{(1)} + A^{(2)}}{2}	& A^{(2)}
\end{bmatrix},
\end{equation}
Observe that the expected value of $M$ is the matrix
\begin{equation}
\mathbb{E}[M]
=\begin{bmatrix}
P 	& P\\
P	& P
\end{bmatrix},
\end{equation}
which is still a rank $d$ matrix. As a consequence, a $d$-dimensional embedding of $M$ can produce an $m$-fold collection of {\em correlated} estimates for the rows of the latent position matrix $\bX$.
This {\em joint} or {\em omnibus} spectral embedding, denoted $OMNI$, is defined as follows. 
\begin{definition}[Omnibus Spectral Embedding]
\label{def:omni}
Let $A^{(1)},A^{(2)},\cdots,A^{(m)}$ be a collection of $m$ graphs each in $\mathcal{G}_n$.  
Define the omnibus matrix of $A^{(1)},A^{(2)},\cdots,A^{(m)}$ to be the $mn\times mn$ matrix $M$ defined via
\[M =\begin{bmatrix}A^{(1)}&\frac{A^{(1)}+A^{(2)}}{2}&\frac{A^{(1)}+A^{(3)}}{2}&\cdots&\frac{A^{(1)}+A^{(m)}}{2}\\\frac{A^{(2)}+A^{(1)}}{2}&A^{(2)}&\frac{A^{(2)}+A^{(3)}}{2}&\cdots&\frac{A^{(2)}+A^{(m)}}{2}\\\frac{A^{(3)}+A^{(1)}}{2}&\frac{A^{(3)}+A^{(2)}}{2}&A^{(3)}&\cdots&\frac{A^{(3)}+A^{(m)}}{2}\\\vdots&\vdots&&\ddots&\vdots\\\frac{A^{(m)}+A^{(1)}}{2}&\frac{A^{(m)}+A^{(2)}}{2}&\frac{A^{(m)}+A^{(3)}}{2}&\cdots&A^{(m)}\end{bmatrix}.\]
The $d$-dimensional \emph{Omnibus Spectral Embedding} of $A^{(1)},A^{(2)},\cdots,A^{(m)}$ is then given by
$$
\mathrm{OMNI}(A^{(1)},A^{(2)},\cdots,A^{(m)},d)= \mathrm{ASE}(M, d)=U_M S_M^{1/2},
$$
where $\mathrm{ASE}(M,d)$ is the $d$-dimensional adjacency spectral embedding of $M$; that is, $S_M\in\R^{d\times d}$ is the diagonal matrix of the top $d$ eigenvalues of $|M|$, and $U_M\in\R^{mn\times d}$ the corresponding eigenvectors.
\end{definition}
Note that the omnibus embedding is an $mn \times d$-dimensional matrix.  Each $m$-fold block of rows supplies an $n \times d$ matrix that can serve as a latent position estimate for the corresponding graph. That is, the $s$-th $n \times d$ block of the omnibus embedding, denoted $\hat{\bX}^{(s)}=[U_M S_M^{1/2}]^{s}$, is an estimate for $\bX^{(s)}$, the matrix of latent positions for the $s$-th graph.

In \cite{levin_omni_2017}, it is shown that---in parallel to the same result for the ASE---when $(A^{(1)},A^{(2)},\cdots,A^{(m)},\bX)\sim \text{JRDPG}(F,n,m)$, the rows of the omnibus embedding provide simultaneous consistent estimation for the latent positions $X_j$, where $1 \leq j \leq n$.  There are $m$ such rows for each vertex $j$.  In addition, \cite{levin_omni_2017} demonstrates that for fixed $m$, as $n \rightarrow \infty$, the distribution of any fixed $l$ sub-collection of the rows of the omnibus matrix, suitably scaled, converges in distribution to a mixture of Gaussians. It is important to note that in \cite{levin_omni_2017}, the jointly embedded adjacency matrices are all {\em independent}; as we shall see, in this work, which allows for explicit correlation in the multiple-graph model, we extend consistency and normality to the dependent case.

Even when the adjacency matrices are themselves independent, the simultaneous nature of the omnibus embedding forces correlation across the estimated latent positions, and in return it obviates the need for Procrustes alignments between $n \times d$ matrices $\hat{\bX}^{(s)}$ and $\hat{\bX}^{(t)}$, where $ 1 \leq s \leq t \leq m$. One consequence of this is that with the omnibus embedding, an empirically useful test statistic for assessing latent position equality is simply the Frobenius norm $\|\hat{\bX}^{(s)}-\hat{\bX}^{(t)}\|_F$ (as opposed to $\min_{W\in \mathcal{O}_d}\|\hat{\bX}^{(s)}-\hat{\bX}^{(t)}W\|_F$ in the separately embedded graph setting of \cite{tang14:_semipar}).
Quantifying the correlation induced by this joint embedding, and determining how it relates to the choice of block-matrix on the off-diagonal of $M$, is an important question.  The classical omnibus matrix uses a simple pairwise average, chosen to balance the requirements of estimation accuracy (when all graphs have the same latent positions) with the need to retain discriminatory power in hypothesis testing (when some of the graphs have different latent positions).  By analyzing precisely how the off-diagonal blocks can impact this correlation, we can describe how the omnibus embedding can be used to perform inference on collections of graphs that are not necessarily independent, or can replicate some desired correlation structure in spectral estimates. 

\section{Inherent and induced correlation in classical OMNI}
\label{sec:corr}
  \begin{figure}[t!]
\centering
\includegraphics[width=0.8\textwidth]{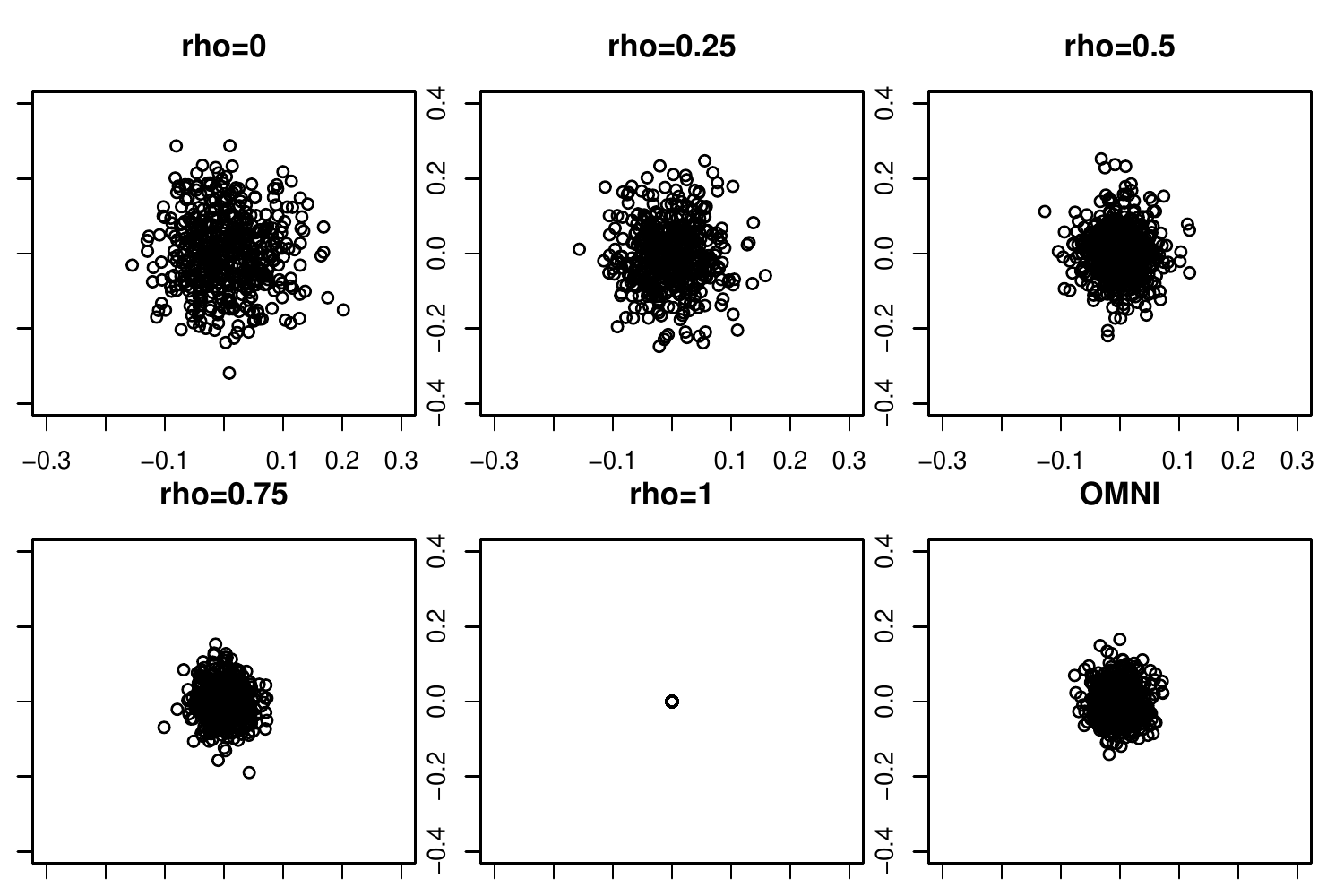}
\caption{The effect of induced versus inherent correlation in the embedded space.  
In panels 1--5 we plot, for various levels of $\rho$, the difference in the aligned estimates of $X_1$ from separately spectrally embedding
$(B^{(1)},B^{(2)},\bX)\sim \text{JRDPG}(F,n,2,\rho)$ (where $F$ has the form in Eq. (\ref{eq:Fex})).
In the sixth panel,
we consider  $(B^{(1)},B^{(2)},\bX)\sim \text{JRDPG}(F,300,2)$, and plot the 
difference in the pair of estimates of $X_1=\xi_1$ derived from the omnibus embedding.
In all panels, the experiment was repeated $nMC=500$ times.
}
  \label{fig:indlat}
\end{figure}

Before delving into our theoretical results on the competing roles of induced versus inherent correlation in the omnibus framework, we first consider the effect of classical OMNI in the original setting of embedding conditionally independent networks considered in \cite{levin_omni_2017,draves2020bias}.
This highlights both the correlation induced by the OMNI method and the dual contributions of inherent and induced correlation in subsequent correlated graph results.
  
To this end, we start with the task of embedding a pair of $n$-vertex correlated $d$-dimensional random dot product graphs,
$(B^{(1)},B^{(2)},\bX)\sim \text{JRDPG}(F,n,2,\rho).$
Prior to the development of methods to jointly embed the networks, a common approach was to separately embed the two graphs into a common Euclidean space, and then align the networks via orthogonal Procrustes analysis \cite{tang14:_semipar}.
One motivating question for the present work is how to capture the effect of the correlation on the embedded pair. Consider a simple, motivating example with $n=300$, $\rho\in\{0,0.25,0.5,0.75,1\}$ and $F$, a mixture of point mass distributions, defined via:
\begin{equation}
\label{eq:Fex}
F=\frac{1}{2}\delta_{\xi_1}+\frac{1}{2}\delta_{\xi_2},
\end{equation}
where $\xi_1$, $\xi_2\in\mathbb{R}^2$ satisfy 
$$
\begin{bmatrix}
\xi_1\\
\xi_2
\end{bmatrix}
\begin{bmatrix}
\xi_1\\
\xi_2
\end{bmatrix}^T=
\begin{bmatrix}
0.7&0.3\\
0.3&0.5
\end{bmatrix},
$$
so that the RDPGs drawn from $F$ are examples of correlated \emph{stochastic blockmodel} random graphs \cite{Holland1983}.
In order to better understand the role of the correlation in the embedded space, we separately spectrally embed each network, $\bhx_{B^{(1)}}=\mathrm{ASE}(B^{(1)},2)$ and 
$\bhx_{B^{(2)}}=\mathrm{ASE}(B^{(2)},2)$, 
and then align the networks via $\bhx_{B^{(1)}}W^{(1)}$ and 
$\bhx_{B^{(2)}}W^{(2)}$ where for each $k=1,2$,
$$W^{(k)}=\text{argmin}_{W\in \mathcal{O}_2}\|\bhx_{B^{(k)}}W-\bX\|_F.$$
In Figure \ref{fig:indlat}, we plot 
$(\bhx_{B^{(1)}}W^{(1)}-\bhx_{B^{(2)}}W^{(2)})_1$ (i.e., the first row of $\bhx_{B^{(1)}}W^{(1)}-\bhx_{B^{(2)}}W^{(2)}$), the  distance between the (aligned) estimates of $X_1=\xi_1$ derived from the embeddings over a range of values of $\rho$; note that in each panel the experiment is repeated $nMC=500$ times.
In the first five panels of the figure, we see the effect of increasing $\rho$ on the difference, namely that the covariance of the difference is monotonically decreasing as $\rho$ increases. 

In the sixth panel of Figure \ref{fig:indlat} (again performing $nMC=500$ Monte Carlo replicates), we consider the conditionally independent case (i.e., $\rho=0$), so that $(B^{(1)},B^{(2)},\bX)\sim \text{JRDPG}(F,300,2)$.
We consider the omnibus spectral embedding of $(B^{(1)},B^{(2)})$, denoted
$\bhx_M=\mathrm{OMNI}(B^{(1)},B^{(2)},2)$, and plot the 
difference in the estimates of $X_1=\xi_1$ derived from the omnibus embedding, namely
$(\bhx_M^{(1)})_1-(\bhx_M^{(2)})_{1}.$
From the figure, we see that the correlation induced between the independent graphs in the embedded space is (roughly) equivalent to the inherent correlation between $\rho=0.75$ correlated networks that have been separately embedded and aligned.  
In the next section, we will formalize this notion of induced versus inherent correlation, and we will see that, as the Figure suggests, OMNI does indeed induce correlation of level $\rho=0.75$ across independent graphs in the embedded space. 

\subsection{Central limit theorems and correlation in the embedded space}
\label{sec:CLTcorr}

Viewing $(B^{(1)},B^{(2)},\bX)$ as
an element of the sequence 
$$\left((B^{(1)}_n,B^{(2)}_n,\bX_n)\right)_{n=1}^\infty,$$
where for each $n\geq 1$, 
$(B^{(1)}_n,B^{(2)}_n,\bX_n)\sim \text{JRDPG}(F,n,2,\rho),$
the Central Limit Theorem established in \cite{athreya2013limit} provides a framework for understanding the effect, in the embedded space,  of the edge-wise correlation across networks.
Letting 
$\widehat \bX_{B^{(1)}_n}=\mathrm{ASE}(B^{(1)}_n,d)$ and 
$\widehat \bX_{B^{(2)}_n}=\mathrm{ASE}(B^{(2)}_n,d)$, Theorem 3.3 of \cite{athreya2013limit} (in the form presented in Theorem 9 of \cite{athreya_survey}) implies that for each $k=1,2$, if $\Delta := \mathbb{E}[X_1 X_1^{T}]$ is rank $d$, then
there exist sequences of orthogonal $d$-by-$d$ matrices
	$( \Tilde W_n^{(k)})_{n=1}^\infty$ such that for all $z \in \R^d$ and for any fixed index $i$,
	$$ \lim_{n \rightarrow \infty}
	\p\left[ n^{1/2} \left( \widehat \bX_{B^{(k)}_n} \Tilde W_n^{(k)} - \bX_n \right)_i
	\le z \right]
	= \int_{\text{supp } F} \Phi\left(z, \Sigma(x) \right) dF(x), $$
	where
	\begin{align}
	\label{def:sigma}
	\Sigma(x) 
	&:= \Delta^{-1} \EX\left[ (x^{T} X_1 - ( x^{T} X_1)^2 ) X_1 X_1^{T} \right] \Delta^{-1};
	\end{align}
$\Phi(\cdot,\Sigma)$ denotes the cdf of a (multivariate)
	Gaussian with mean zero and covariance matrix $\Sigma$,
	and 
	$( \widehat \bX_n \Tilde W_n - \bX_n )_i$ denotes the i-th row of $\widehat \bX_n \Tilde W_n - \bX_n.$ 
Combining the above central limit theorems for $\widehat \bX_{B^{(1)}_n}$ and $\widehat \bX_{B^{(2)}_n}$, we have the following theorem (proven in Section \ref{sec:latent}).
\begin{theorem}
\label{thm:rhoCLT}
Let $\rho\in(0,1)$ be fixed.
Let $F$ be a distribution on a set $\mathcal{X}\subset \R^{d}$, where $\langle x, x'\rangle\in[0,1]$ for all $x,x'\in\mathcal{X}$, and assume that $\Delta := \mathbb{E}[X_1 X_1^{T}]$ is rank $d$.
Let $(B^{(1)}_n,B^{(2)}_n,\bX_n)\sim \mathrm{JRDPG}(F,n,2,\rho),$ be a sequence of adjacency matrices and associated latent positions, where for each $n\geq 1$ the rows of $\bX_n$ are i.i.d. distributed according to $F$.
Letting 
$\bhx_{B^{(1)}_n}=\mathrm{ASE}(B^{(1)}_n,d)$ and 
$\widehat \bX_{B^{(2)}_n}=\mathrm{ASE}(B^{(2)}_n,d)$, 
there exist sequences of orthogonal $d$-by-$d$ matrices
	$( W^{(1)}_n )_{n=1}^\infty$, $( W^{(2)}_n )_{n=1}^\infty$ such that for all $z \in \R^d$ and for any fixed index $i$, 
	\begin{align}
	    \label{eq:corCLT}
	    \lim_{n \rightarrow \infty}
	\p\left[ n^{1/2} \left( \widehat \bX_{B^{(1)}_n} W^{(1)}_n - \widehat \bX_{B^{(2)}_n}W^{(2)}_n \right)_i
	\le z \right]
	= \int_{\text{supp } F} \Phi\left(z, \widetilde\Sigma(x,\rho) \right) dF(x), 
	\end{align}
	where
	$\widetilde\Sigma(x,\rho)=2(1-\rho)\Sigma(x).$
\end{theorem}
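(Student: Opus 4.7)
The plan is to leverage the row-wise linearization of the ASE residual that is at the heart of the proof of the classical RDPG CLT in \cite{athreya2013limit}. That proof does not merely produce a Gaussian limit: it establishes an explicit first-order expansion showing that, for each $k\in\{1,2\}$, there is an orthogonal matrix $W^{(k)}_n$ (the solution of a Procrustes-type alignment) such that
\begin{equation*}
\bhx^{(k)}_n W^{(k)}_n - \bX_n = (B^{(k)}_n - P)\,\bX_n (\bX_n^T\bX_n)^{-1} + R^{(k)}_n,
\end{equation*}
with $\|R^{(k)}_n\|_{2\to\infty}=o_P(n^{-1/2})$. First I would appeal to this expansion for both $k=1,2$. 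Because both graphs share the \emph{same} latent position matrix $\bX_n$ (and thus the same $P=\bX_n\bX_n^T$), subtracting the two expansions produces a clean cancellation of the $P$ term:
\begin{equation*}
\bhx^{(1)}_n W^{(1)}_n - \bhx^{(2)}_n W^{(2)}_n = (B^{(1)}_n - B^{(2)}_n)\,\bX_n(\bX_n^T\bX_n)^{-1} + (R^{(1)}_n - R^{(2)}_n).
\end{equation*}
Reading off the $i$-th row and rescaling by $n^{1/2}$ yields
\begin{equation*}
n^{1/2}\bigl(\bhx^{(1)}_n W^{(1)}_n - \bhx^{(2)}_n W^{(2)}_n\bigr)_i = n^{-1/2}\sum_{j\neq i}(B^{(1)}_{n,ij}-B^{(2)}_{n,ij})X_j^T\,(n^{-1}\bX_n^T\bX_n)^{-1} + o_P(1).
\end{equation*}

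Next I would compute the conditional moments of the main term given $X_i = x$ and $\{X_j\}_{j\neq i}$. By Definition \ref{def:rhoRDPG}, the summands $D_{ij}:=B^{(1)}_{n,ij}-B^{(2)}_{n,ij}$ are conditionally independent across $j$ with zero mean (both Bernoulli marginals have success probability $p_{ij}=x^TX_j$) and conditional variance
\begin{equation*}
\mathrm{Var}(D_{ij}\mid \bX_n) = 2p_{ij}(1-p_{ij}) - 2\rho\, p_{ij}(1-p_{ij}) = 2(1-\rho)\,p_{ij}(1-p_{ij}).
\end{equation*}
Combining this with the strong law of large numbers $n^{-1}\bX_n^T\bX_n\to\Delta$ a.s., the conditional covariance of the main term converges to
\begin{equation*}
2(1-\rho)\,\Delta^{-1}\,\EX\!\left[(x^TX_1)(1-x^TX_1)X_1X_1^T\right]\Delta^{-1} = 2(1-\rho)\,\Sigma(x) = \widetilde\Sigma(x,\rho).
\end{equation*}

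A conditional Lindeberg--Feller CLT applied to the triangular array $\{n^{-1/2}D_{ij}X_j^T\Delta^{-1}\}_{j\neq i}$, conditional on $X_i=x$, then gives convergence in distribution to $\mathcal{N}(0,\widetilde\Sigma(x,\rho))$; the Lindeberg condition is immediate because $|D_{ij}|\leq 1$ and $\|X_j\|$ is bounded on $\mathcal{X}$. Finally, marginalizing $x$ against $F$ via dominated convergence yields the mixture distribution on the right-hand side of (\ref{eq:corCLT}).

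The main obstacle is the first step: verifying that the two residual expansions can be taken with respect to rotations $W^{(1)}_n$, $W^{(2)}_n$ that produce the same $\bX_n$ on the nominal side and leave remainders of uniform $o_P(n^{-1/2})$ order in the $2\to\infty$ norm. This is delicate because the rotations are only implicitly defined and depend on $B^{(k)}_n$, so the remainders $R^{(k)}_n$ depend on the realized graphs. However, the requisite row-wise control is exactly what the $2\to\infty$ machinery developed in \cite{athreya_survey} and used in the proof of Theorem 9 therein provides, and because both expansions are centered at the common $\bX_n$, no alignment between the two rotations is needed beyond the individual Procrustes fits. Once the linearization is in hand, the variance calculation is a short direct computation and the CLT step is routine.
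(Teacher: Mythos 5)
Your proposal is correct and follows essentially the same route as the paper: both invoke the row-wise linearization $\bhx^{(k)}_n W^{(k)}_n - \bX_n \approx (B^{(k)}_n - P)\bX_n(\bX_n^T\bX_n)^{-1}$ from the single-graph ASE CLT (note $\bX_n(\bX_n^T\bX_n)^{-1} = U_PS_P^{-1/2}W$, which is exactly the paper's form), cancel the common $P$ upon subtraction, compute the conditional variance $2(1-\rho)p_{ij}(1-p_{ij})$ of the edge differences, and finish with the SLLN for $n^{-1}\bX_n^T\bX_n\to\Delta$, a CLT, Slutsky, and integration over $F$. The only cosmetic difference is that you condition on $\{X_j\}_{j\neq i}$ and use Lindeberg--Feller, whereas the paper conditions only on $X_i=x_i$ and applies the i.i.d.\ multivariate CLT directly; both are valid.
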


From Theorem \ref{thm:rhoCLT}, we see that the effect of the correlation $\rho$ in the embedding space is to introduce a dampening factor of $(1-\rho)$ into the asymptotic limiting covariance (see Figure \ref{fig:indlat}).
This is entirely reasonable; indeed, consider $Y_1,Y_2$ to be $\rho$ correlated Norm($\mu,\sigma^2)$ random variables, in which case $Y_1-Y_2\sim$Norm($0,2(1-\rho)\sigma^2$). 
In the embedded space, no extraneous correlation is introduced (in the limit) by separately embedding the networks and aligning the embeddings via $(W^{(k)}_n)_{n=1}^\infty$ for $k=1,2$.
Joint embedding procedures like the Omnibus method forgo these Procrustes rotations, but at a price: they induce correlation across even independent networks. 
To understand this, we consider the omnibus central limit theorem of \cite{levin_omni_2017} and derive the following result. Its proof can be obtained from \cite{levin_omni_2017}, but is also an immediate consequence of our more general main result, Theorem \ref{thm:pfdiffgenomni}, which is stated formally in the following section.
\begin{theorem}[Induced correlation in classical OMNI]
\label{thm:omniCLT}
Let $F$ be a distribution on a set $\mathcal{X}\subset \R^{d}$, where $\langle x, x'\rangle\in[0,1]$ for all $x,x'\in\mathcal{X}$, and assume that $\Delta := \mathbb{E}[X_1 X_1^{T}]$ is rank $d$.
Let $(A_n^{(1)},A_n^{(2)},\cdots,A_n^{(m)},\bX_n)\sim\mathrm{JRDPG}(F,n,m)$ be a sequence of independent $\mathrm{RDPG}$ random graphs, and for each $n\geq 1$, let $M_n$ denote the omnibus matrix as in Definition \ref{def:omni}. Also, let $\bhx_{M_n}=\mathrm{ASE}(M_n,d)$ and denote the $s$-th $n \times d$ block of the omnibus embedding $\bhx_{M_n}$ as $\hat{\bX}^{(s)}_{M_n}$.
Consider fixed indices $i\in[n]$ and $s_1,s_2\in[m]$. 
Then there exists a sequence of orthogonal matrices $(\Tilde W_n)_{n=1}^\infty$ such that for all $z \in \R^d$, we have that
\begin{align}
\label{eq:omniclt}
    \lim_{n \rightarrow \infty}
	\p\left[n^{1/2}\left[\left(\bhx_{M_n}^{(s_1)}-\bhx_{M_n}^{(s_2)}\right)\Tilde W_n\right]_{i}\leq z\right]=
    \int_{\text{supp } F}\Phi(z,\widetilde\Sigma(x,3/4))dF(x).
\end{align}
\end{theorem}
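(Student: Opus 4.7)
The plan is to imitate the row-wise perturbation analysis that \cite{athreya2013limit} carried out for the ASE and \cite{levin_omni_2017} adapted to the omnibus setting. The crucial observation is that $\E[M_n \mid \bX_n] = \widetilde P_n := J_m \otimes P_n$ is still rank $d$, with latent-position factorization $\widetilde P_n = \mathbf{Z}_n\mathbf{Z}_n^T$ for $\mathbf{Z}_n = (\mathbf{1}_m \otimes \bX_n) \in \R^{mn \times d}$. First I would establish (via the Davis--Kahan theorem together with a two-to-infinity norm perturbation bound for $U_{M_n} - U_{\widetilde P_n}\tilde W_n$) a row-wise ``one-step'' expansion: for some orthogonal $\tilde W_n$ and uniformly in $h \in [mn]$,
\begin{equation*}
(\bhx_{M_n}\tilde W_n - \mathbf{Z}_n)_h \;=\; \bigl[(M_n - \widetilde P_n)\mathbf{Z}_n(\mathbf{Z}_n^T\mathbf{Z}_n)^{-1}\bigr]_h + R_{n,h},\qquad R_{n,h} = o_P(n^{-1/2}).
\end{equation*}
This parallels precisely the analysis in Section 4 of \cite{levin_omni_2017} and is the workhorse of the proof.

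The second step is a direct block computation of the leading term. Writing $\epsilon_{ij}^{(t)} := A_n^{(t)}(i,j) - P_n(i,j)$ and $\xi_i^{(t)} := \sum_{j=1}^n \epsilon_{ij}^{(t)} X_j$, using that the diagonal blocks of $M_n - \widetilde P_n$ are $A_n^{(s)} - P_n$ and the off-diagonal blocks are $\tfrac12(A_n^{(s)} - P_n) + \tfrac12(A_n^{(t)} - P_n)$, together with $(\mathbf{Z}_n^T\mathbf{Z}_n)^{-1} = (m\bX_n^T\bX_n)^{-1} \approx (mn)^{-1}\Delta^{-1}$, a direct expansion gives for $h = n(s-1) + i$
\begin{equation*}
\bigl[(M_n - \widetilde P_n)\mathbf{Z}_n(\mathbf{Z}_n^T\mathbf{Z}_n)^{-1}\bigr]_{h} \;\approx\; \frac{1}{mn}\Bigl(\tfrac{m}{2}\,\xi_i^{(s)} + \tfrac{1}{2}\sum_{t=1}^{m}\xi_i^{(t)}\Bigr)\Delta^{-1}.
\end{equation*}
The key point is that upon subtracting the rows indexed by $h_1$ and $h_2$ the pooled term $\tfrac12\sum_t \xi_i^{(t)}$ cancels identically, leaving
\begin{equation*}
n^{1/2}\bigl[(\bhx_{M_n}\tilde W_n)_{h_1} - (\bhx_{M_n}\tilde W_n)_{h_2}\bigr] \;=\; \frac{1}{2\sqrt{n}}\bigl[\xi_i^{(s_1)} - \xi_i^{(s_2)}\bigr]\Delta^{-1} + o_P(1).
\end{equation*}

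The last step is a conditional multivariate Lindeberg CLT. Conditional on $\bX_n$ with $X_i = x$, the JRDPG assumption makes $A_n^{(s_1)}$ and $A_n^{(s_2)}$ independent, so $\xi_i^{(s_1)}$ and $\xi_i^{(s_2)}$ are independent with conditional covariance $\sum_j P_n(i,j)(1-P_n(i,j))X_jX_j^T$. A strong law of large numbers gives $n^{-1}\sum_j P_n(i,j)(1-P_n(i,j))X_jX_j^T \to \E[x^T X_1(1 - x^T X_1)X_1X_1^T] = \Delta\Sigma(x)\Delta$, hence $n^{-1/2}[\xi_i^{(s_1)} - \xi_i^{(s_2)}]$ converges in distribution to $\mathcal{N}(0, 2\Delta\Sigma(x)\Delta)$. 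Multiplying on the right by $\tfrac12\Delta^{-1}$ yields limiting covariance $\tfrac14 \Delta^{-1}(2\Delta\Sigma(x)\Delta)\Delta^{-1} = \tfrac12\Sigma(x) = 2(1 - \tfrac34)\Sigma(x) = \widetilde\Sigma(x,3/4)$. Integrating over $x \sim F$ delivers the mixture on the right-hand side of (\ref{eq:omniclt}). The main technical obstacle is the uniform row-wise control of the remainder $R_{n,h}$, which requires a two-to-infinity norm bound on the eigenvector perturbation of the $mn \times mn$ matrix $M_n - \widetilde P_n$; everything else is algebra plus a standard conditional CLT. As foreshadowed, one can also bypass this direct derivation by specializing Theorem \ref{thm:pfdiffgenomni} to the uniform $\tfrac12$ off-diagonal weights of the classical OMNI matrix, so once the generalized-OMNI infrastructure is in place this result is essentially free.
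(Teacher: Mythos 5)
Your proposal is correct, and the covariance bookkeeping checks out: the row-$h$ coefficient vector of the noise terms is $\tfrac{m}{2}\xi_i^{(s)}+\tfrac12\sum_t\xi_i^{(t)}$, the pooled term cancels in the difference, and $\tfrac14\Delta^{-1}(2\Delta\Sigma(x)\Delta)\Delta^{-1}=\tfrac12\Sigma(x)=\widetilde\Sigma(x,3/4)$. The route differs from the paper's in emphasis rather than substance. The paper proves Theorem \ref{thm:omniCLT} as a one-line specialization of Theorem \ref{thm:pfdiffgenomni}: with the classical weights, $\alpha(s,\ell)=1+(m-1)/2$ for $\ell=s$ and $1/2$ otherwise, so Eq.\@ (\ref{eq:corr}) gives $\sum_\ell(\alpha(s_1,\ell)-\alpha(s_2,\ell))^2=2(m/2)^2$ and hence $\rho(s_1,s_2)=3/4$. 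You instead rerun the row-wise perturbation analysis directly for the classical $M_n$; your explicit cancellation of $\tfrac12\sum_t\xi_i^{(t)}$ is exactly the concrete instance of the identity $\sum_\ell(\alpha(s_1,\ell)-\alpha(s_2,\ell))=0$ that drives the paper's general covariance computation in Section \ref{sec:app_pfdiffgenomni}. What your direct derivation buys is transparency about \emph{why} the correlation is $3/4$ (each block-row carries an identical pooled-noise component of relative weight $1/2$ that subtracts out); what the paper's route buys is that the hard analytic work --- the residual control $R_{n,h}=o_P(n^{-1/2})$ via the decomposition into $H_1,\dots,H_5$ and the two-to-infinity eigenvector bounds --- is done once for general $\fM$ and never repeated. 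Your sketch correctly flags that residual control as the only nontrivial technical obstacle and correctly defers it to the \cite{levin_omni_2017}/generalized-omnibus machinery, so there is no gap.
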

\noindent 
Unpacking Theorem \ref{thm:omniCLT}, $(\bhx_{M_n}^{(s_1)})_{i}$ and $(\bhx_{M_n}^{(s_2)})_{i}$ represent the estimates of $X_i$ derived from $A^{(s_1)}$ and $A^{(s_2)}$ by the omnibus embedding paradigm.
In light of Theorem \ref{thm:rhoCLT}, the induced correlation in OMNI can be understood in the context of the limiting covariance of the difference of a pair of estimates for the same underlying latent position.
Comparing Eqs. (\ref{eq:corCLT}) and (\ref{eq:omniclt}), we see that OMNI effectively induces a correlation of level $\rho=3/4$ uniformly across embedded network pairs.

While it is easy to surmise that this flat correlation would have a detrimental signal dampening effect on inference tasks in settings where there is nuanced inherent correlation (e.g., in time-series analysis), this theory is in the setting of conditionally independent networks.  
It is natural to ask whether in the presence of inherent correlation across the network pairs the dampening effect of the induced correlation in OMNI diminished.  
As we will see in our next result (a special case of Theorem \ref{thm:pfdiffgenomni} in Section \ref{sec:indcorr_gen}), this is not the case.
\begin{theorem}[Induced and inherent correlation in classical OMNI]
\label{theorem:ind_lat_classical_omni}
Let $F$ be a distribution on a set $\mathcal{X}\subset \R^{d}$, where $\langle x, x'\rangle\in[0,1]$ for all $x,x'\in\mathcal{X}$, and assume that $\Delta := \mathbb{E}[X_1 X_1^{T}]$ is rank $d$.
Let $$(A_n^{(1)},A_n^{(2)},\cdots,A_n^{(m)},\bX_n)\sim \mathrm{JRDPG}(F,n,m,R)$$ be a sequence of correlated $\mathrm{RDPG}$ random graphs, and for each $n\geq 1$,
let $M_n$ denote the omnibus matrix as in Definition \ref{def:omni}. Also, let $\bhx_{M_n}=\mathrm{ASE}(M_n,d)$ and denote the $s$-th $n \times d$ block of the omnibus embedding $\bhx_{M_n}$ as $\hat{\bX}^{(s)}_{M_n}$. Consider fixed indices $i\in[n]$ and $s_1,s_2\in[m]$.
Then there exists a sequence of orthogonal matrices $(\Tilde W_n)_{n=1}^\infty$ such that for all $z \in \R^d$, we have that
	\begin{align}
	    \label{eq:corCLT_omni}
	    \lim_{n \rightarrow \infty}
\p\left[n^{1/2}\left[\left(\bhx_{M_n}^{(s_1)}-\bhx_{M_n}^{(s_2)}\right)\Tilde W_n\right]_{i}\leq z\right]
	= \int_{\text{supp } F} \Phi\left(z, \widetilde\Sigma(x, \rho(s_1,s_2)) \right) dF(x), 
	\end{align}
	where
	 $\rho(s_1,s_2)=\frac{3}{4}+\frac{1}{4}\rho_{s_1,s_2}$.
	\end{theorem}
\noindent In the $\mathrm{JRDPG}(F,n,m,R)$ setting, the form of the limiting correlation highlights the separate contributions from the \emph{method} (i.e., the induced correlation 3/4) and the \emph{model} (i.e., the inherent correlation $\rho_{s_1,s_2}$).
This correlation is independent of $m$, and the outsized effect of the induced correlation versus the downscaled inherent correlation further suggests that OMNI is not ideal for embedding temporal sequences of networks that exhibit complex dependency patterns (i.e., a change point or anomaly).
Indeed, the induced correlation from OMNI whitens out the existing complex correlation structures amongst the $A^{(k)}$, and, as we will show in our next example, this whitening effect of OMNI can serve to mask significant data features and structures in complex data environs.


\subsection{OMNI's whitening correlation: Aplysia spike train analysis}
\label{sec:aplysia}
We consider 
a time series of networks derived from the Aplysia californica escape motor program of \cite{Aplysia}.
The motor program consists of a 20 min recording of the action potentials generated by 82 neurons in the dorsal pedal ganglion of an isolated brain preparation from the marine mollusk Aplysia californica.  
One minute into the recording a brief electrical stimulus was applied to pedal ganglion nerve 9 to elicit the animal’s rhythmic escape locomotion motor program.  
This consists of an initial rapid bursting, lasting several cycles, that drives the animal’s gallop behavior, followed by a slower rhythm persisting until the end of the recording that drives the animal’s crawling behavior.  
To extract a network time series from the recording, we binned the motor program into 24 bins, each approximately $\approx 50$ second long; the binned motor program (with the second bin, containing the stimulus, highlighted) is pictured in Figure \ref{fig:binned}.
\begin{figure}[t!]
\centering
\includegraphics[width=1\textwidth]{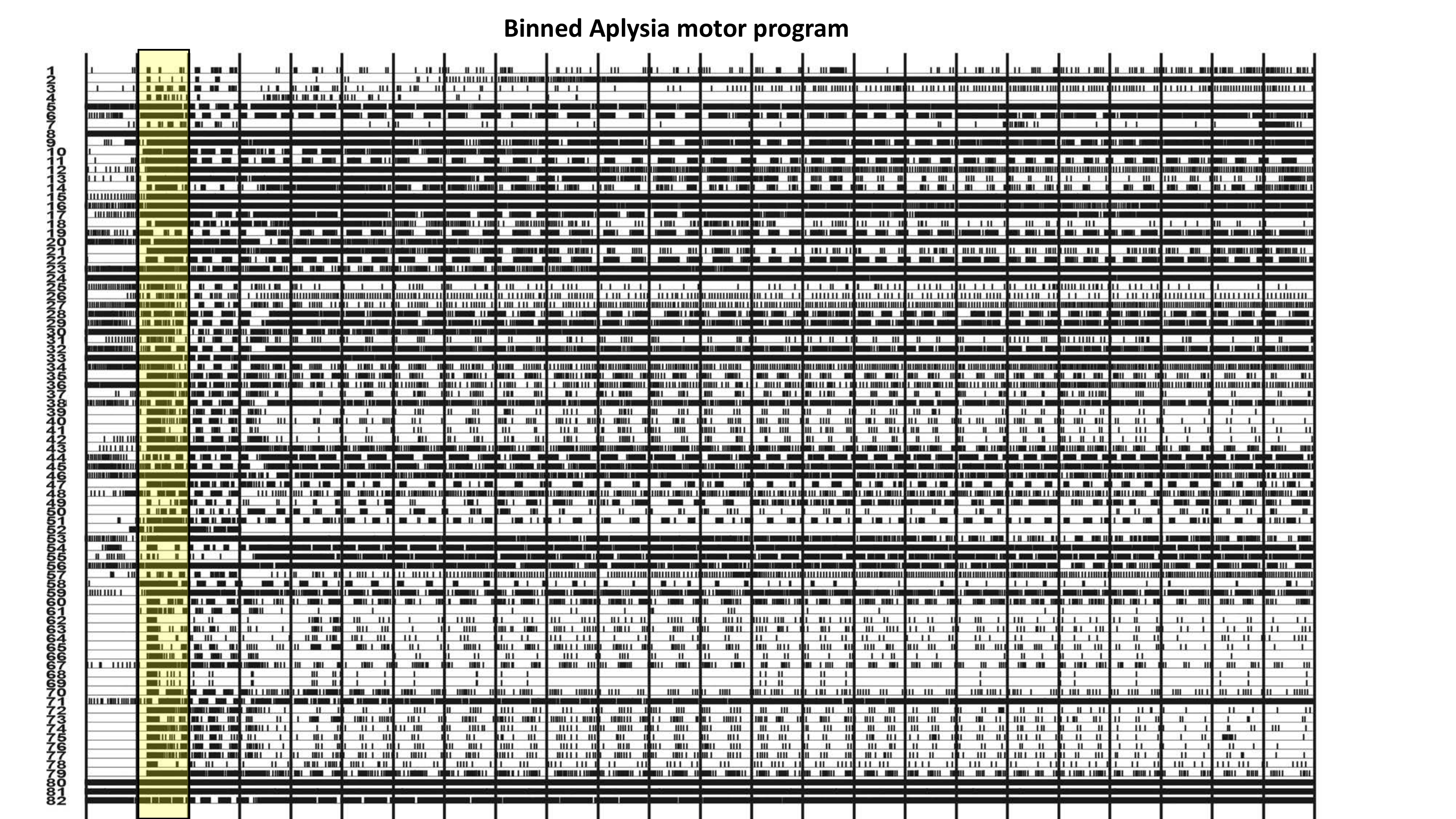}
\caption{The 20 minute Aplysia escape motor program from \cite{Aplysia}, binned into 24 windows, each approximately $50$ seconds in length.  
The stimulus happens one minute into the motor program, in the highlighted second bin.}
  \label{fig:binned}
\end{figure}
Using the \texttt{meaRtools} package in \texttt{R} \cite{gelfman2018meartools}, we apply the STTC (spike time tiling correlation) method of \cite{cutts} to convert each 50 second window into a weighted correlation matrix amongst the 82 neurons.
Each of the 24 bins then yields one weighted graph on 82 vertices representing this correlation structure amongst the 82 neurons.
The stimulus occurs then in the second of the 24 bins, followed by the gallop and crawling motor programs.  

\subsubsection{Classical omnibus embeddings and correlation masking}
\label{sec:masking}

Our motivation in the following analyses is to determine whether, in light of the correlation structure in Theorem \ref{theorem:ind_lat_classical_omni}, the classical OMNI method can detect the distinct phases of behavior in the motor program.
In order to explore the impact of the flat correlation induced in the classical omnibus embedding, we use the classical OMNI spectral embedding to embed the $\{A^{(k)}\}_{k=1}^{24}$ into a common $\mathbb{R}^d$ (where $d=4$ as chosen by locating the elbow in the scree plot as suggested by \cite{zhu06:_autom,chatterjee2015}).
Further isolating the impact of the stimulus and the evolution of the galloping and crawling phases of the motor program, we plot the average distance between the estimated latent positions for each vertex (as the bar heights) in the embedding between embedded graph 1 (res., embedded graph 2) and embedded graph $k$ for each $k\in[m]$ (where the location of the bars in Figure \ref{fig:omniaplysia} correspond to $k=1,2,\cdots,24$).
To wit, we plot (where $\bhx_{M_n}:=[(\bhx_{M_n}^{(1)})^T\,|\, (\bhx_{M_n}^{(2)})^T\,|\, \cdots\,|\, (\bhx_{M_n}^{(m)})^T]^T$)

\begin{align*}
    \frac{1}{82}\sum_{i=1}^{82}
    \left\|\left(\bhx_{M_n}^{(1)}\right)_i-\left(\bhx_{M_n}^{(k)}\right)_i\right\|_2 & \text{ in the left panel in Figure \ref{fig:omniaplysia}};\\
    \frac{1}{82}\sum_{i=1}^{82}
    \left\|\left(\bhx_{M_n}^{(2)}\right)_i-\left(\bhx_{M_n}^{(k)}\right)_i\right\|_2 & \text{ in the center panel in Figure \ref{fig:omniaplysia}}.
\end{align*}
\begin{figure}[t!]
\centering
\includegraphics[width=1\textwidth]{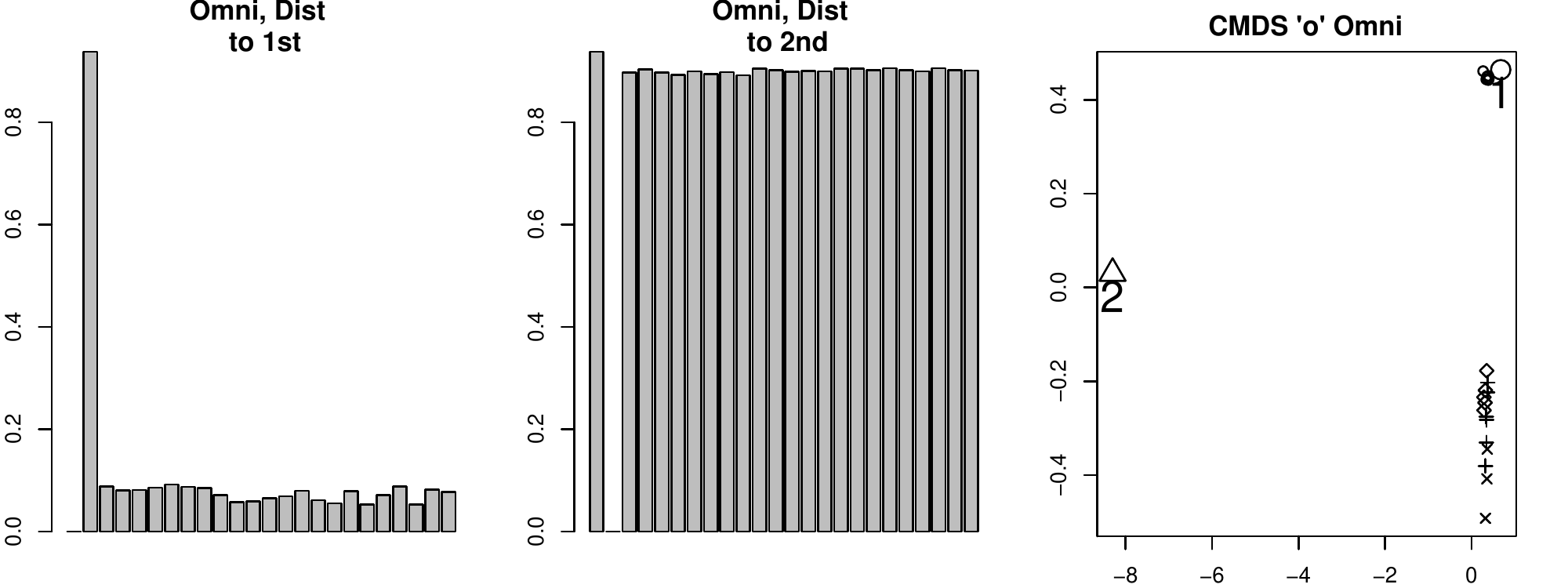}
\caption{
In the left (resp., center) panel, we plot the average vertex distance (as the bar heights) in the embedding between graph 1 (resp., graph 2) and graph $k$ for each $k\in[m]$ in OMNI (where the bars are labeled $k=1,2,\cdots,24$).
In the right panel, we compute the $24\times 24$ distance matrices ${\bf D}$, and embed it into $\mathbb{R}^2$.
The resulting $24$ data points are clustered using \texttt{Mclust} (clusters denoted by shape), and are plotted, with graphs 1 and 2 further labeled with their corresponding number.
}
  \label{fig:omniaplysia}
\end{figure}

From Figure \ref{fig:binned} and our knowledge of the Aplysia motor program's evolution, we see that if the omnibus embedding is able to detect the biologically distinct phases of this network time series, then 
\begin{itemize}
    \item[i.] There will be a significant difference between the embeddings of graphs 1 and 2 (as demonstrated in Figure \ref{fig:omniaplysia}).
    \item[ii.] The effect of the stimulus is less apparent as the Aplysia transitions from galloping to crawling; this would be manifested as the distance between graphs 2 and $k>2$ increasing as the Aplysia transitions from galloping to crawling. 
    \item[iii.] The distance from graph 1 to graphs $k>2$ should be large, as the Aplysia never returns to its spontaneous firing state in the motor program.
\end{itemize}
While we see that the omnibus methodology demonstrates the capacity to detect the anomaly (namely, the stimulus) in the second graph, the flat correlation structure induced in the embedding space has the effect of masking the transition from galloping to crawling and creates an artificial similarity between graphs 1 and some of the graphs $k>2$ in the embedded space (as shown in the right panel of Figure \ref{fig:omniaplysia}).

Exploring this further, we compute the $24\times 24$ distance matrix ${\bf D}$, where
\begin{equation}
\label{eq:d}
    {\bf D}=[D_{k,\ell}]\text{ where }D_{k.\ell}=\|\bhx_{M_n}^{(k)}-\bhx_{M_n}^{(\ell)}\|_F.
    \end{equation}
The matrix ${\bf D}$ is then each embedded into $\mathbb{R}^2$ (where $d=2$ is, again, chosen by the elbow in the scree plot
\begin{figure}[t]
    \centering
    \includegraphics[width=0.7\textwidth]{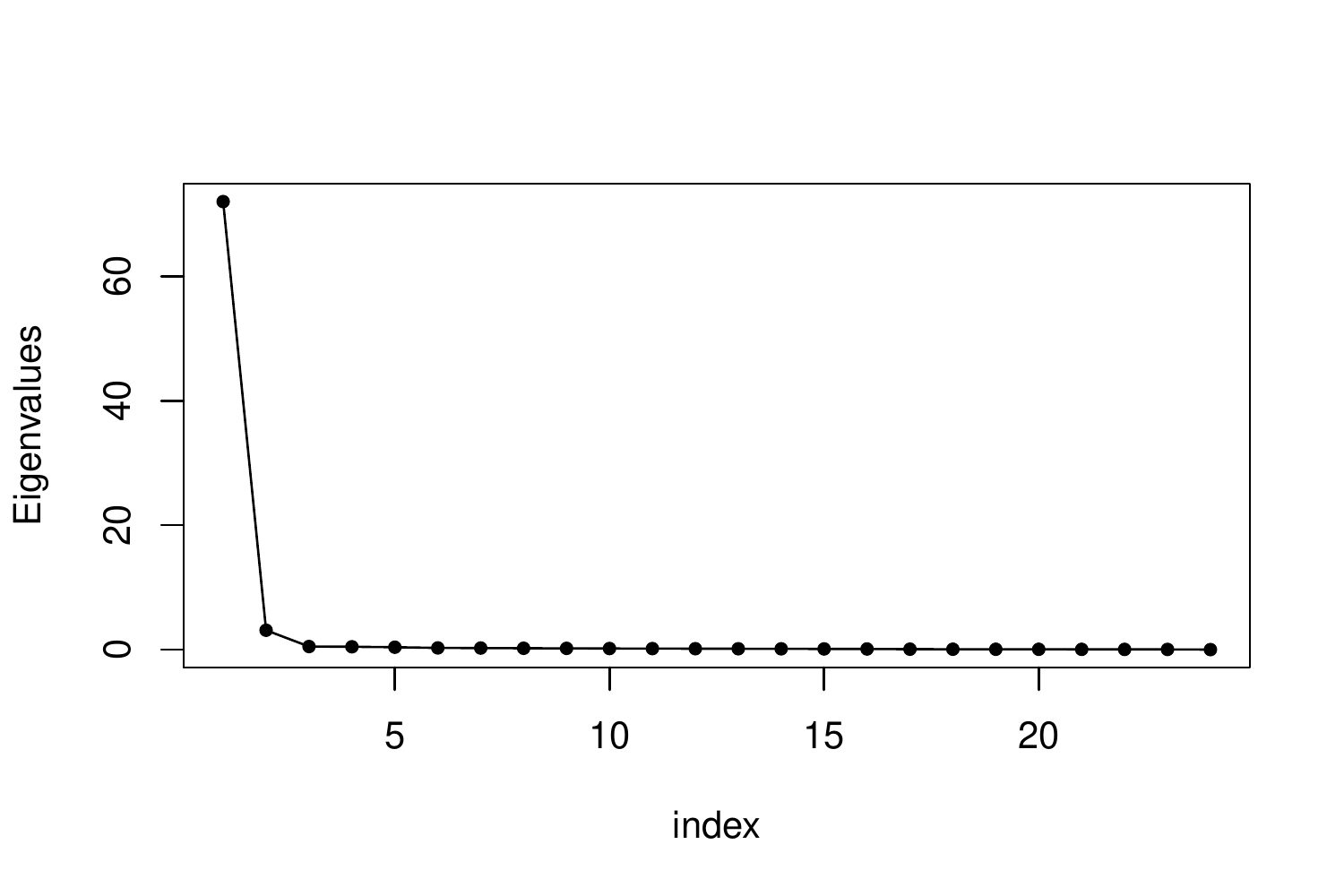}
    \caption{Scree plot of the eigenvalues of $\mathbf{D}$.}
    \label{fig:scree}
\end{figure}
as suggested by \cite{zhu06:_autom,chatterjee2015}; see Figure \ref{fig:scree}), and the $24$ data points are clustered using \texttt{Mclust} \cite{mclust2012}.
The resulting clusters are plotted in the right panel of Figure \ref{fig:omniaplysia}. 
Points 1 and 2 (corresponding to graphs 1 and 2 resp.) are plotted with larger symbols and labeled.
The cluster labels found by \texttt{Mclust} are given in Table \ref{tab:omni_aplysia}.
\begin{table}
\begin{center}
    \begin{tabular}{c|c|c|c|c|c|c|c|c|c|c|c|c}
    Graph& 1 & 2 & 3 & 4  &5&  6 & 7 & 8 & 9 &10 &11 &12\\\hline
   Cluster& 1 & 2 & 5 & 3  &5&  5 & 4 & 5 & 3 &1 &1 &1\\
    \hline\hline
    Graph& 13 &14 &15& 16& 17 &18& 19& 20 &21& 22 &23& 24\\\hline
   Cluster& 1 &5 &3& 1& 1 &4& 1& 3 &3& 1 &4& 3
    \end{tabular}
\end{center}
    \caption{\label{tab:omni_aplysia}
This table displays the cluster labels for the 24 networks when embedding the time-series using classical OMNI, then embedding the across-graph distance matrix using classical MDS scaling, and finally clustering the graphs using \texttt{Mclust}.
    In the MDS embedding, each graph is represented by a single $2-$dimensional embedded data point, and the pair (Graph, Cluster) describes which cluster these points are assigned by \texttt{Mclust}.}
\end{table}
This clustering reinforces the finding that while the stimulus is detected (graph 2 is clustered apart from the others), the transition from stimulus to gallop and crawl and the distinct nature of graph 1 (as the only spontaneous firing state measurement) are masked in this analysis. We shall see in Section \ref{sec:SPIKE} that different structures in the Omnibus matrix can be introduced to ameliorate these shortcomings.


\section{Generalized omnibus embeddings}
\label{sec:genomni}

All joint embedding procedures induce correlation across the embedded networks, and the omnibus embedding is no exception. Unfortunately, the particular structure of the correlation induced by classical OMNI (namely, the large, flat induced correlation across all graphs) may render it less effective for time-series applications or settings where the correlation varies dramatically across networks.
This motivates our next contribution, in which we show that by generalizing the structure of the omnibus matrix $M$, more nuanced (and  application0-appropriate) induced correlation structures are possible.

The consistency and asymptotic normality of the classical omnibus embedding rest on a few defining model assumptions: first,
$$\Ex[M]=\TP:=
J_m\otimes P,$$
and second, in the $k$-th block-row of $M$, the weight put on $A^{(k)}$ (which is equal to $1+(m-1)/2$) is strictly greater than the weight put on any $A^{(\ell)}$ for $\ell\neq k$ (as each of these is $1/2$).
The low-rank RDPG structure of $\Ex[M]=\TP$ allows us to use random matrix concentration results to prove that the scaled eigenvectors of $M$ are tightly concentrated about the scaled eigenvectors of $\TP$ which, up to a possible rotation, are equal to $\bZ:=\vec{1}_m\otimes\bX$.
The weights in block-row $k$ being maximized for $A^{(k)}$ ensures that the $k$-th block in ASE$(M,d)$ corresponds to the embedding of $A^{(k)}$, which is essential for subsequent inference (e.g., hypothesis testing \cite{draves2020bias}) in the omnibus setting.

It is natural to ask if, subject to the above conditions, we can generalize the omnibus structure to permit more esoteric induced correlation in the embedded space.
This motivates the following definition of the generalized omnibus matrix.
Let $F$ be a distribution on a set $\mathcal{X}\subset \R^{d}$, where $\langle x, x'\rangle\in[0,1]$ for all $x,x'\in\mathcal{X}$.
Suppose that $(A^{(1)},A^{(2)},\cdots,A^{(m)},\bX)\sim$ JRDPG$(F,n,m,R)$. The convex hull of $A^{(1)},A^{(2)},\cdots,A^{(m)}$ 
is denoted via $$\mathcal{A}_m=\Big\{\sum\limits_{q=1}^{m}c_qA^{(q)}\in\R^{n\times n}\Big| \text{ with }c_q \geq 0\text{ for all }q\in[m]\text{ and } \sum\limits_{q=1}^{m}c_q=1\Big\}.$$

\begin{definition}[Generalized Omnibus Matrix]
\label{def:genOMNI}
Consider $\mathfrak{M}\in\R^{mn\times mn}$, a generalized version of the omnibus $M$ of Definition \ref{def:omni} defined as follows.
$\mathfrak{M}$ is a block matrix satisfying the following assumptions:
\begin{enumerate}
    \item Each block entry $\mathfrak{M}^{(k,\ell)}\in\R^{n\times n}$, $k,\ell\in[m]$ is an element of  $\mathcal{A}_m$; 
    we will write $$\mathfrak{M}^{(k,\ell)}=\sum\limits_{q=1}^{m}c^{(k,\ell)}_q A^{(q)}.$$
    \item For each block row $k\in[m]$ of $\mathfrak{M}$, the cumulative weight of $A^{(k)}$ is greater than the cumulative weight of the rest of the $A^{(q)}$, $q\neq k$; i.e., for $q\neq k$
    $$\sum_{\ell} c^{(k,\ell)}_q < \sum_{\ell} c^{(k,\ell)}_k.$$
    \item $\mathfrak{M}$ is symmetric.
\end{enumerate}
Such a block matrix satisfying assumptions 1-3 above will be referred to as a \emph{Generalized Omnibus Matrix}. 
\end{definition}

\noindent If $\fM$ is a Generalized Omnibus Matrix, 
setting $\alpha(k,q)=\sum_{\ell} c_{q}^{(k,\ell)}$ to be the weight put on $A^{(q)}$ in the $k$-th block-row of $\fM$, assumption 2 above becomes, for each $k\in[m]$, 
\begin{align}
    \label{eq:rowiofM}
\alpha(k,q)<\alpha(k,k)\hspace{.5cm}\text{ for all $q\neq k$}.
\end{align}
Because $\fM$ (i.e., $\Ex(\fM)=\TP$) is unbiased and low-rank, we can appropriately modify matrix concentration and perturbation results for the scaled eigenvectors of $\fM$, and we can establish that the rows of ASE$(\fM,d)$ consistently estimate the associated rows of $\bZ$ and the associated residual errors satisfy a distributional central limit theorem. This is the content of the following theorems, whose proofs can be found in Appendix. 
\begin{theorem}[Consistency of generalized omnibus embedding estimates]
\label{theorem:genOMNI_consistency}
Let $F$ be a distribution on a set $\mathcal{X}\subset \R^{d}$, where $\langle x, x'\rangle\in[0,1]$ for all $x,x'\in\mathcal{X}$, and assume that $\Delta := \mathbb{E}[X_1 X_1^{T}]$ is rank $d$.
Let $$(A_n^{(1)},A_n^{(2)},\cdots,A_n^{(m)},\bX_n)\sim \mathrm{JRDPG}(F,n,m,R)$$ be a sequence of correlated $\mathrm{RDPG}$ random graphs.
For each $n\geq 1$, let $\fM_n$ denote the generalized omnibus matrix as in Definition \ref{def:genOMNI}.
Let the spectral decomposition of $\TP_n=\mathbb{E}(\fM_n)$ be given by
$$\widetilde{P}_n=\Ex(\fM_n)=U_{\TP_n}S_{\TP_n}U_{\TP_n}^T,$$ 
where $U_{\TP_n}\in\mathbb{R}^{mn\times d}$ and $
S_{\TP_n}\in\mathbb{R}^{d\times d}$. 
Let $\bhx_{\fM_n}=\mathrm{ASE}(\fM_n,d)=U_{\fM_n}S_{\fM_n}^{1/2}$ denote the adjacency spectral embedding of $\fM_n$. 
Then
there exists an orthogonal matrix $V_n\in\mathcal{O}_d$  and a constant $C>0$ such that, with high probability for $n$ sufficiently large,
\begin{align*}
    \|U_{\mathfrak{M}_n}S_{\mathfrak{M}_n}^{1/2}-U_{\TP_n}S_{\TP_n}^{1/2}V_n\|_{2\rightarrow \infty}\leq C\frac{m^{1/2}\log mn}{n^{1/2}}.
\end{align*} 
\end{theorem}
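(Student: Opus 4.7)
The plan is to combine matrix concentration of $\fM_n$ around $\TP_n=J_m\otimes P$ with a two-to-infinity eigenvector perturbation argument, in the spirit of the framework of Cape, Tang, and Priebe and of the original omnibus analysis in \cite{levin_omni_2017}. First I would pin down the spectral structure of the expected matrix: since $J_m$ has a single nonzero eigenvalue equal to $m$, the Kronecker structure forces the nonzero eigenvalues of $\TP_n$ to be $\{m\lambda_k(P)\}_{k=1}^d$, and the rank-$d$ assumption on $\Delta$ combined with standard concentration of $n^{-1}\bX_n^T\bX_n$ to $\Delta$ implies these are $\Theta(mn)$ with high probability. Correspondingly, $U_{\TP_n}=\tfrac{1}{\sqrt{m}}(\vec{1}_m\otimes U_P)$, so its rows have uniform norm of order $\sqrt{d/(mn)}$, a fact I would exploit repeatedly.

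Next I would bound $\|\fM_n-\TP_n\|$. Each entry is a weighted sum $\sum_\ell c_\ell^{(i,j)}(A^{(\ell)}_{ab}-P_{ab})$ of (across $\ell$) independent centered bounded random variables with convex-combination weights. Writing $\fM_n-\TP_n=\sum_{\ell=1}^m G_\ell$, where $G_\ell$ is the block matrix whose $(i,j)$-block equals $c_\ell^{(i,j)}(A^{(\ell)}-P)$, reveals that $G_\ell=C_\ell\otimes(A^{(\ell)}-P)$ with $C_\ell=[c_\ell^{(i,j)}]\in[0,1]^{m\times m}$. Multiplicativity of the spectral norm under Kronecker products then gives $\|G_\ell\|\le\|C_\ell\|\cdot\|A^{(\ell)}-P\|$, and combining the Oliveira/Lu--Peng bound $\|A^{(\ell)}-P\|=\tilde O(\sqrt{n})$ with a union bound over $\ell$ yields a high-probability control of $\|\fM_n-\TP_n\|$ that, against the eigengap $\Theta(mn)$, produces a meaningful relative error.

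Finally, I would derive the row-wise estimate via the Neumann-type decomposition
\begin{align*}
U_{\fM_n}S_{\fM_n}^{1/2}-U_{\TP_n}S_{\TP_n}^{1/2}V_n=(\fM_n-\TP_n)\,U_{\TP_n}S_{\TP_n}^{-1/2}V_n+R_n,
\end{align*}
where $V_n\in\mathcal{O}_d$ is the Davis--Kahan orthogonal alignment and $R_n$ collects higher-order residuals quadratic in $\fM_n-\TP_n$. Using the explicit form $U_{\TP_n}=\tfrac{1}{\sqrt{m}}(\vec{1}_m\otimes U_P)$, each row of the leading term simplifies to
\begin{align*}
e_h^T(\fM_n-\TP_n)U_{\TP_n}=\tfrac{1}{\sqrt{m}}\sum_{\ell=1}^m\alpha(i(h),\ell)\,e_{a(h)}^T(A^{(\ell)}-P)\,U_P,
\end{align*}
with $\alpha(i,\ell)=\sum_j c_\ell^{(i,j)}$; a Bernstein inequality across the conditionally independent $A^{(\ell)}$ (and across rows within each $A^{(\ell)}$), followed by multiplication by $\|S_{\TP_n}^{-1/2}\|=O(1/\sqrt{mn})$, yields a row-wise rate of the right order. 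The residual $R_n$ is handled by spectral perturbation inequalities (Davis--Kahan and Weyl applied to the $\Theta(mn)$ eigengap) and transferred to $\|\cdot\|_{2\to\infty}$ through the uniform row-norm bound on $U_{\TP_n}$. The main obstacle will be tracking the $m$-dependence consistently through all three ingredients---the coefficients $c_\ell^{(i,j)}$ enter both the Bernstein variance proxy (via $\sum_\ell\alpha(i,\ell)^2\le m^2$) and the operator-norm bound on $\fM_n-\TP_n$, while the cross-block correlations arising from the same $A^{(\ell)}$ appearing in multiple blocks must be carefully absorbed---to arrive at the stated $m^{3/2}\log^{1/2}(mn)/n^{1/2}$ rate rather than a looser polynomial in $m$.
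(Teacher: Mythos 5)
Your proposal follows the same overall architecture as the paper's proof---spectral structure of $\TP_n$ and the uniform row-norm bound on $U_{\TP_n}$, concentration of $\|\fM_n-\TP_n\|$, a first-order expansion whose leading term $(\fM_n-\TP_n)U_{\TP_n}S_{\TP_n}^{-1/2}V_n$ is controlled row-wise by Hoeffding/Bernstein, and residuals controlled via Weyl and Davis--Kahan---but it differs in one genuine ingredient and under-specifies another. For the concentration step the paper applies matrix Bernstein directly to $\fM_n-\TP_n=\sum_{\ell}\sum_{i<j}(A^{(\ell)}_{ij}-P_{ij})\,C^{(\ell)}\otimes e^{ij}$, a sum of $m\binom{n}{2}$ independent bounded terms, obtaining $Cm^{3/2}\sqrt{n\log mn}$ in one shot. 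Your route via $\fM_n-\TP_n=\sum_\ell C_\ell\otimes(A^{(\ell)}-P)$ and per-graph concentration also works and recovers the same order in $m$, but only once you supply the estimate $\sum_\ell\|C_\ell\|\leq\sqrt{m}\,\bigl(\sum_\ell\|C_\ell\|_F^2\bigr)^{1/2}\leq\sqrt{m}\cdot m=m^{3/2}$, which follows from Cauchy--Schwarz and $\sum_\ell (c_\ell^{(i,j)})^2\leq\bigl(\sum_\ell c_\ell^{(i,j)}\bigr)^2=1$; the naive bound $\|C_\ell\|\leq\|C_\ell\|_\infty\leq m$ loses a factor of $\sqrt{m}$, so this is precisely the $m$-tracking obstacle you flagged and it must be closed explicitly. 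Second, your description of the residual $R_n$ as quadratic and ``transferred to $\|\cdot\|_{2\rightarrow \infty}$ through the row-norm bound on $U_{\TP_n}$'' misses the one term that actually dictates the stated rate: the cross term $(I-U_{\TP}U_{\TP}^T)(\fM-\TP)(U_{\fM}-U_{\TP}V)S_{\fM}^{-1/2}$ has no left factor of $U_{\TP}$, so its rows cannot be localized and its two-to-infinity norm must be bounded by the product of spectral norms $\|\fM-\TP\|\cdot\|U_{\fM}-U_{\TP}V\|\cdot\|S_{\fM}^{-1/2}\|$, with the middle factor of order $\sqrt{m\log mn/n}$ coming from Davis--Kahan. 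This term is of order $m^{3/2}$ times logarithmic and polynomial factors in $n$, hence dominates the $m^{1/2}$-order leading term and is what produces the $m^{3/2}$ in the theorem. For consistency this crude product bound suffices, so your plan goes through once these two points are made explicit.
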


\noindent 
The above theorem is analogous to Lemma 1 in \cite{levin_omni_2017}. Interestingly, the consistency error rate for the generalized omnibus embedding estimates with pair-wise correlated adjacency matrices coincides with the consistency error rate for the classical omnibus embedding estimates with independent adjacency matrices. This is a consequence of two facts: first, the sum of the cumulative weights $\sum_{q=1}^{m}\alpha(k,q)$ is equal to $m$, and second, the extra correlation term in the matrix Bernstein concentration bound is of the same order as in the independent term.

Building upon this consistency, our principal result describes the limiting distributional behavior of low-dimensional embeddings of the generalized omnibus matrix for joint networks, stated below.
\begin{theorem}[Asymptotic normality of rows of the generalized omnibus embedding]
\label{thm:genOMNI} Let $F$ be a distribution on a set $\mathcal{X}\subset \R^{d}$, where $\langle x, x'\rangle\in[0,1]$ for all $x,x'\in\mathcal{X}$, and assume that $\Delta := \mathbb{E}[X_1 X_1^{T}]$ is rank $d$.
Let $(A_n^{(1)},A_n^{(2)},\cdots,A_n^{(m)},\bX_n)\sim\mathrm{JRDPG}(F,n,m,R)$ be a sequence of correlated $\mathrm{RDPG}$ random graphs, and let $\bZ_n=[\bX_n^T|\bX_n^T|\cdots|\bX_n^T]^T\in\mathbb{R}^{mn\times d}$. Further, for each $n\geq 1$, let $\fM_n$ denote the generalized omnibus matrix as in Definition \ref{def:genOMNI} and $\bhx_{\fM_n}=\mathrm{ASE}(\fM_n,d)$. For fixed indices $i\in[n]$ and $s\in[m]$, let $\Big(\hat{\bX}^{(s)}_{\fM_n}\Big)_i$ denote the $i$-th row of the $s$-th $n \times d$ block of $\bhx_{\fM_n}$ (i.e., the estimated latent position from graph $A_n^{(s)}$ of the hidden latent position $X_i$).
 Then, there exists a sequence of orthogonal matrices $\{\mathcal{Q}_n\}_n$ such that
\begin{align*}
  \lim_{n\rightarrow\infty}\PX\Big[
    n^{1/2}\Big(\bhx^{(s)}_{\fM_n}\mathcal{Q}_n-\bZ_n\Big)_i\leq z\Big]=\int_{\text{suppF}}\Phi(z,\Check\Sigma_{\rho}(x;s))dF(x),
\end{align*}
where 
\begin{align*}
        \Check\Sigma_{\rho}(x;s)&=\frac{1}{m^2}\Big(\underbrace{\sum_{q=1}^{m}\alpha^2(s,q)}_{\text{method-induced coefficient
        }}+\underbrace{2\sum_{q<l}\alpha(s,q)\alpha(s,l)\rho_{q,l}}_{\text{model-inherent coefficient
        }}\Big)\underbrace{\Delta^{-1}\EX[x^TX_j(1-x^TX_j)X_jX_j^T]\Delta^{-1}}_{:=\Sigma(x)}
\end{align*}
\end{theorem}


\noindent
The covariance matrix $ \Check\Sigma_{\rho}(x_i;s)$ corresponds to the residual of the (true) latent position of vertex $i$ and its estimate from graph $A^{(s)}$.
We note that this covariance matrix can be written as a sum of two terms, for which the first term $$\Sigma_{0}(x_i;s):=\bigg(\frac{1}{m^2}\sum_{q=1}^{m}\alpha^2(s,q)\bigg)\Sigma(x_i)$$
corresponds to the covariance matrix for the $i$-th row from the $s$-th block of the generalized omnibus embedding under the JRDPG$(F,n,m)$ model as in Definition \ref{def:JRDPG}, and the second term is accredited to the presence of the inherent correlation from the JRDPG$(F,n,m,R)$ model. 
By noting that $\rho_{k,\ell}\geq 0$ and $\alpha(k,\ell)\geq 0$ for all $k,\ell\in[m]$, we deduce that $ \Check\Sigma_{\rho}(x_i;s)\geq \Sigma_{0}(x_i;s)$ entry-wise for all $i$, implying tighter residual errors when the estimates arise from independent graphs.
This observation motivates our simulations in Section \ref{sec:effective-sample-size} where we explore the effect of the inherent correlation on effective sample size in the context of community detection in SBM's.

\begin{remark}
Note that for fixed $n$, the orthogonal matrices $V_n$, $\mathcal{Q}_n$ in Theorems \ref{theorem:genOMNI_consistency} and \ref{thm:genOMNI} accordingly, can be explicitly defined. The matrix $V_n$ is the solution of the Procrustes problem
$$\min_{V\in\mathcal{O}_d}\|U_{\fM_n}-U_{\TP_n}V\|,$$ and its solution is given by $V_n:=V_{1,n}V_{2,n}^T$ where the columns of the matrices $V_{1,n},V_{2,n}$ are the left- and right-singular vectors of $U_{\TP_n}^TU_{\fM_n}$ respectively. Moreover,  $\mathcal{Q}_n:=V_n^TW_n$, where $W_n$ is an orthogonal transformation such that $\bZ_n= \Hat{\bZ}_nW_n=U_{\TP_n}S_{\TP_n}^{1/2}W_n$.
\end{remark}

In the following example, we apply Theorem \ref{thm:genOMNI} with the (classical) omnibus matrix from \cite{levin_omni_2017}.
\begin{example}{\bf (Classical omnibus matrix)}\\
\normalfont
\label{ex:classicalomni}
Suppose first that $(A^{(1)},A^{(2)},\cdots,A^{(m)},\bX)\sim\mathrm{JRDPG}(F,n,m,R)$. 
Let $e_k\in\R^{m}$ be the vector with the $k$-th entry equal to $1$ and the rest equal to $0$, and let $e_{k\ell}=e_ke_{\ell}^T+e_{\ell}e_k^T\in\R^{m\times m}$. Then, the omnibus matrix $M\in\R^{mn\times mn}$ is defined as 
\begin{equation*}
    M = \frac{1}{2}\sum_{k\leq \ell}e_{k\ell}\otimes\Big(A^{(k)}+A^{(\ell)}\Big)
\end{equation*}
As mentioned previously, $M$ is a special case of the generalized omnibus matrix $\fM$: it satisfies all the assumptions of Definition \ref{def:genOMNI}. For $q\in [m],$ the coefficient matrix $C^{(q)}$ is given by
$$
c_q^{(k,\ell)}=\begin{cases}
1&\text{ if }k=\ell=q\\
1/2&\text{ if }k\neq \ell,\text{ and }q\in\{k,\ell\}\\
0&\text{ else},
\end{cases}
$$
and so
$$
\alpha(s,q)=\begin{cases}
1+(m-1)/2&\text{ if }s=q\\
1/2&\text{ else}.
\end{cases}
$$
Fix $s \in [m]$. Then, the resulting covariance matrix is 
\begin{align*}
   \Check\Sigma_{\rho}(x;s)=\Big(\underbrace{\frac{m+3}{4m}}_{\text{method-induced coefficient}}+\underbrace{\frac{m+1}{2m^2}\sum_{q\neq s}\rho_{q,s}+\frac{1}{2m^2}\sum_{\substack{q<l \\ q,l\neq s}}\rho_{q,l}}_{\text{model-inherent coefficient}}\Big)\Sigma(x).
\end{align*}
As expected, the first term coincides with the covariance structure in classical OMNI \cite[Theorem 1]{levin_omni_2017} and further, the second term accounts for the pair-wise correlation of the adjacency matrices.
\end{example}
\noindent

Moving beyond the classical case, the following examples of generalized omnibus structures highlight the impact the structure of $\fM$ has on the limiting covariance (and hence on the limiting correlation; see Theorem \ref{thm:pfdiffgenomni}).
\noindent
\begin{example}[Total Average Omnibus]
\label{ex:total_ave}
\emph{In the total average case, letting $$\Bar{A}=\frac{A^{(1)}+A^{(2)}+\cdots+A^{(m)}}{m},$$ we define the generalized omnibus matrix $\fM_{\Bar{A}}$ via 
\[\fM_{\Bar{A}} =\begin{bmatrix}A^{(1)}&\Bar{A}&\cdots&\Bar{A}\\\Bar{A}&A^{(2)}&\cdots&\Bar{A}\\\vdots&\vdots&\ddots&\vdots\\\Bar{A}&\Bar{A}&\cdots&A^{(m)}\end{bmatrix}.\]
In this example, 
$$
c_q^{(k,\ell)}=\begin{cases}
1&\text{ if }k=\ell=q;\\
0&\text{ if }k=\ell\neq q;\\
1/m&\text{ else},
\end{cases}
$$
and so
$$
\alpha(s,q)=\begin{cases}
1+(m-1)/m&\text{ if }s=q\\
(m-1)/m&\text{ else}.
\end{cases}
$$
The associated coefficient of $\Sigma(x)$ in $\Sigma_{0}(x;s)$ (i.e., method coefficient) is then
\begin{align*}
\frac{1}{m^2}\left(\sum_{q=1}^{m}\alpha^2(s,q)\right)&=\frac{1}{m^2}\left((m-1)^3/m^2+(1+(m-1)/m)^2 \right)\\
&=\frac{m^2+m-1}{m^3},
\end{align*}
 and the model coefficient is 
 \begin{align*}
     \frac{2}{m^2}\sum_{q<l}\alpha(s,q)\alpha(s,l)\rho_{q,l}&= \frac{2}{m^2}\bigg(\frac{(m-1)(2m-1)}{m^2}\sum_{q\neq s}\rho_{q,s}+\frac{(m-1)^2}{m^2}\sum_{\substack{q<l \\ q,l\neq s}}\rho_{q,l}\bigg). \end{align*}
When $R\equiv 0_m$ (i.e., in the uncorrelated case), in the classical OMNI setting of Example 1 for large $m$ the limiting covariance $\Sigma_{0}(x;s)$ is approximately $\Sigma(x)/4$, and is not degenerate; 
in the total average omnibus setting for large $m$, the limiting covariance is approximately $\Sigma(x)/m\approx 0$.
This is sensible heuristically, as in that setting, we are effectively embedding $J_m\otimes \bar A\approx \TP$, and the correct scaling of the residuals would ideally be $\sqrt{nm}$ rather than $\sqrt{n}$. When $R= \Theta(1)$ and $m$ large, the model coefficient dominates the method coefficient, as even when $m$ is large, the model coefficient is not (approximately) degenerate.}

\end{example}


\begin{example}[Weighted Pairwise Average Omnibus]
\label{ex:wgtave}
\emph{In the classical omnibus setting, we have that 
$$\fM^{(k,\ell)}=\frac{A^{(k)}+A^{(\ell)}}{2},$$
and all matrices are effectively weighted equally in the omnibus matrix. 
This is sensible if all $A^{(k)}$ are i.i.d., but is, perhaps, less ideal in the setting where the networks are noisily observed with the level of noise varying in $k\in[m]$.
In that setting, 
the \textit{Weighted} Pairwise Average Omnibus matrix defined via 
\[\fM_W=\begin{bmatrix}A^{(1)}&\frac{w_1A^{(1)}+w_2A^{(2)}}{w_1+w_2}&\frac{w_1A^{(1)}+w_3A^{(3)}}{w_1+w_3}&\cdots&\frac{w_1A^{(1)}+w_mA^{(m)}}{w_1+w_m}\\\frac{w_2A^{(2)}+w_1A^{(1)}}{w_2+w_1}&A^{(2)}&\frac{w_2A^{(2)}+w_3A^{(3)}}{w_2+w_3}&\cdots&\frac{w_2A^{(2)}+w_mA^{(m)}}{w_2+w_m}\\\frac{w_3A^{(3)}+w_1A^{(1)}}{w_3+w_1}&\frac{w_3A^{(3)}+w_2A^{(2)}}{w_3+w_2}&A^{(3)}&\cdots&\frac{w_3A^{(3)}+w_mA^{(m)}}{w_3+w_m}\\\vdots&\vdots&&\ddots&\vdots\\\frac{w_mA^{(m)}+w_1A^{(1)}}{w_m+w_1}&\frac{w_mA^{(m)}+w_2A^{(2)}}{w_m+w_2}&\frac{w_mA^{(m)}+w_3A^{(3)}}{w_m+w_3}&\cdots&A^{(m)}\end{bmatrix}\] may be more appropriate.
In this case, the block entries of the omnibus matrix are defined via 
$$
\fM_W^{(k,\ell)}=\frac{w_kA^{(k)}+w_{\ell}A^{(\ell)}}{w_k+w_{\ell}}
$$ with weights $w_k>0$ $k,\ell\in[m]$.
In this example, 
$$
c_q^{(k,\ell)}=\begin{cases}
1&\text{ if }k=\ell=q\\
\frac{w_q}{w_q+w_{\ell}}&\text{ if }q=k\neq \ell\\
\frac{w_q}{w_q+w_k}&\text{ if }q=\ell\neq k\\
0&\text{ else},
\end{cases}
$$
and so
$$
\alpha(s,q)=\begin{cases}
1+\sum_{r\neq s} \frac{w_s}{w_s+w_{r}}&\text{ if }s=q\\
\frac{w_q}{w_q+w_s}&\text{ else}.
\end{cases}
$$
As (if each $w_k>0$) $\alpha(s,q)<1$ for $q\neq s$, 
we immediately have that $\alpha(s,s)>\alpha(s,q)$ for $q\neq s$.
From this, it follows that in the weighted pairwise average omnibus setting, the covariance matrix is given by
\begin{align*}
    \Check{\Sigma}_{\rho}(x;s) &= \frac{1}{m^2}\Bigg(\underbrace{\sum_{q\neq s}\frac{w_q^2}{(w_q+w_s)^2}+\Big(1+\sum_{r\neq s}\frac{w_s}{w_s+w_r}\Big)^2}_{\text{method-induced coefficient
        }}\\&+\underbrace{2\sum_{q\neq s}\Big(1+\sum_{r\neq s}\frac{w_s}{w_s+w_r}\Big)\frac{w_q}{w_q+w_s}\rho_{q,s}+2\sum_{\substack{q<l \\ q,l\neq s}}\frac{w_q w_{l}}{(w_q+w_s)(w_{l}+w_s)}\rho_{q,l}}_{\text{model-inherent coefficient
        }}\Bigg)\Sigma(x)
\end{align*}
While the method coefficient of $\Sigma_{0}(x;s)$ in $ \Check\Sigma_{\rho}(x;s)$ is not easily expressed in general, specific examples can nonetheless be instructive.
Consider the setting where $w_1=w$ and $w_k=1$ for all $k\neq 1$.
Considering $s=1$ in Theorem \ref{thm:genOMNI} provides that the method coefficient of $ \Check\Sigma_{\rho}(x;s)$ is 
\begin{align*}
\frac{1}{m^2}\left(\sum_{q=1}^{m}\alpha^2(s,q)\right)&=\frac{1}{m^2}\left( (m-1)\frac{1}{(1+w)^2}+\left(1+\frac{(m-1)w}{1+w}\right)^2 \right)\\
&=\frac{m-1+(1+mw)^2}{m^2(1+w)^2}.
\end{align*}
Similarly, the model coefficient is
\begin{align*}
        \frac{2}{m^2}\sum_{q<l}\alpha(s,q)\alpha(s,l)\rho_{q,l}&= \frac{2}{m^2}\bigg(\frac{1+mw}{(w+1)^2}\sum_{q\neq s}\rho_{q,s}+\frac{1}{(1+w)^2}\sum_{\substack{q<l \\ q,l\neq s}}\rho_{q,l}\bigg) 
\end{align*}
If $w\gg m$ is large, then the method coefficient is approximately equal to $1$ and the model coefficient is approximately equal to $0$, which is the limiting covariance achieved by embedding $A^{(1)}$ separately. 
This will be further explained in the context of limiting induced correlation in the next section.
If $w\approx 1$, then both model and method correlation are approximately equal to their analogues in the classical OMNI setting. 
If $w\ll 1$ is small, then the method coefficient is approximately $1/m$, and the model correlation is effectively a function of $\rho_{k,\ell}$ for $k,\ell\neq s$ for $m$ large (assuming, for the moment, all the inherent correlations are of the same relative order).
As the number of graphs increases, the relative impact on the overall embedding of $A^{(1)}$ decreases; this is in direct contrast to the classical OMNI setting, where each graph has the same (non-trivial) relative import in the embedded space.}
\end{example}

\subsection{Limiting inherent and induced correlation}
\label{sec:indcorr_gen}
In the generalized omnibus embedding, similar to the classical setting (see Theorem \ref{theorem:ind_lat_classical_omni}), we can precisely compute the limiting correlation between estimates of the same latent position in the embedded space.
To wit, we have the following theorem; as with the other main results, its proof can be found in the Appendix, specifically Section \ref{sec:app_pfdiffgenomni}.
\begin{theorem}
\label{thm:pfdiffgenomni}
With notation and assumptions as in Theorem \ref{thm:genOMNI}, 
consider fixed indices $i\in[n]$ and $s,s_1,s_2\in[m]$. Let $\Big(\hat{\bX}^{(s)}_{\fM_n}\Big)_i$ denote the estimated latent position from graph $A_n^{(s)}$ of the hidden latent position $X_i$. 
There exists a sequence of orthogonal matrices $(\mathcal{Q}_n)_{n=1}^\infty$ such that for all $z \in \R^d$, we have that
	\begin{align}
	    \label{eq:corCLT_genomni}
	    \lim_{n \rightarrow \infty}
	\p\left[ n^{1/2} \left[ \Big(\bhx^{(s_1)}_{\fM_n} -\bhx^{(s_2)}_{\fM_n}\Big) \mathcal{Q}_n\right]_{i}
	\le z \right]
	= \int_{\text{supp } F} \Phi\left(z, \widetilde\Sigma_{\rho}(x;s_1,s_2)) \right) dF(x), 
	\end{align}
	where
	 $\widetilde\Sigma_{\rho}(x;s_1,s_2)$ is given by
\begin{align}
\label{eq:induced_covGEN}
    \widetilde\Sigma_{\rho}(x;s_1,s_2)&=\frac{1}{m^2}\bigg(\underbrace{\sum_{q=1}^{m}\Big(\alpha(s_1,q)-\alpha(s_2,q)\Big)^2}_{\text{method-induced correlation}}\nonumber\\&\hspace{1cm}+\underbrace{2\sum_{q<l}\Big(\alpha(s_1,q)-\alpha(s_2,q))(\alpha(s_1,l)-\alpha(s_2,l)\Big)\rho_{q,l}}_{\text{model-inherent correlation}}\bigg)\Sigma(x)\\
    &=\frac{1}{m^2}\bigg(
    2\sum_{q<l}(\alpha(s_1,q)-\alpha(s_2,q))(\alpha(s_1,l)-\alpha(s_2,l))(\rho_{q,l}-1)\bigg)\Sigma(x).\notag
\end{align}
\end{theorem}

The covariance matrix $\widetilde\Sigma_{\rho}(x;s_1,s_2)$ from Theorem \ref{thm:pfdiffgenomni} can be expressed in terms of the limiting correlation $\rho(s_1,s_2)$, where $\rho(s_1,s_2)$ via:
\begin{align}
\label{eq:induced_corr}
    \rho(s_1,s_2)  =&\underbrace{1-\frac{\sum_{q=1}^{m}(\alpha(s_1,q)-\alpha(s_2,q))^2}{2m^2}}_{\text{method-induced correlation}}\notag\\
    &+ \underbrace{\frac{\sum_{q<l}\Big(\alpha(s_1,q)-\alpha(s_2,q))(\alpha(s_2,l)-\alpha(s_1,l)\Big)\rho_{q,l}}{m^2}}_{\text{model-inherent correlation}}.
\end{align}
Before considering the more exotic examples described above, let's recall the classical omnibus embedding setting.

\noindent{\bf Example \ref{ex:classicalomni} continued:} Consider fixed indices $i\in[n]$ and  $s_1,s_2\in[m]$. Then, the limiting correlation between two estimated latent positions $\Big(\bhx^{(s_1)}_{M_n}\Big)_{i},\Big(\bhx^{(s_2)}_{M_n}\Big)_{i}$  from graphs $A^{(s_1)}$, $A^{(s_2)}$ respectively is given by
    \begin{equation}
    \label{eq:induced_CLASSICAL}
        \rho(s_1,s_2)=\frac{3}{4}+\frac{1}{4}\rho_{s_1,s_2}.
    \end{equation}
To see this, note that in this case
$$
\alpha(s,q)=
\begin{cases}
1+(m-1)/2&\text{ if }s=q\\
1/2&\text{ otherwise }
\end{cases}
$$
Without loss of generality, suppose that $s_1<s_2$. Then,
\begin{align*}
    \frac{1}{m^2}\sum\limits_{q=1}^{m}(\alpha(s_1,q)-\alpha(s_2,q))^2&=\frac{1}{m^2}\Big((\alpha(s_1,s_1)-\alpha(s_2,s_1))^2+(\alpha(s_1,s_2)-\alpha(s_2,s_2))^2\Big)\\&=\frac{1}{m^2}(\frac{m^2}{4}+\frac{m^2}{4})=\frac{1}{2}
\end{align*}
and
\begin{align*}
    &\frac{1}{m^2}\sum_{q<l}\Big(\alpha(s_1,q)-\alpha(s_2,q))(\alpha(s_2,l)-\alpha(s_1,l)\Big)\rho_{q,l} \\
    &\hspace{2cm}= \frac{1}{m^2}\underbrace{\sum_{q=1}^{s_1-1}(\alpha(s_1,q)-\alpha(s_2,q))(\alpha(s_2,s_1)-\alpha(s_1,s_1))\rho_{q,s_1}}_{l=s_1} \\&\hspace{2.5cm}+\frac{1}{m^2}\underbrace{\sum_{q=1}^{s_2-1}(\alpha(s_1,q)-\alpha(s_2,q))(\alpha(s_2,s_2)-\alpha(s_1,s_2))\rho_{q,s_2}}_{l=s_2}
    \\&\hspace{2cm}=\frac{1}{m^2}\sum_{q=1}^{s_1-1}\underbrace{(\alpha(s_1,q)-\alpha(s_2,q))}_{:=0}\left(\frac{-m}{2}\right)\rho_{q,s_1}\\
    &\hspace{2.5cm}+\frac{1}{m^2}\sum_{q=1}^{s_2-1}\underbrace{(\alpha(s_1,q)-\alpha(s_2,q))}_{:\neq \text{0 only when }q=s_1}(\frac{m}{2})\rho_{q,s_2}\\
    &\hspace{2cm}=\frac{1}{m^2}(\frac{m}{2})(\frac{m}{2})\rho_{s_1,s_2}\\
    &\hspace{2cm}= \frac{1}{4}\rho_{s_1,s_2}.
\end{align*}
Plugging these values into Eq. (\ref{eq:induced_corr}) yields the desired result.

\vspace{2mm}

\noindent{\bf Example \ref{ex:total_ave} continued.}
In the total average case, we can show (similar to the classical case) that 
\begin{align*}
    \rho(s_1,s_2)&=\underbrace{1-\frac{1}{m^2}}_{\text{method-induced correlation}}+\underbrace{\frac{1}{m^2}\rho_{s_1,s_2}}_{\text{model-inherent correlation}}\ .
\end{align*}
This is sensible, because for large $m$, we have that $\fM_{\Bar{A}}\approx J_m\otimes\bar A$ and the embedding of each $A^{(k)}$ is effectively equivalent to repeatedly embedding $\bar A$, yielding the large $m$ correlation approximately equal to $1$.
Note also that when $m=2$, this correlation coincides with the pairwise (unweighted) average classical OMNI setting (i.e., the matrix structure as in Definition \ref{def:omni}). 
However, the covariance matrix $\Check\Sigma$ depends on $m$ in the $\fM_{\Bar{A}}$ setting, and for $m\geq 3$ the limiting embedded correlation for $\fM_{\Bar{A}}$ is always greater than the limiting embedded correlation for the classical $M$.
\vspace{2mm}

\noindent{\bf Example \ref{ex:wgtave} continued.}
In the weighted average OMNI setting, the general form of the limiting correlation is computationally unwieldy.
However, when 
$w_{s_1}=w_{s_2}=w$ and $w_k=1$, for $k\neq s_1,s_2$, a simple computation yields that the limiting correlation for the (scaled) row difference in Theorem \ref{thm:pfdiffgenomni} is given by (where, wlog, $s_1<s_2$)
\begin{align*}
 \rho(s_1,s_2)&=\underbrace{1-\frac{\big((m-1)w+1\big)^2}{m^2(1+w)^2}}_{\text{method correlation}}+\underbrace{\frac{\big((m-1)w+1\big)^2}{m^2(1+w)^2}\rho_{s_1,s_2}}_{\text{model correlation}}\\
 &=1-\frac{\big((m-1)w+1\big)^2}{m^2(1+w)^2}(1-\rho_{s_1,s_2}).
\end{align*}
Figure \ref{fig:fig1} further highlights this relationship between the matrix structure of $\fM_W$ and the limiting covariance for the weighted pairwise average case. 
In this example, we set the number of graphs to be $m=10$. 
The plot illustrates the correlation $\rho(s_1,s_2)$ between estimates of the  same true latent position across different values of the weights $w_{s_1}=w_{s_2}=w$, where $w_k=1$ for all $k\neq s_1,s_2$. 
The different line types correspond to different values of $\rho_{s_1,s_2}$, with the vertical red line at $w=1$ corresponding to the classical OMNI setting of \cite{levin_omni_2017}.
We see that as the weight $w$ increases,
the correlation $\rho(s_1,s_2)$ decreases (i.e., in the limit, the embedded estimates derived from $A^{(s_1)}$ and $A^{(s_2)}$ are ``less'' dependent on each other). 
On the contrary, as the weight $w$ decreases, the correlation $\rho(s_1,s_2)$ increases (towards $1$) as the $A^{(k)}$ for $k\neq s_1,s_2$ has the same outsized influence on the embeddings of $A^{(s_1)}$ and $A^{(s_2)}$ which has the effect of making the embeddings effectively identical.
  \begin{figure}[t!]
\centering
\includegraphics[width=.8\textwidth]{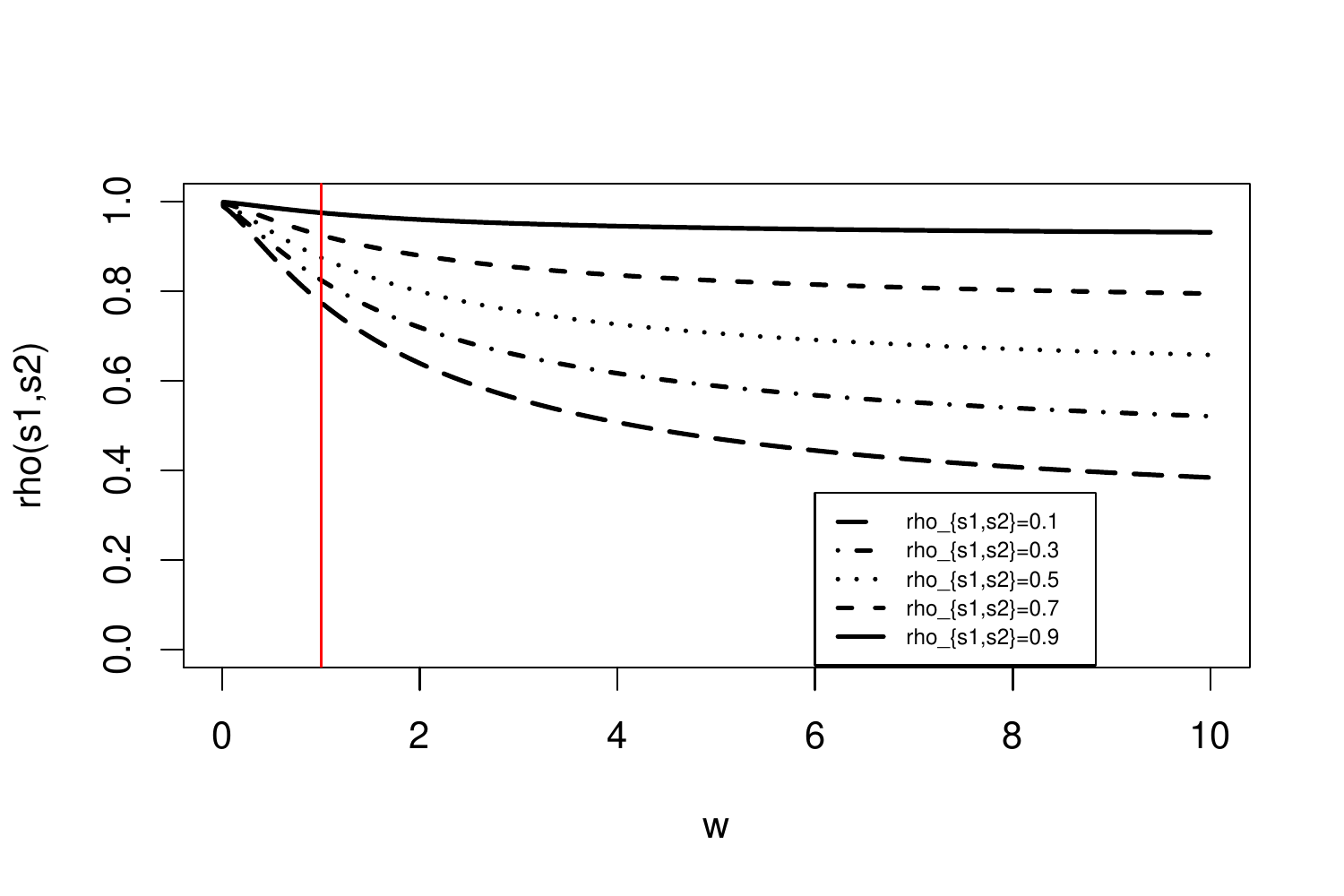}
\caption{The plot illustrates the correlation $\rho(s_1,s_2)$ between estimates of the  same true latent position across different values of the weights $w_{s_1}=w_{s_2}=w$, where $w_k=1$ for all $k\neq s_1,s_2$. The number of graphs is $m=10$. 
The different line types correspond to different values of $\rho_{s_1,s_2}$, with the vertical red line at $w=1$ corresponding to the classical OMNI setting of \cite{levin_omni_2017}.
}
  \label{fig:fig1}
\end{figure}

Indeed, the driving force behind the limiting correlation between the embedded $A^{(s_1)}$ and $A^{(s_2)}$ is the cumulative weights of the other $A^{(k)}$ for $k\neq s_1,s_2$.
If these weights are large, then the relative contribution to block-row $s_1$ (resp., $s_2$) by $A^{(s_1)}$ (resp., $A^{(s_2)}$) is low when compared to the cumulative effect of the other networks, and this masks the signal corresponding to $A^{(s_1)}$ (resp., $A^{(s_2)}$) in the $s_1$-th (resp., $s_2$-th) block of the embedding.
If, however, the weights of the $A^{(k)}$ for $k\neq s_1,s_2$ are small, then the relative contribution to block-row $s_1$ (resp., $s_2$) by $A^{(s_1)}$ (resp., $A^{(s_2)}$) is high when compared to the cumulative effect of the other networks, and this has the effect of reducing the correlation between the embeddings of $A^{(s_1)}$ and $A^{(s_2)}$.

\section{Dampened Omnibus and Aplysia spike train analysis}
\label{sec:damp}

In the example in Section \ref{sec:aplysia}, we see how the uniformity of the induced correlation in the classical omnibus embedding effectively masks much of the biologically relevant signal in the Aplysia motor program.
There is a dramatic spike in the data signal (the stimulus) followed by structured behavior and decay in spike intensity.
In this case (and in many other time series settings where we are modeling the impact of anomalous events), it is natural to assume that the dependence between two networks at different times $t_1,t_2$ decays as their difference $|t_1-t_2|$ gets bigger, and also decays over time as $t_1,t_2$ get bigger. 
This motivates the \textit{dampened} omnibus matrix defined via
$$\fM_{damp}^{(k,\ell)}=\begin{cases}
\frac{  w_k A^{(k)}+A^{(\ell)}}{w_k+1}&\text{ if }k>\ell,\\
A^{(k)}&\text{ if }k=\ell,\\
\frac{ A^{(k)}+ w_{\ell}A^{(\ell)}}{w_{\ell}+1}&\text{ if }k<\ell,
\end{cases}$$ 
where $\vec w$ is a strictly increasing vector of weights.
  \begin{figure}[t!]
\centering
\includegraphics[width=0.8\textwidth]{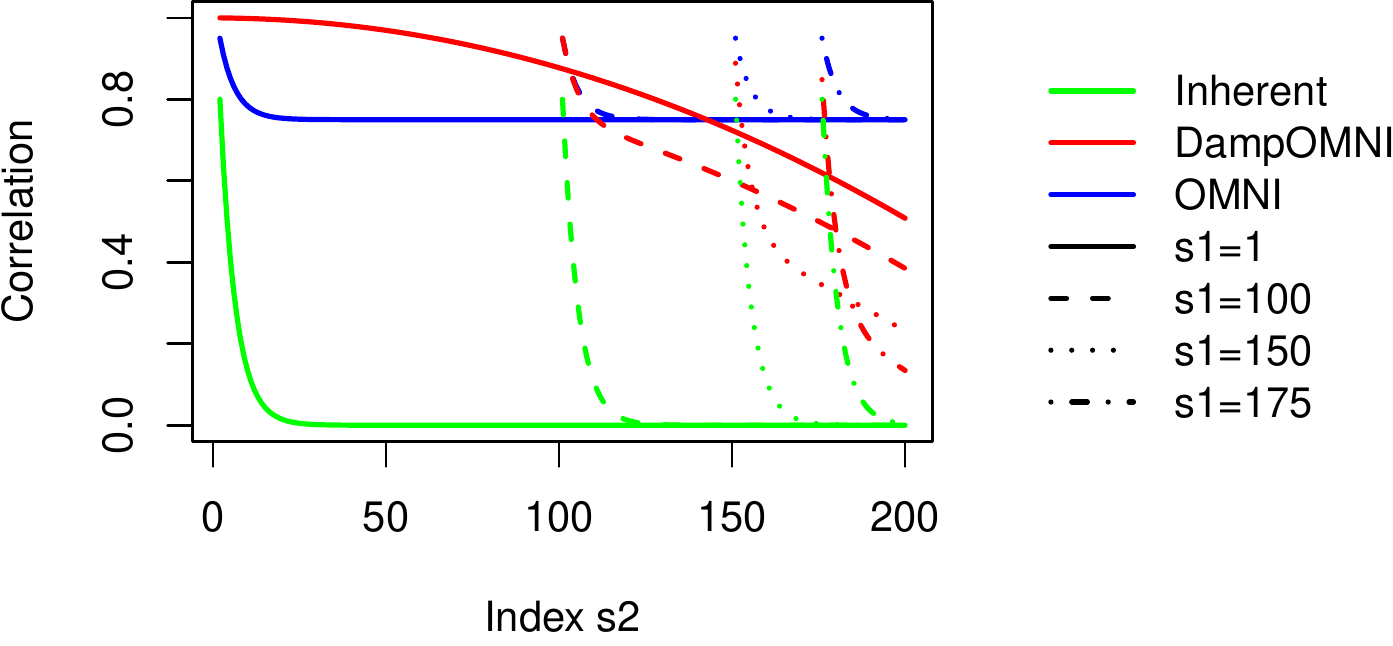}
\caption{
In a JRDPG$_{\text{for}}$ model where the entries of  $\boldsymbol{\varrho}$ are identically set to $0.8$, we consider the induced correlation for the classical omnibus embedding (blue), the dampened omnibus embedding (red).
  Considering $m=200$ networks, the 
solid (resp, dashed, dotted, dot-dash) lines correspond to the limiting induced correlation between vertices of the embedded $A^{(1)}$ (resp., $A^{(100)}$, $A^{(150)}$, $A^{(175)}$) and $A^{(s_2)}$ as $s_2$ varies from 2 (resp., 101, 151, 176) to 200. 
For each curve type, the true inherent correlation across the networks is plotted in green.}
  \label{fig:dampcorr}
\end{figure}
In the dampened omnibus matrix, $\fM_{damp}$, the relative contribution of each successive matrix (i.e., as $l>k$ increases in $A^{(l)}$) increases in block-row $k$; this (keeping the form of $\rho(t_1,t_2)$ in mind) can have the effect of decreasing the limiting induced correlation between $A^{(k)}$ and $A^{(l)}$ as $l>k$ increases.
As an example of this, consider the case where $w_{\ell}=\ell$ for all $\ell\in[m]$.
The form of (total) correlation, though cumbersome, is nonetheless instructive.
Letting 
$s_1,s_2\in[m]$, $s_1<s_2$, we have that 
the induced correlation is
  \begin{align*}
     \rho_{\text{me}}(s_1,s_2)=&1-\frac{1}{2m^2}\Bigg[\frac{(s_1-1)(s_2-s_1)^2}{(s_1+1)^2(s_2+1)^2}+\Big(\frac{s_1^2 +1}{s_1+1}+\sum_{l>s_1,\hspace{.1cm} l\neq s_2}\frac{1}{l+1}\Big)^2\\
     &\hspace{20mm}+\sum_{s_1<l<s_2}\Big(\frac{l}{l+1}-\frac{1}{s_2+1}\Big)^2 +\Big(\frac{(s_2-1)s_2+1}{s_2+1}+\sum_{l>s_2}\frac{1}{l+1}\Big)^2\Bigg]
\end{align*}
while the effect of the inherent correlation is
\begin{align*}
    \rho_{\text{mo}}(s_1,s_2)
&=\frac{1}{m^2}\Bigg[\sum_{l<s_1}\sum_{q<l}\frac{(s_2-s_1)(s_1-s_2)}{(s_1+1)^2(s_2+1)^2}\rho_{q,l}\\
     &\hspace{15mm}+\sum_{q<s_1}\frac{-(s_2-s_1)}{(s_1+1)(s_2+1)}\Big(\frac{s_1^2 +1}{s_1+1}+\sum_{h>s_1,\hspace{.1cm} h\neq s_2}\frac{1}{h+1}\Big)\rho_{q,s_1}\\
     &\hspace{15mm}+\sum_{s_1<l<s_2}
     \Bigg(
     \sum_{q<s_1}\frac{s_2-s_1}{(s_1+1)(s_2+1)}\left(\frac{1}{s_2+1}-\frac{l}{l+1} \right)\rho_{q,l}\\
     &\hspace{35mm}+ \Big(\frac{s_1^2 +1}{s_1+1}+\sum_{h>s_1,\hspace{.1cm} h\neq s_2}\frac{1}{h+1}\Big)\left(\frac{1}{s_2+1}-\frac{l}{l+1} \right)\rho_{s_1,l}\\
     &\hspace{35mm}+\sum_{s_1<q<l}\left(\frac{q}{q+1}-\frac{1}{s_2+1} \right) \left(\frac{1}{s_2+1}-\frac{l}{l+1} \right)\rho_{q,l}
     \Bigg)\\
     &\hspace{15mm}+
     \sum_{q<s_1}\frac{s_2-s_1}{(s_1+1)(s_2+1)}\left(\frac{s_2^2-s_2+1}{s_2+1}+\sum_{h>s_2}\frac{1}{h+1}\right)\rho_{q,s_2}\\
     &\hspace{15mm}+ \Big(\frac{s_1^2 +1}{s_1+1}+\sum_{l>s_1,\hspace{.1cm} l\neq s_2}\frac{1}{l+1}\Big)\left(\frac{s_2^2-s_2+1}{s_2+1}+\sum_{h>s_2}\frac{1}{h+1}\right)\rho_{s_1,s_2}\\
     &\hspace{15mm}+\sum_{s_1<q<s_2}\left(\frac{q}{q+1}-\frac{1}{s_2+1} \right) \left(\frac{s_2^2-s_2+1}{s_2+1}+\sum_{h>s_2}\frac{1}{h+1}\right)\rho_{q,s_2}\Bigg].
  \end{align*}
  With the form of $\rho_{\text{me}}$ above, consider the situation in which $s_2\gg s_1$.
  In this case, $\rho_{\text{me}}$ is growing like $\approx 1-s^2_2/2m^2$, which is decaying to $1/2$ as $s_2$ approaches $m$. 
  When both $s_2, s_1\gg 1$, $\rho_{\text{me}}$ is growing like $\approx 1-(s^2_2+s_1^2)/2m^2$, which is decaying to $0$ as $s_2, s_1$ both approach $m$. 
  This corroborates the effect we see in Figure \ref{fig:dampcorr}, where $\rho(s_1,s_2)$ is tracking the inherent (i.e., model) correlation fairly well after an initial burn in period (i.e., when $s_1$ is sufficiently large).
  For $\rho_{\text{mo}}$ (ignoring for the moment the potential large differences in the inherent correlation), when $s_2,s_1\gg (s_2-s_1)$, the dominant term is $\approx \frac{1}{m^2}s_1s_2\rho_{s_1,s_2}$ which is tracking the true inherent correlation.
  When both $s_2,s_1\gg 1$ but their difference is not necessarily relatively small, the dominant term in $\rho_{\text{mo}}$ is of the order $$\approx \frac{1}{m^2}s_1s_2\rho_{s_1,s_2}+\frac{1}{m^2}s_2\sum_{s_1<q<s_2}\rho_{q,s_2}\ .$$
  We see that this only depends on $\rho_{q,s_2}$ for $q\in\{s_1+1,\ldots,s_2-1\}$ and the effect of the inherent correlation is effectively localized.
  

In Figure \ref{fig:dampcorr}, we consider the JRDPG$_{\text{for}}$ model of Section \ref{sec:FWD} where the entries of  $\boldsymbol{\varrho}$ are identically set to $0.8$ and $m=200$. 
We plot the induced correlation for the classical omnibus embedding (blue), the dampened omnibus embedding (red), where the 
solid (resp, dashed, dotted, dot-dash) lines correspond to the limiting induced correlation between vertices of the embedded $A^{(1)}$ (resp., $A^{(100)}$, $A^{(150)}$, $A^{(175)}$) and $A^{(s_2)}$ as $s_2$ varies from 2 (resp., 101, 151, 176) to 200. 
For each curve type, the true inherent correlation across the networks is plotted in green.
While the method correlation in classical OMNI prevents the induced correlation from ever tracking the true inherent correlation, we see that the more nuanced structure presented by dampened OMNI allows (after a suitable burn in period) the induced correlation to significantly better (though far from ideally) track the inherent correlation.
Developing $\fM$ to induce a given correlation structure is a natural next step, and we are actively pursuing this question.
This decaying correlation structure appears well-suited for the Aplysia motor program, as it precisely allows for a large correlation at early time-points (corresponding to the dampening of signal after an anomalous event)
that dissipates in time.
We will explore this further in the next section.

\subsection{Dampened spike train analysis}
\label{sec:SPIKE}

Since the generalized omnibus embedding permits additional degrees of freedom for the off-diagonal entries, a richer spectrum of induced correlation is possible.  The dampened omnibus embedding, in particular, is designed to weaken cross-graph interplay over time. Given that the correlation homogeneity of the classical omnibus makes inter-network changes less apparent, it is natural to ask if an appropriately-calibrated dampened omnibus embedding can detect what the more limited classical omnibus can not. To test this, we next apply the methodology of Section \ref{sec:masking} to the Aplysia data of Section \ref{sec:aplysia} using $\fM_{damp}$ with $w_{\ell}=\ell$.
Results are summarized in Figure \ref{fig:domni_aplysia} 
where ${\bf D}_d$ is the corresponding distance matrix from Eq.\@ \ref{eq:d} in the dampened setting.

We again see that dampened OMNI is able to isolate the stimulus in the second graph.
However, unlike in the standard OMNI setting, dampened OMNI is able to tease out additional relevant structure including the uniqueness of graph 1, the transition from gallop to crawl and an unstable dynamic in the crawling motor program.
In this dampened setting 
the cluster labels found by \texttt{Mclust} are summarized in Table \ref{tab:dampened_aplysia}.
\begin{figure}[t!]
\centering
\includegraphics[width=1\textwidth]{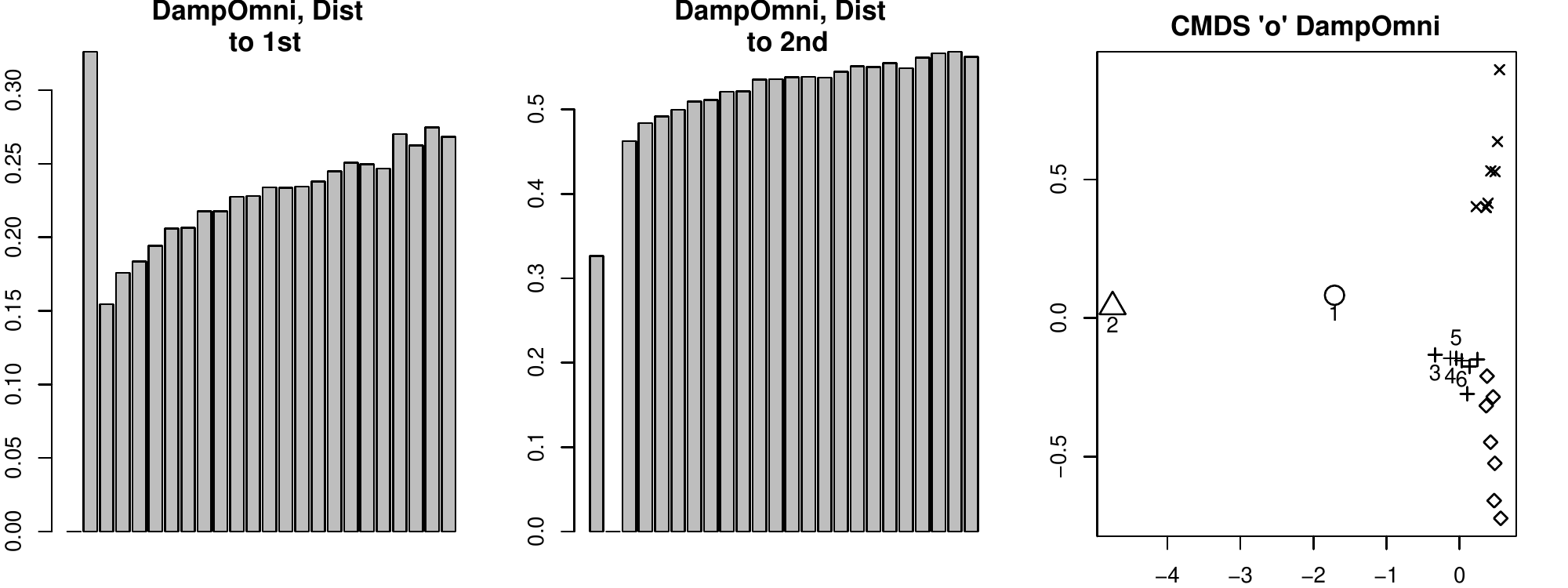}
\caption{
Using $w_{\ell}=\ell$ in the Dampened OMNI framework, in the left (resp., center) panel, we plot the average vertex distance (as the bar heights) in the embedding between graph 1 (resp., graph 2) and graph $k$ for each $k\in[m]$ in Dampened OMNI (where the bars are labeled $k=1,2,\cdots,24$).
In the right panel, we compute the $24\times 24$ distance matrices ${\bf D_d}$, and embed it into $\mathbb{R}^2$.
The resulting $24$ data points are clustered using \texttt{Mclust} (clusters denoted by shape), and are plotted, with graphs 1 through 6 further labeled with their corresponding number.
}
  \label{fig:domni_aplysia}
\end{figure}
\begin{table}[t!]
\begin{center}
    \begin{tabular}{c|c|c|c|c|c|c|c|c|c|c|c|c}
    Graph& 1 & 2 & 3 & 4  &5&  6 & 7 & 8 & 9 &10 &11 &12\\\hline
    Cluster & 1 &2& 3 &3& 3& 3 &3& 3 &3 &4& 4& 4\\
    \hline\hline
    Graph& 13 &14 &15& 16& 17 &18& 19& 20 &21& 22 &23& 24\\\hline
    Cluster& 4 &5 &5 &4 &4 &5 &4 &5& 5& 4& 5& 5\\
    \end{tabular}
\end{center}
    \caption{\label{tab:dampened_aplysia}
This table displays the cluster labels for the 24 networks when embedding the time-series using dampened OMNI with weights $w_{\ell}=\ell$, then embedding the across-graph distance matrix using classical MDS scaling, and finally clustering the graphs using \texttt{Mclust}.
    In the MDS embedding, each graph is represented by a single $2-$dimensional embedded data point, and the pair (Graph, Cluster) describes which cluster these points are assigned by \texttt{Mclust}.}
    \end{table}
From the table, we further see that the clustering is better able to isolate graph 1 (the spontaneous firing state graph) than in the classical omnibus setting.
Moreover, the clustering of graphs 3--9 and the CMDS graph embedding, shows that the dampened setting is better able to capture (imperfectly) the transition from galloping in bins 3 and 4 to crawling in bins 5--24.
Clustering yields classes that progress in orderly fashion over the first half of the motor program, after which states irregularly alternate between 4 and 5, suggesting an unstable dynamic to the program not apparent from simple visual inspection of the firing traces.

Here, the dampened omnibus structure allows us to better tease out the biologically relevant structure in the data; namely, the distinct nature of graph 1, the transition from gallop to crawl, the unstable dynamic in the crawling.
Indeed, from Figure \ref{fig:omniaplysia}, we see that the flat correlation structure induced by classical OMNI masks much of this latent structure in these biological networks, while the dampened structure induces the right correlation and both uncovers and clarifies this neuroscientifically relevant signal in the data.


	\section{Induced correlation in JRDPG{\text{gen}} model}
 \label{sec:experiments-simulations2}
In this section, we  present a series of simulations that are designed further illuminate the flexibility of the generalized omnibus framework in the JRDPG$_{\text{gen}}$ model (having explored dampened OMNI in the JRDPG$_{\text{for}}$ model in Section \ref{sec:damp}).
If we consider a collection of graphs as noisy realizations of some true, underlying, network process (i.e., the JRDPG$_{\text{gen}}$ model), then it is reasonable to expect some of the realizations to be noisier, or of lower fidelity, than others.  
As such, it is desirable to have an embedding method that can account for the difference in sampling fidelity across networks.
To explore this further, we consider the JRDPG$_{\text{gen}}$ model with $\nu_i=0.8$ for $i=1,2,\cdots,50$, and $\nu_i=0.3$ for $i=51,52,\cdots,100$ (so that $m=100$).
In this case, the first 50 $A^{(i)}$ are higher fidelity (i.e., less noisy) copies of the generator $A^{(0)}$.
It is natural then to seek to down-weight $A^{(i)}$ for $i=51,52,\ldots,100$ in the embedding.
We can achieve this via the weighted average omnibus embedding with
$$ \fM_{W}^{(i,j)}=\frac{w_iA^{(i)}+w_jA^{(j)}}{w_i+w_j}
$$
where $\vec{w}$ is a vector of positive graph weights.
In Table \ref{tab:onegen}, we consider the limiting embedding correlation across the networks within this model for a variety of different weight vectors.
We see that upweighting the higher-fidelity networks in $\fM_{W}$ has the effect of more closely modeling the true inherent correlation among the high fidelity networks in the embedded space versus classical OMNI (the $w_i=1$ for all $i$ setting). 
This comes at the expense of inducing more correlation between the lower fidelity samples (and across the lower fidelity and higher fidelity samples), as the higher fidelity samples are having more influence on the embeddings of the lower fidelity samples. 
We note also that the effect is reversed if the lower fidelity samples are upweighted.
\begin{table}[t!]
\centering
\begin{tabular}{c|c|c|c|c|c}
 & inherent corr.& $w_i\stackrel{iid}{\sim}$Unif(0,1)& $w=\vec 1_{100}$ & $w=[\vec 1_{50}, \vec{10}_{50}]$ & $w=[\vec 10_{50}, \vec{1}_{50}]$ \\\hline
  $\rho(1,2)$   &0.64 & 0.933 & 0.91 & 0.969  & 0.821\\\hline
  $\rho(1,51)$  &0.24 & 0.823 & 0.81 & 0.729  & 0.839\\\hline
  $\rho(51,52)$ &0.09 & 0.684& 0.773 &0.548  & 0.921
\end{tabular}
\caption{Considering the JRDPG$_{\text{gen}}$ model with $\nu_i=0.8$ for $i=1,2,\cdots,50$, and $\nu_i=0.3$ for $i=51,52,\cdots,100$ (so that $m=100$), we provide the correlations induced by embedding $\fM_{W}$ for a variety of weight vectors $\vec w$. Values are rounded to three digits for ease of display.}
\label{tab:onegen}
\end{table}

In the model above, suppose that we want the embedding to preserve the correlation between one particular pair of graphs (wlog, between $A^{(1)}$ and $A^{(2)}$).
This can be achieved via a special OMNI construction as follows.
Letting $m$ be even, for $i$ odd define
$$\fM_{12}^{(i,j)}=
\begin{cases}
A^{(i)} &\text{ if }i=j\text{ or }i\leq j-2 \\
\frac{A^{(i)}+A^{(j)}}{2} &\text{ if }i=j-1 \\
A^{(j)} &\text{ if }i>j,
\end{cases}
$$
for $i$ even define
$$\fM_{12}^{(i,j)}=
\begin{cases}
A^{(i)} &\text{ if }i=j\text{ or }i\leq j-1 \\
\frac{A^{(i)}+A^{(j)}}{2} &\text{ if }i=j+1 \\
A^{(j)} &\text{ if }i-1>j.
\end{cases}
$$
Then it is not difficult to compute
$$\rho(1,2)=1-(1-R_{1,2})\frac{(m-1)^2}{m^2},$$
so that $\rho(1,2)\approx R_{1,2}$ for large $m$.
While this choice of $\fM_{12}$ may not globally preserve $R$ in the embedded space, this demonstrates that local (i.e., between pairs or a few pairs) can be well-preserved.
In Figure \ref{fig:GenOmni}, we see that (considering again the JRDPG$_{\text{gen}}$ model with $\nu_i=0.8$ for $i=1,2,\cdots,50$, and $\nu_i=0.3$ for $i=51,52,\cdots,100$), the pairwise correlation is well preserved by embedding via $\fM_{12}$, not just between $A^{(1)}$ and $A^{(2)}$ but between many of the network pairs $A^{(i)}$ and $A^{(j)}$ for $i,j$ small.

\begin{figure}[t!]
\centering
  \includegraphics[width=.7\linewidth]{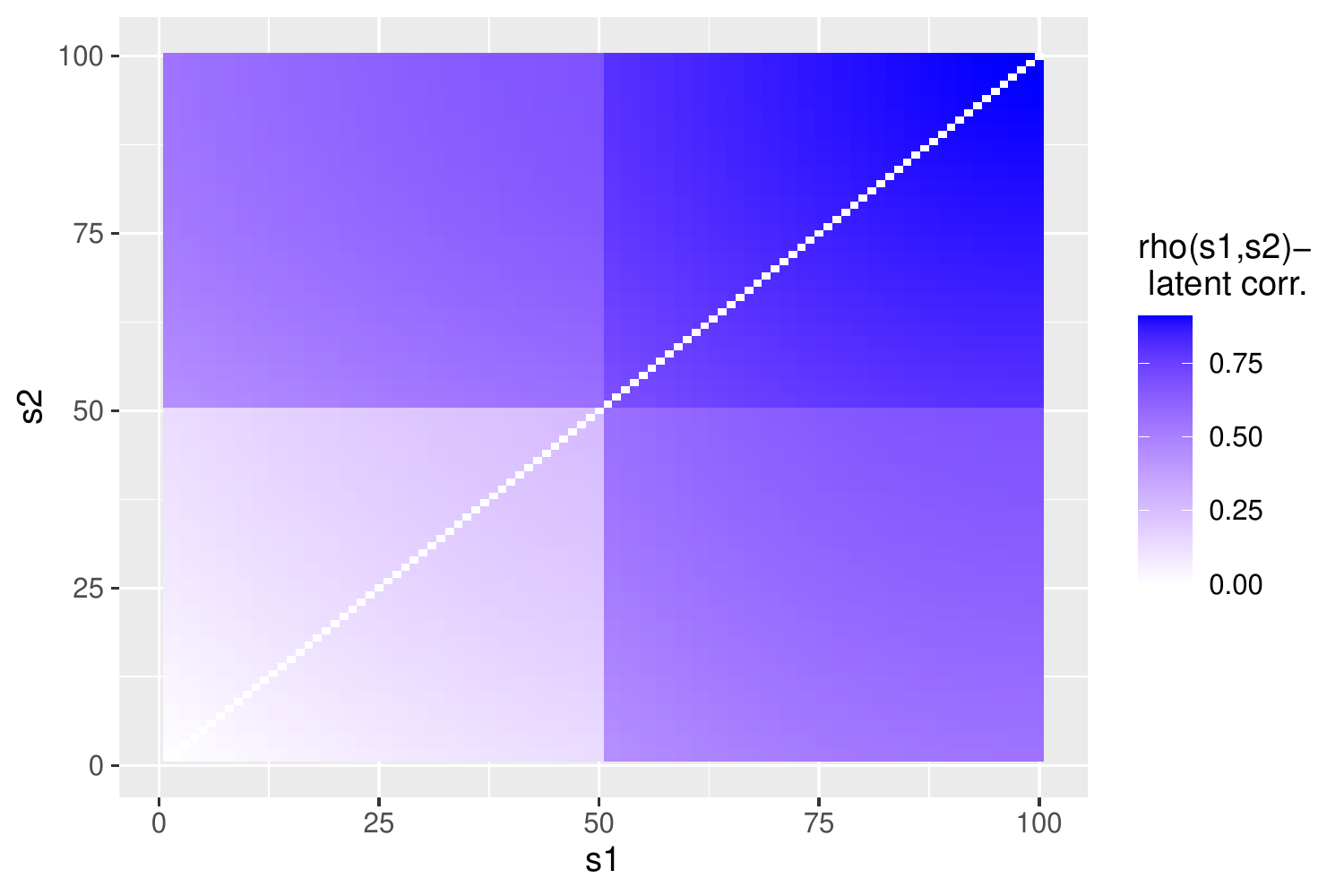}
  \caption{Considering the JRDPG$_{\text{gen}}$ model with $\nu_i=0.8$ for $i=1,2,\cdots,50$, and $\nu_i=0.3$ for $i=51,52,\cdots,100$ (so that $m=100$), we plot the difference between induced and inherent correlations when embedding via $\fM_{12}$.  Darker blue indicates less accurate preservation of the correlation structure.}
\label{fig:GenOmni}
\end{figure}

\section{Effective sample size in correlated graphs} \label{sec:effective-sample-size}

Statistical inference for multiple networks often faces questions that require to aggregate the information about the latent positions across samples of graphs. When the graphs are correlated, correlation itself adds a dampening factor that can reduce the effective sample size. In this subsection, we investigate the effect of correlation (induced and inherent) on effective sample size in subsequent inference about the latent position.

The following theorem describes the limiting behavior of the average latent positions estimated from a sample of correlated graphs. 
The proof  is given in the Appendix.
\begin{theorem}
\label{thm:CLT-effective-samplesize}

Let $\rho\in(0,1)$ be fixed.
Let $F$ be a distribution on a set $\mathcal{X}\subset \mathbb{R}^{d}$, where $\langle x, x'\rangle\in[0,1]$ for all $x,x'\in\mathcal{X}$, and assume that $\Delta := \mathbb{E}[X_1 X_1^{T}]$ is rank $d$.
Let $(A^{(1)}_n,\ldots, A^{(m)}_n,\bX_n)\sim \mathrm{JRDPG}(F,n,m,\rho),$ be a sequence of $\rho-$correlated $\mathrm{RDPG}$ random graphs and associated latent positions.
\begin{enumerate}
	\item[a)] Letting 
	$\bhx^{(s)}_{A_n}=\mathrm{ASE}(A^{(s)}_n,d)$ for each $s=1,\ldots,m$, 
	there exist sequences of orthogonal $d$-by-$d$ matrices
	$( W^{(s)}_n )_{n=1}^\infty,$ $s=1, \ldots,m$ such that for all $z \in \R^d$ and for any fixed index $i$, 
	\begin{align}
	\label{eq:corCLT-effective-samplesizeASE}
	\lim_{n \rightarrow \infty}
	\p\left[ n^{1/2} \left(\frac{1}{m}\sum_{s=1}^m \widehat \bX_{A_n}^{(s)} W^{(s)}_n -  \bX_n \right)_i
	\le z \right]
	= \int_{\text{supp } F} \Phi\left(z, \widetilde\Sigma\left(x, \frac{1-\rho}{m} + \rho\right) \right) dF(x).
	\end{align}
	\item[b)] Letting 
	$\bhx_{M_n}=\mathrm{ASE}(M_n,d)$, and denoting the $s$-th $n\times d$ block of $\bhx_{M_n}$ as $\bhx_{M_n}^{(s)}$, 
	there exist a sequence of orthogonal $d$-by-$d$ matrices
	$( \tilde W_n )_{n=1}^\infty,$  such that for all $z \in \R^d$ and for any fixed index $i$, 
	\begin{align}
	\label{eq:corCLT-effective-samplesizeOMNI}
	\lim_{n \rightarrow \infty}
	\p\left[ n^{1/2} \left(\frac{1}{m}\sum_{s=1}^m \widehat \bX_{M_n}^{(s)} \tilde W_n -  \bX_n \right)_i
	\le z \right]
	= \int_{\text{supp } F} \Phi\left(z, \widetilde\Sigma\left(x, \frac{1-\rho}{m} + \rho\right) \right) dF(x).
	\end{align}
\end{enumerate}
\end{theorem}

Stated simply, the previous theorem recovers the classical result about the covariance of the sample mean for correlated data. When the graphs are independent, the average of the estimated latent positions after a proper orthogonal alignment adds a factor of $1/m$ to the limiting covariance matrix, suggesting that this average concentrates for large sample size around the true mean. This is shown both for  a separate embedding of each graph (part a)) or a joint embedding with OMNI (part b)).
When the edges of the graphs are correlated by some positive constant $\rho$, the efficiency of this estimator is reduced, and the limiting effective sample size is $m_{\text{eff}}:=\frac{m}{1+\rho(m-1)}$.

\subsection{Recovery of community labels in stochastic blockmodels}
\label{sec:community-recovery-experiments}
We start by investigating the recovery of community labels from estimated latent positions of stochastic block model graphs. First, we consider a correlated pair of positive semidefinite $K$-block stochastic block models \cite{Holland1983} as in  \cite{athreya_survey}, Definition 8. 
In particular, letting $K=2$ we generate a pair of adjacency matrices $(A,B,\bf X)\sim$ $\mathrm{JRDPG}(F,n,2,\rho)$, where $F$ is a mixture of point mass distributions defined by
\begin{equation}
F=\frac{1}{2}\delta_{\zeta_1}+\frac{1}{2}\delta_{\zeta_2},
\end{equation}
in which $\zeta_1$, $\zeta_2\in\mathbb{R}^2$ are the latent positions and satisfy 
$$
\begin{bmatrix}
\zeta_1\\
\zeta_2
\end{bmatrix}
\begin{bmatrix}
\zeta_1\\
\zeta_2
\end{bmatrix}^T=
\begin{bmatrix}
0.5&0.5\\
0.5&0.5+\epsilon
\end{bmatrix}
=:
P_{\epsilon}
$$
where $\epsilon>0$. 
\begin{figure}[t!]
\centering
\includegraphics[width=1\textwidth]{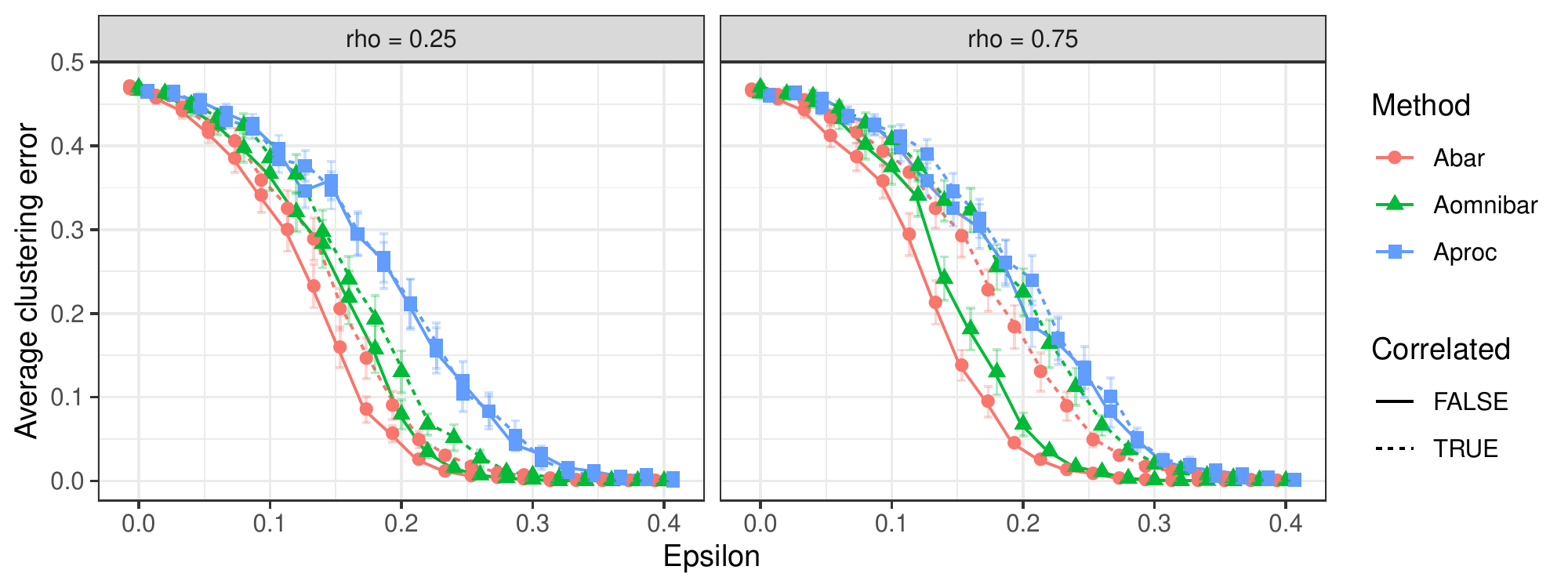}
\caption{Community detection error in a $\rho-$SBM model with $K=2$ communities, $N=100$ number of vertices and block probability matrix $P_{\epsilon}$ as a function of $\epsilon$. The latent positions are estimated by ASE embedding of the mean graph $\Bar{A}$ (red lines), OMNI embedding (green lines) and Procrustes-based pairwise embedding (blue lines). Each point is the mean of $100$ MC replicates. Observe that in both panels the ASE($\Bar{A},2$) and OMNI($M,2$) embeddings in the i.i.d case perform better than their embeddings in the correlated case, while the Procrustes-based pairwise embedding remains unaffected by the effect of correlation. Also note that the performance of the Procrustes-based pairwise embedding is inferior to the other two embedding methods.}
\label{fig:epsilon}
\end{figure}
We let $K=d=2$ and we consider the following three graph embedding techniques to help us understand the role of effective sample size on correlated random graphs: 
\begin{enumerate}
    \item \textbf{Classical OMNI embedding:} We apply the traditional omnibus embedding using $M$ as in Equation \eqref{eq:omnibus_M} to obtain estimated latent positions $\mathrm{ASE}(M;2)\in\R^{2n\times 2}$ and we average the $n$ rows across the two graphs to obtain $\widehat{\bX}^1\in\R^{n\times 2}$.
    
    \item \textbf{Mean graph embedding:} We use ASE to embed the mean of the two graphs, $\Bar{A}
= \frac{A + B}{2}$ \cite{tang2018connectome} to obtain estimated latent positions $\widehat{\bX}^2=\mathrm{ASE}(\Bar{A};2)\in\R^{n\times 2}$.

    \item \textbf{Procrustes-based embedding:} We separately embed the graphs $A,B$, obtaining two separate matrices of estimated
latent positions $\mathrm{ASE}(A,2),\mathrm{ASE}(B,2)\in\R^{n\times 2}$. We then align these two matrices via orthognal Procrustes
alignment \cite{gower_procrustes}, and average the aligned embeddings to obtain $\widehat{\bX}^3\in\R^{n\times 2}$.\end{enumerate}

%
\noindent For each embedding strategy, we further consider $\rho=0$ (i.e., conditionally independent graphs) and $\rho>0$ (i.e., correlated graphs) in our generative model resulting in a total of six estimated latent position matrices, $\widehat{\bX}_{\textrm{ind}}^{1},\widehat{\bX}_{\textrm{corr}}^{1},\widehat{\bX}_{\textrm{ind}}^{2},\widehat{\bX}_{\textrm{corr}}^{2},\widehat{\bX}_{\textrm{ind}}^{3},\widehat{\bX}_{\textrm{corr}}^{3}$. 
\begin{figure}[t!]
\centering
\includegraphics[width=1\textwidth]{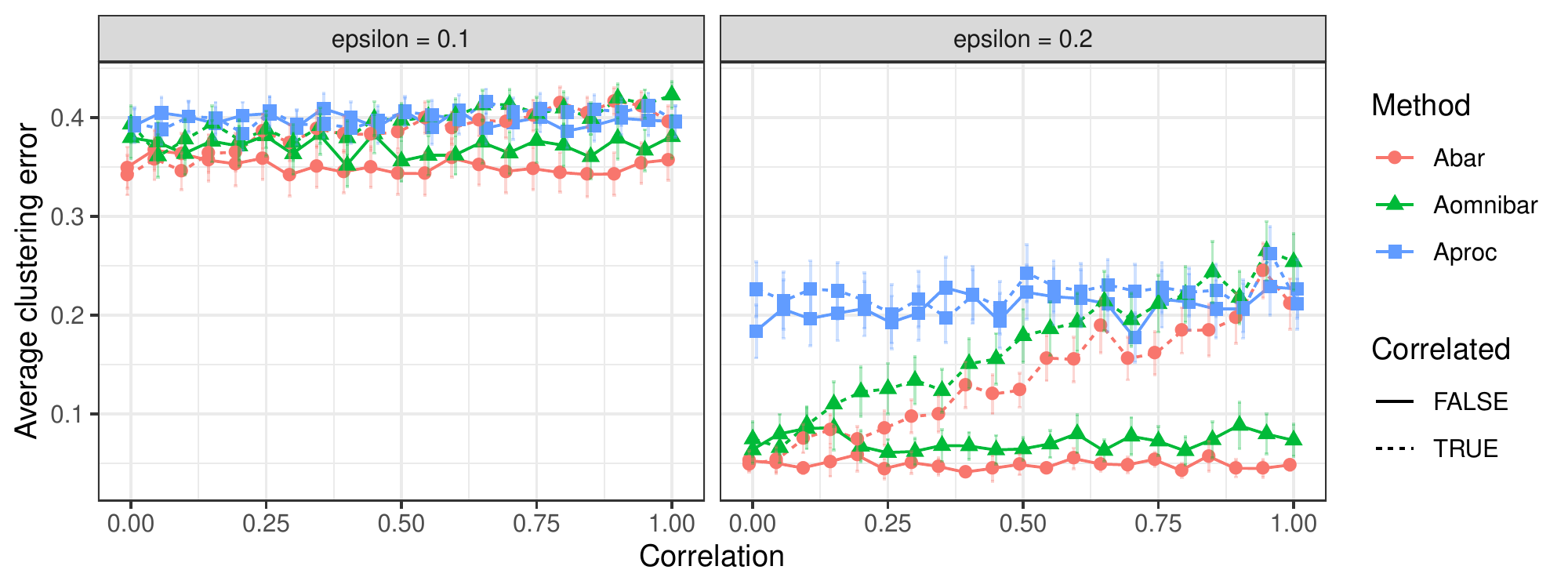}
\caption{Clustering error in recovery of community labels of estimated latent positions in a $\rho$-SBM model with $K=2$ communities, $N=100$ number of vertices and block probability matrix $P_{\epsilon}$ as a function of the correlation $\rho$. The latent positions are estimated by ASE embedding of the mean graph $\Bar{A}$ (red lines), OMNI embedding (green lines) and Procrustes-based pairwise embedding (blue lines). 
 Each point is the mean of $100$ MC replicates. In panel $(a)$, all the methods effectively achieve the same clustering error. In panel $(b)$, the performance of the ASE($\Bar{A},2$) and OMNI($M,2$) embeddings in the correlated case decreases as the correlation increases.
}
\label{fig:rho}
\end{figure}

%
%

In Figures \ref{fig:epsilon}-\ref{fig:nn}, we cluster the rows of estimated latent positions $\widehat{\bX}_{\ell}^{\kappa}$, for all $\kappa\in\{1,2,3\}$, $\ell\in\{\text{ind},\text{corr}\}$ into $K=2$ communities via the model based clustering procedure in the \texttt{Mclust} R package \cite{mclust2012}. It is shown in \cite{tang_priebe_16}, Section 4, that Gaussian mixture model-based clustering yields substantial improvement over $K$-means clustering in recovering the latent communities in ASE of stochastic blockmodels.
This is due to the associated limiting covariance  matrices often being elliptical rather than spherical. 
In each figure, we plot the clustering error in recovering the latent community labels for all six estimated latent position matrices where the solid lines correspond to the independent case and the dashed lines correspond to the correlated case. 
The latent positions estimated by ASE($\Bar{A},2$) are displayed in incarnadine/red lines, by OMNI embedding in green lines and by Procrustes-based pairwise embedding in blue lines. 
Each point is the average of 100 MC replicates.

\begin{figure}[t!]
\centering
\includegraphics[width=0.9\textwidth]{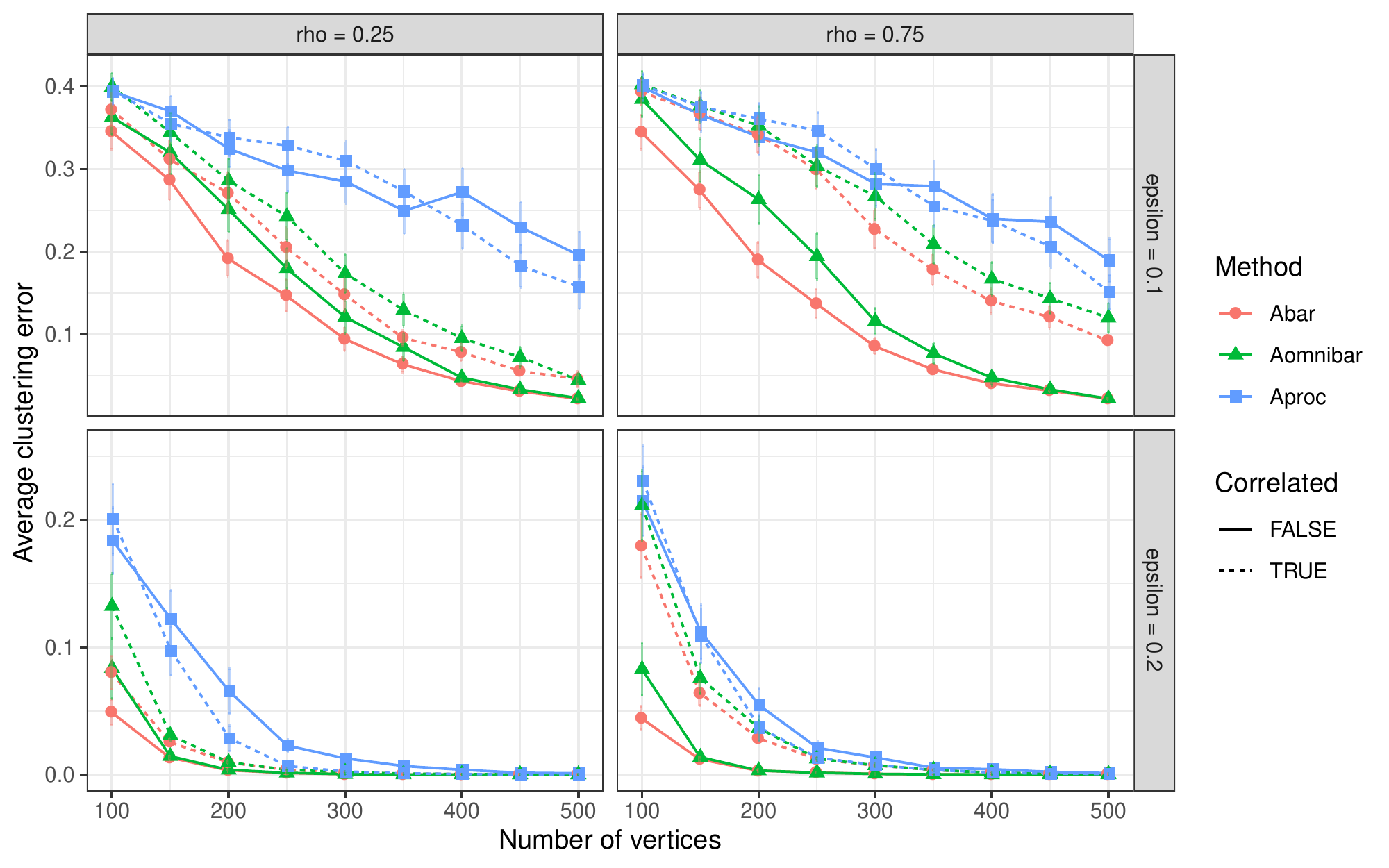}
\caption{Clustering error in recovery of community labels of estimated latent positions in a $\rho$-SBM model with $K=2$ communities and block probability matrix $P_{\epsilon}$ as a function of the number of vertices $N$. The latent positions are estimated by ASE embedding of the mean graph $\Bar{A}$ (red lines), OMNI embedding (green lines) and Procrustes-based pairwise embedding (blue lines). Each point is the mean of $100$ MC replicates.}
\label{fig:nn}
\end{figure}
In Figure \ref{fig:epsilon}, we set the number of vertices for each graph equal to $N=100$ and we plot the clustering error versus $\epsilon$. 
In panel $(a)$ (resp. $(b)$) we set the correlation for the estimated latent positions in the correlated case equal to $\rho=0.25$ (resp. $\rho=0.75$).
Observe that in both panels the ASE($\Bar{A},2$) and OMNI($M,2$) embeddings in the i.i.d case perform better than their corresponding embeddings in the correlated case, while the Procrustes-based pairwise embedding remains unaffected by the inherent correlation; we suspect this is due to the Procrustes step introducing additional signal across the networks which can mask the present inherent correlation. 
Also note that the performance of the Procrustes-based pairwise embedding is inferior to the other two embedding methods, while the omnibus and mean embeddings have comparable performance. 
In Figure \ref{fig:rho}, we again set $N=100$ and we plot the clustering error versus the correlation. 
In panel $(a)$ (resp. $(b)$) we set $\epsilon=0.1$ (resp. $\epsilon=0.2$) across all estimated latent positions in both cases. 
In panel $(a)$, the signal in distinguishing the two communities is weak and all embedding methods achieve similar performance across $\rho$. 
In panel $(b)$ however, the performance of clustering the estimated latent positions $\widehat{\bX}_{\textrm{corr}}^{1}$ and $\widehat{\bX}_{\textrm{corr}}^{2}$ decreases as the correlation increases, as the effective sample size of the correlated networks is decreasing.
%

In Figure \ref{fig:nn}, we plot the clustering error versus the number of vertices of the graphs. The rows of the figure correspond to different values of $\epsilon\in\{0.1,0.2\}$ and the columns correspond to different values of $\rho\in\{0.25,0.75\}$. Comparing panels $(a)$ and $(b)$ we observe how the effective sample size is diminished for the estimated latent positions $\widehat{\bX}_{\textrm{corr}}^{1}$, $\widehat{\bX}_{\textrm{corr}}^{2}$ as the correlation increases (i.e., more vertices are needed to achieve the same error in the highly correlated case).
For example, in panel $(a)$ ($\rho=0.25$), the clustering error for both $\widehat{\bX}_{\textrm{corr}}^{1}$ and $\widehat{\bX}_{\textrm{corr}}^{2}$ is approximately $0.2$ when the number of vertices is $N\approx250$, whereas in panel $(b)$ ($\rho=0.75$) the same error is achieved when $N\approx 350$.
This is mirrored in panels $(c)$ and $(d)$ as well.

%
%
\subsection{Vertex classification in brain networks}
\label{sec:vertex-classification}
Vertex classification is a problem that arises in applications where the goal is to predict vertex labels using a subset of the vertices for which this information is known a priori \cite{sussman12:_univer,chen_worm}. In brain networks, for instance, where the vertices correspond to brain regions or neurons, information about vertex attributes is sometimes known in more detail for a portion of the graph, and this can be used to infer these attributes in the remaining vertices  \cite{chen_worm}.  Here, we show how leveraging the information from a collection of networks  yields improvements in vertex classification accuracy, but the presence of edge correlation between the networks can reduce the effective sample size and result in smaller prediction improvements.

The HNU1 study \cite{Zuo2014} includes brain diffusion magnetic resonance images (dMRI)  from 30 healthy subjects that were scanned 10 times each over a period of one month. These scans were used to construct a collection of brain networks with the CPAC200 atlas \cite{Kiar2018}, resulting in a sample of 300 graphs (one per each subject and each scan) with 200 aligned vertices, and binary edges denoting the existence of nerve tracts between each pair of brain regions. The vertices of the networks are labeled according to the brain hemisphere, with 94 vertices corresponding to the left hemisphere, 98 to the right hemisphere, and 8 unlabeled vertices.
The post-processed brain networks were downloaded from \url{https://neurodata.io/mri/}.

To understand the effect of the correlation in subsequent inference, we start by measuring the correlation between pairs of graphs. Recall that under the $\rho$-correlated heterogeneous Erd\H os-R\'enyi (or Bernoulli) model, the correlation between two graphs $A^{(k)}\sim\operatorname{Bernoulli}(P^{(k)})$ and $A^{(\ell)}\sim\operatorname{Bernoulli}(P^{(\ell)})$
satisfies
$$\rho^{(k,\ell)} = \frac{\mathbb{E}[(A^{(k)}_{ij} - P^{(k)}_{ij})(A^{(\ell)}_{ij} - P^{(\ell)}_{ij})]}{\left[\operatorname{Var}(A^{(k)}_{ij})\operatorname{Var}(A^{(\ell)}_{ij})\right]^{1/2}}, \quad \forall i,j\in[n], i\neq j.$$
    Based on estimates of the edge probability matrices $\widehat{P}^{(k)}$ and $\widehat{P}^{(\ell)}$, we construct a plug-in estimator of the correlation given by
\begin{equation*}
    \hat{\rho}^{(k, \ell)} = \frac{2}{n(n-1)}\sum_{i<j} \frac{(A_{ij}^{(k)}-\widehat P^{(k)}_{ij})(A^{(\ell)}_{ij} - \widehat P^{(\ell)}_{ij}) }{\left[\widehat P^{(k)}_{ij} (1 - \widehat P^{(k)}_{ij})\widehat P^{(\ell)}_{ij} (1 - \widehat P^{(\ell)}_{ij})\right]^{1/2} }.
\end{equation*}
The edge probability matrix is estimated as $\widetilde{P}^{(k)} =\widehat{\mathbf{X}}^{(k)} \widehat{D}^{(k)} \widehat{\mathbf{X}}^{(k)^T} $, where $\widehat{D}^{(k)}\in\mathbb{R}^{d\times d}$ is a diagonal matrix such that $\widehat{D}^{(k)}_{uu}$ is the sign of the $u$-th eigenvalue of $A^{(k)}$ ordered by magnitude,  $\widehat{\mathbf{X}}^{(k)} = \operatorname{ASE}(A^{(k)}, d)$, and $d=15$ as suggested by an analysis to the same dataset in \cite{arroyo2019inference}. After that, $\widehat{P}^{(k)}$ is formed by trimming the entries to ensure that the values of the matrix are inside the interval $(0,1)$. Given a value $\epsilon\in(0,1/2)$, this estimator is defined as
\[\widehat{P}^{(k)}_{ij}= \left\lbrace \begin{array}{cl}
    \epsilon & \text{if }\widetilde P^{(k)}\leq\epsilon,\\
    \widetilde P^{(k)} &  \text{if }\widetilde P^{(k)}\in(\epsilon, 1-\epsilon),\\
     1-\epsilon &  \text{otherwise}.
\end{array}\right.\]
In our experiments, we choose $\epsilon=10^{-4}$, but we observe that the results remain quantitatively similar for a wide range of values. We also estimate the Pearson correlation between the graphs, which corresponds to the edge correlation under the correlated homogeneous Erd\H os-R\'enyi  model, and it is given by
\[\widehat\varrho^{(k,\ell)}=\frac{\sum_{i>j}(A_{ij}^{(k)} - \bar{A}^{(k)})(A_{ij}^{(\ell)} - \bar{A}^{(\ell)})}{[\sum_{i>j}(A_{ij}^{(k)} - \bar{A}^{(k)})^2\sum_{i>j}(A_{ij}^{(\ell)} - \bar{A}^{(\ell)})^2]^{1/2}} .\]
Here, $\bar{A}^{(k)}=\frac{1}{n(n-1)}\sum_{i>j}A^{(k)}_{ij}$ is the edge sample mean.

Figure~\ref{fig:hnu1-correlation} shows the empirical distribution of the estimated correlation values between pairs of networks, divided according to correlation estimates for pairs of networks corresponding to the same subject (blue curves) and pairs of networks belonging to different subjects (red curves).
For both edge and Pearson correlation distributions, the red curves appear to be stochastically smaller than the blue curves, suggesting that the correlation for pairs of networks within-subject is higher than the correlation between networks of  different subjects, which is expected. 
Thus, according to our theory, we would expect that the corresponding embeddings will show a stronger correlation, which can have consequences in subsequent inference.

\begin{figure}
    \centering
    \includegraphics[width=1\textwidth]{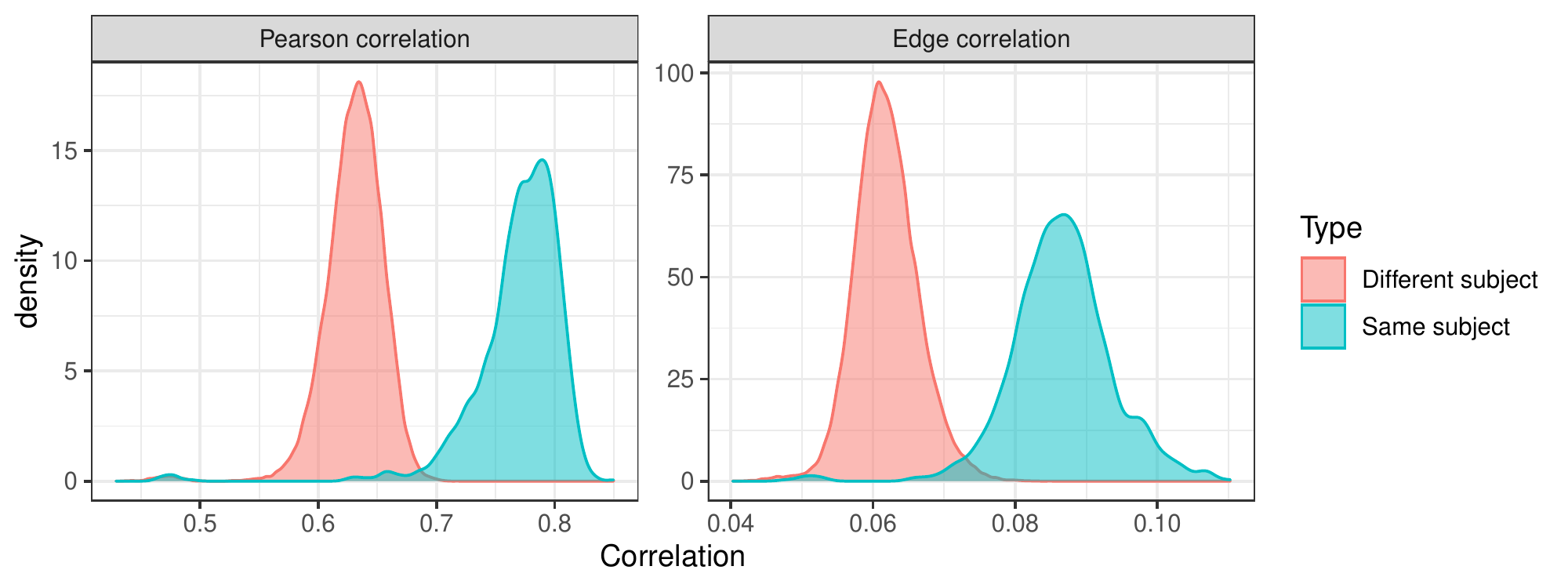}
    \caption{Empirical distribution of the correlation between pairs of graphs. The left panel shows the Pearson correlation between pairs of vectorized adjacency matrices, and the right panel shows the estimated edge correlation in the $\rho$-RDPG model. In both cases, the correlation between graphs corresponding to different subjects is larger than the same-subject correlation.}
    \label{fig:hnu1-correlation}
\end{figure}

 We now evaluate the performance of vertex classification when a sample of $m$ networks is observed.
 We focus on predicting the hemisphere labels using the information of a subset of the vertices to predict the label of the remaining ones.  The classification strategy is based on constructing an unsupervised classical omnibus embedding on the $m$ observed graphs. 
 Note that the theory states that the correlation across the embedding will be higher for the collection of graphs with the higher inherent correlation (i.e., same subject), and this would have the effect of reducing the effective sample size.
 After embedding, the average of the $m$ sets of estimated latent positions is calculated following a similar procedure to Section~\ref{sec:community-recovery-experiments}. The resulting embedding is used to perform 1-nearest neighbor classification for the unlabeled vertices.
 
We evaluate the classification performance of the omnibus embedding by randomly selecting a sample of 10\% of the vertices as the training set, for which the hemisphere labels are known, and the classification accuracy is measured on the remaining 90\% of the vertices with occluded hemisphere label.
The accuracy is computed as the average of 250  replications over randomly selected training vertices and subjects. We compare the performance of two different choices of the training set of networks: first, a random sample of $m$ networks from the same subject, and second, a random sample of $m$ networks from different subjects.
Figure~\ref{fig:hnu1-vertexclassif} shows the average classification accuracy (with bars corresponding to two standard errors) as a function of the number of networks used to estimate the latent positions. We observe that in general the accuracy improves as the sample size $m$ increases, which is expected. However, the figure also shows that the gains in accuracy are much better for the classifier that uses networks from different subjects. Our theory suggests that the presence of correlation inflates the estimation error of the corresponding latent positions of the RDPG model, and thus, the degraded performance of the method is both expected and in line with our theoretical analysis.

\begin{figure}
    \centering
    \includegraphics[width=0.8\textwidth]{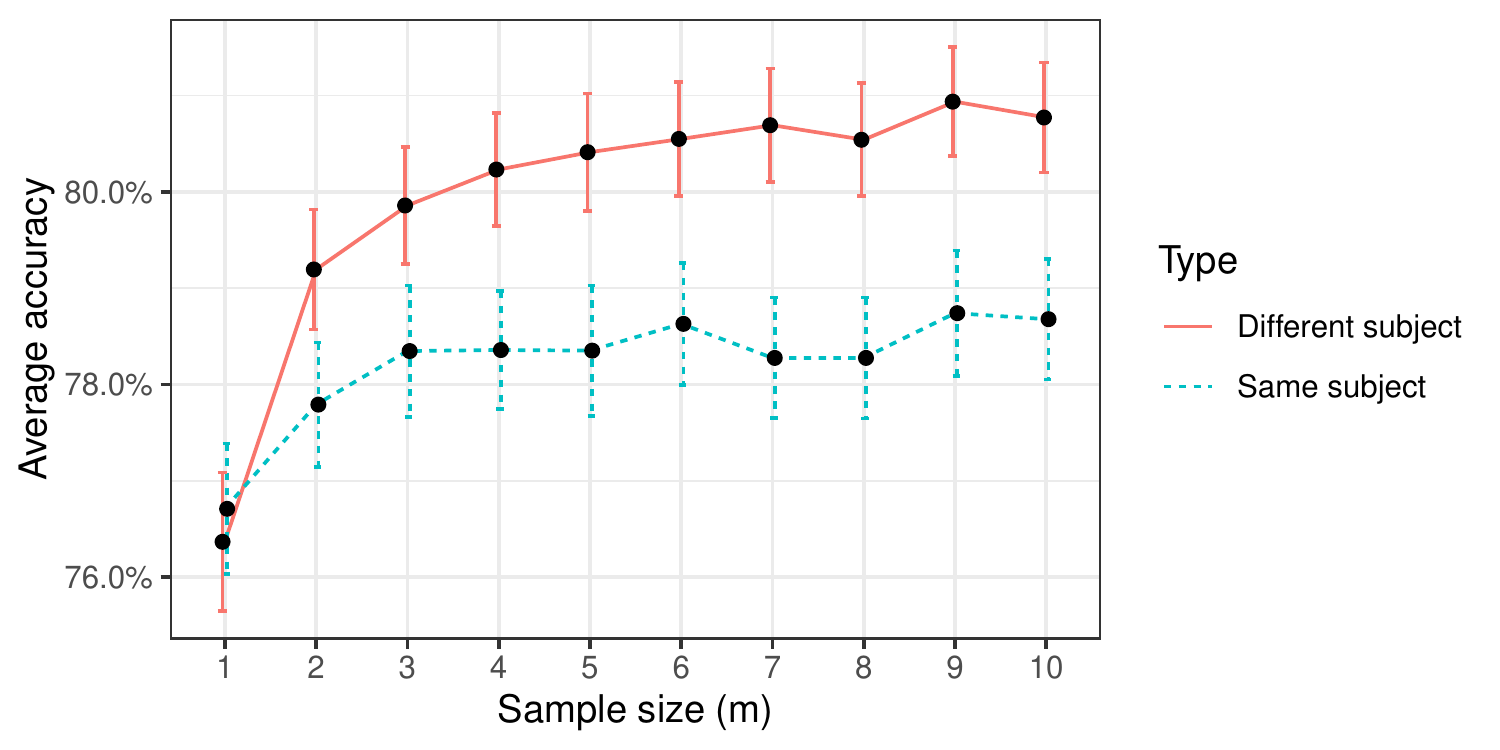}
    \caption{Average vertex classification accuracy for predicting brain hemisphere. For each run, 20 random vertices were used to train the model, and the hemisphere class was predicted on the rest based on a 1-nearest neighbor classifier using the estimated latent positions of the omnibus embedding of $m$ different graphs.  The embeddings constructed from graphs corresponding to different subjects show higher gains in accuracy as the sample size increases.}
    \label{fig:hnu1-vertexclassif}
\end{figure}

\section{Conclusion}
\label{sec:Conc}
The generalized omnibus embedding methodology is both flexible, allowing for an array of off-diagonal weightings, and tractable, in that its component pieces can be rigorously analyzed.
One such component piece, in particular, is the effect of induced and inherent correlations across the embedded networks in the omnibus framework, and in this paper we provide a first step toward understanding these correlations.
The implications of this are many, both in the analysis of real data (Section \ref{sec:aplysia}), and in the setting where we desire the omnibus method to 
imbue identically distributed
embedded data with a specific correlation structure or preserve the signal across inherently correlated networks.
Furthermore, the analysis of correlated graphs in the generalized OMNI framework, 
enables principled, high fidelity applications in time-series (and other naturally correlated) network settings.
Understanding the dual contributions of inherent and induced correlation also allows us to formulate and experiment with the network analogue of classical effective sample size analysis.

Of course, there are many significant follow-on  questions and research directions. For instance, given a feasible correlation structure for a collection of networks, can we choose weights in the generalized omnibus setting that would (from i.i.d.\@ or correlated identically distributed networks) reproduce this structure in the embedded space?
In the i.i.d.\@ setting, representing the desired correlation matrix via
$\boldsymbol{\rho}=[\rho(s_1,s_2)]\in\mathbb{R}^{m\times m}$ and the matrix of generalized OMNI weights $\boldsymbol{\alpha}=[\alpha(s_1,s_2)]\in\mathbb{R}^{m\times m}$, inducing $\boldsymbol{\rho}$ in the embedding space amounts to finding feasible $\boldsymbol{\alpha}$
solving
\begin{equation}
\label{eq:stuff}
    \boldsymbol{\rho}=1-\frac{1}{2m^2}{\bf D}_{\alpha},
\end{equation}
where 
$${\bf D}_{\alpha}=[{\bf D}_{\alpha}(k,\ell)]=[\|\alpha(k,\cdot)-\alpha(\ell,\cdot)\|_2^2].$$
If a solution exists, this might be easily solved via multidimensional scaling \cite{borg2005modern}, if not for the condition that each $\alpha(k,\cdot)$ must be nonnegative, sum to $m$, and be derived from symmetric $C^{(\ell)}$ matrices.
While this is further complicated in the presence of inherent correlation, we are nonetheless exploring possible approximation approaches for this problem, including alternately projecting onto the cone of distance matrices \cite{zhang2014distance} and the polytope defining the constraints on the $\alpha$'s.
An approximate solution for this is essential, since it would allow us to use the generalized omnibus framework to produce a given correlation structure in the embedded space.

In the absence of a general solution, there are a number of $\fM$ we could consider that would produce differing limiting levels of correlation amongst the embedded network pairs.
For example, we could consider the forward omnibus matrix $\fM_{for}$, where 
$$\fM_{for}^{(i,j)}=\begin{cases}
\frac{ (i-1)A^{(j)}+ A^{(i)}}{i}&\text{ if }i>j\\
A^{(i)}&\text{ if }i=j\\
\frac{ (j-1)A^{(i)}+ A^{(j)}}{j}&\text{ if }i<j
\end{cases}$$ 
In contrast to the dampened OMNI setting, which has decaying correlation across graph pairs, in the forward model, the correlation is increasing as the graph indices are increasing, and
can be used to model time series where the correlation is growing as the series progresses.
While our analysis of the limiting induced correlation structure given in Eq.\@ (\ref{eq:induced_corr}) permits for these specific, example-based constructions, a more automated approach is needed for broader applications.

Next, the structure of the generalized omnibus mean matrix $\TP=J_m\otimes P$ is designed to maintain the low-rank property of $P$ in $\TP$ and allows for the leading $d$ eigenvectors of $P$ to be related to those of $\TP$ (indeed $\vec{1}\otimes U_P$ provides a basis for the leading $d$-dimensional eigenspace of $\TP$).
If we consider other structures in the Kronecker product that have the form $E\otimes P$, with the restriction that $\vec{1}$ is a leading eigenvector of $E$, can we replicate the omnibus analysis?  
We are exploring this at present with examples of interest, including $E$ has ring-graph structure.
While more complex $E$ will necessarily violate the low-rank assumption on $\TP$, we are exploring if there is still sufficient concentration (about $\vec{1}\otimes \bX$) of the leading $d$-dimensional eigenspace of $\fM$ in order to  of a low-rank approximation of $\TP$.
We are also exploring extensions of the out-of-sample framework of \cite{levin2018out} to the generalized OMNI framework.    
By using the OMNI framework to embed the core vertices aligned across networks and out-of-sample embedding any remaining vertices, this would allow us to jointly embed networks in the cases when the vertices across the networks are only partially (or errorfully) aligned.
This would, in turn, enable us to use the OMNI framework towards alignment-based tasks such as seeded graph matching (as in \cite{patsolic2014seeded}).

Finally, given recent extensions for matrix concentrations in sparser regimes \cite{lei2020consistency,bandeira2016sharp}, we suspect that our consistency results can be extended to sparser graphs.  We are also currently exploring distributional results and asymptotics for omnibus embeddings in which the latent positions differ 
and in which the OMNI matrix is asymmetric (while still maintaining $\mathbb{E}(\widetilde P)=J_m\otimes P$).
These extensions can be very challenging, so preliminary results will likely be constrained to only a few special cases.  The range of possibilities for follow-on work highlights the important role of joint embeddings in graph inference; as such, the novel intricacies of embedded-space correlation that we examine here are a key component of multiple network inference.
\vspace{3mm}

\noindent{\bf Acknowledgements}
This material is based on research sponsored by the Air Force Research Laboratory and Defense Advanced Research Projects Agency (DARPA) under agreement number FA8750-20-2-1001.
This work is also supported in part by the D3M program of DARPA.
The U.S. Government is authorized
to reproduce and distribute reprints for Governmental purposes notwithstanding any
copyright notation thereon. The views and conclusions contained herein are those of
the authors and should not be interpreted as necessarily representing the official policies
or endorsements, either expressed or implied, of the Air Force Research Laboratory and
DARPA or the U.S. Government. The authors also gratefully acknowledge the support
of NIH grant BRAIN U01-NS108637.
We also gratefully acknowledge illustrative conversations with Profs. Minh Tang, Keith Levin, Daniel Sussman and Carey Priebe that helped shaped this work.


\bibliographystyle{plain}
\bibliography{biblio_summary}
\newpage
\appendix
\section{Proofs of main results}\label{app:proofs_main_results}
Herein we collect the proofs of the main theoretical results of the paper.
Before proving the theorems, we will first establish some convenient asymptotic notation for our theory moving forward.
\begin{definition}
	\label{def:whp}
	Given a sequence of events $\{ E_n \} \in \mathcal{F}$, where $n=1, 2, \cdots$, we say that $E_n$ occurs 
	\begin{itemize}
	    \item[i.] {\em asymptotically almost surely} and write $E_n \text{ a.a.s. }$ if $\p(E_n) \rightarrow 1$ as $n \rightarrow \infty$. 
	    \item[ii.] {\em with high probability},
	and write $E_n \text{ w.h.p. }$,
	if for some $a_0 \geq 2$, there exists finite positive constant $A_0$ depending only on $a_0$ such that $$\p[ E_n^c ] \le A_0n^{-a_0}$$ for all $n$.
	\end{itemize}
	We note that $E_n$ occurring w.h.p. is stronger than $E_n$ occurring a.a.s., as w.h.p.\@ implies, by the Borel-Cantelli Lemma \cite{chung1974course},
	that $\p(\limsup E_n)=1$ and with probability $1$ all but finitely many
	$E_n$ occur.
\end{definition}

\noindent For matrices $A\in\mathbb{R}^{p_1\times p_2}$ and $B\in\mathbb{R}^{p_2\times p_3}$, we will make use of the following common matrix norm identities in the proofs below (where $A_i$ denotes the $i$-th row of $A$, and $\|A\|=\|A\|_2$ denotes the spectral norm of $A$; the symbol ``$:=$'' is used below to denote a definition).
\begin{align*}
\|A\|_p&:=\sup_{x\neq 0}\frac{\|Ax\|_{p} }{\|x\|_p }\text{ for }1\leq p\leq \infty;\\
\|A\|_{2\rightarrow\infty}&:=\sup_{x\neq 0}\frac{\|Ax\|_{\infty} }{\|x\|_2 }; 
\end{align*}

\noindent
The following lemma concerns relationships between the $2 \rightarrow \infty$ bound and other classical matrix norms. For a proof, see \cite{cape2toinfty}.
\begin{lemma}
\label{lemma:2toinf}
For $A\in\R^{p_1\times p_2}$ and $ B\in\R^{p_2\times p_3}$, we have the following:
\begin{align*}
\|A\|_{2\rightarrow \infty}&=\max_{i\in[p_1]}\|A_i\|_2; \\
\|A\|_{2\rightarrow\infty}&\leq \|A\|\leq \sqrt{p_1}\|A\|_{2\rightarrow\infty}; \\ 
\|AB\|_{2\rightarrow \infty}&\leq \|A\|_{2\rightarrow \infty}\|B\|; \\
\|A\|&\leq \sqrt{\|A\|_1\|A\|_\infty }
\end{align*}
where $A_i$ denotes the $i$-th row of $A$.
\end{lemma}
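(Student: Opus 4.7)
The plan is to work directly from the variational definition of the two-to-infinity norm, $\|A\|_{2\to\infty}=\sup_{\|x\|_2=1}\|Ax\|_\infty$, together with Cauchy--Schwarz and the compatibility of the spectral norm with the Euclidean vector norm. Both identities fall out cleanly from recognizing that a matrix times a unit vector, measured in $\infty$-norm, is just the largest (in magnitude) inner product of the unit vector against a row of the matrix.

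For the first identity, I would write $\|Ax\|_\infty=\max_{i\in[p_1]}|A_i x|$, where $A_i$ denotes the $i$-th row of $A$ viewed as a $1\times p_2$ vector. Cauchy--Schwarz gives $|A_i x|\le \|A_i\|_2\|x\|_2$, so taking the supremum over unit $x$ yields $\|A\|_{2\to\infty}\le \max_i\|A_i\|_2$. For the reverse inequality, let $i^\ast=\arg\max_i\|A_i\|_2$ and choose the unit vector $x^\ast=A_{i^\ast}^T/\|A_{i^\ast}\|_2$; evaluating $\|Ax^\ast\|_\infty$ gives at least $|A_{i^\ast}x^\ast|=\|A_{i^\ast}\|_2$. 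Combining the two directions proves equality.

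For the submultiplicative bound, I would exploit the identity just established: $\|AB\|_{2\to\infty}=\max_i\|(AB)_i\|_2=\max_i\|A_iB\|_2$. Writing $A_iB=(B^T A_i^T)^T$ and using the standard inequality $\|Mv\|_2\le \|M\|_2\|v\|_2$ (applied with $M=B^T$, $v=A_i^T$, and recalling $\|B^T\|_2=\|B\|_2$), I get $\|A_iB\|_2\le \|B\|_2\|A_i\|_2$. Taking the maximum over $i$ and re-applying the first identity, I conclude $\|AB\|_{2\to\infty}\le \|B\|_2\|A\|_{2\to\infty}$.

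Neither step presents any real obstacle; the whole lemma is essentially a bookkeeping exercise between the row-wise description and the operator definition. The only thing to be careful about is the convention that $A_i$ denotes a row (rather than a column) so that the matrix product $A_iB$ makes sense dimensionally, and that the transpose does not change the spectral norm. Since the result is standard (see the cited reference \cite{cape2toinfty}), my proof would likely be presented compactly in at most half a page.
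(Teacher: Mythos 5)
Your proof is correct. The paper itself does not prove this lemma---it simply defers to the cited reference \cite{cape2toinfty}---and your argument (Cauchy--Schwarz for the upper bound, the normalized row $A_{i^\ast}^T/\|A_{i^\ast}\|_2$ as the extremal test vector for the lower bound, and the row-wise identity plus $\|B^T\|_2=\|B\|_2$ for the submultiplicative inequality) is exactly the standard one given there. The only cosmetic omission is the degenerate case $A=0$, where the extremal vector is undefined but both sides vanish trivially.
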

\subsection{Correlation in the classical omnibus embedding and its relationship to correlated stochastic block model graphs}
\label{sec:latent}

We begin with the proof of Theorem \ref{thm:rhoCLT} from Section \ref{sec:corr}.
\begin{proof} (Induced correlation in classical OMNI and its relationship with correlated block models)

To ease notation, we will frequently suppress the explicit dependence on $n$ in the subscript of $\TP=\TP_n$, $B^{(k)}_n$, etc.,  noting that this dependence is to be implicitly understood throughout.
For each $n\geq 1$, we write the spectral decomposition of the positive semidefinite $P_n$ via 
$$P_n=\bX_n\bX_n^T=U_{P_n} S_{P_n} U_{P_n}^T.$$

Let the singular value decomposition of $U_{P}^TU_{B^{(1)}}$ be denoted $W_{11} D_1 W_{12}^T$ and define $W^{*,1}=W_{11}W_{12}^T$.
Similarly, let the singular value decomposition of $U_{P}^TU_{B^{(2)}}$ be denoted $W_{21} D_2 W_{22}^T$ and define $W^{*,2}=W_{21}W_{22}^T$.
As in the proof of Theorem 9 (p 76) in \cite{athreya_survey}, we have that for each $k=1,2$ and fixed index $i$, 
	\begin{equation} \label{eq:key} 
	\begin{aligned}
	\sqrt{n}\left(U_{B^{(k)}} S_{B^{(k)}}^{1/2} - U_P S_P^{1/2} W^{*,k}\right)_i
	=& \sqrt{n}\left((B^{(k)}-P)U_P S_P^{-1/2} W^{*,k}\right)_i+O(n^{-1/2}\log n)
	\end{aligned} \end{equation}
	with high probability.
	Letting $W_n$ be a sequence of orthogonal matrices such that 
	$$U_{P_n}S_{P_n}^{1/2}W_n=\bX_n$$ for all $n\geq 1$, we have then that, again with high probability, 
		\begin{equation} \label{eq:key11} 
	\begin{aligned}
	\sqrt{n}\left(U_{B^{(k)}} S_{B^{(k)}}^{1/2}(W^{*,k})^TW - U_P S_P^{1/2}W \right)_i
	=& \sqrt{n}\left((B^{(k)}-P)U_P S_P^{-1/2} W\right)_i+O(n^{-1/2}\log n)
	\end{aligned} \end{equation}
It follows then that for fixed $i$, we have with high probability
\begin{equation} \label{eq:key2} 
	\begin{aligned}
	&\sqrt{n}\left(U_{B^{(1)}} S_{B^{(1)}}^{1/2}(W^{*,1})^TW - U_{B^{(2)}} S_{B^{(2)}}^{1/2}(W^{*,2})^TW\right)_i
	\\
	&= 
	\sqrt{n}\left(U_{B^{(1)}} S_{B^{(1)}}^{1/2}(W^{*,1})^TW - U_P S_P^{1/2}W \right)_i+
	\sqrt{n}\left(U_P S_P^{1/2} W-
	U_{B^{(2)}} S_{B^{(2)}}^{1/2}(W^{*,2})^TW \right)_i\\
	&=\sqrt{n}\left((B^{(1)}-P)U_P S_P^{-1/2} W\right)_i-\sqrt{n}\left((B^{(2)}-P)U_P S_P^{-1/2} W\right)_i+O\left(\frac{\log n}{\sqrt{n}}\right)\\
	&=\sqrt{n}\left((B^{(1)}-B^{(2)})U_P S_P^{-1/2} W\right)_i+O\left(\frac{\log n}{\sqrt{n}}\right).
	\end{aligned} \end{equation}
Next, note that 
\begin{align*}
\sqrt{n}\left((B^{(1)}-B^{(2)})U_P S_P^{-1/2} W\right)_i&=\sqrt{n}\left((B^{(1)}-B^{(2)})\bX \right)_iW^TS_P^{-1}W\\
&=\sqrt{n}\left(\sum_j(B^{(1)}_{i,j}-B^{(2)}_{i,j})X_j\right)W^TS_P^{-1}W\\
&=\left(n^{-1/2}\sum_j(B^{(1)}_{i,j}-B^{(2)}_{i,j})X_j\right)(nW^TS_P^{-1}W).
\end{align*}
Conditioning on $X_i=x_i$,
\begin{align*}
n^{-1/2}\sum_j(B^{(1)}_{i,j}-B^{(2)}_{i,j})X_j
\end{align*}
is a scaled sum of $n-1$ i.i.d random variables (the $(B^{(1)}_{i,j}-B^{(2)}_{i,j})X_j$), each with mean
\begin{align*}
    \Ex(&(B^{(1)}_{i,j}-B^{(2)}_{i,j})X_j)=\Ex(\Ex((B^{(1)}_{i,j}-B^{(2)}_{i,j})X_j|X_j))\\
    &=\Ex(X_j \Ex(B^{(1)}_{i,j}-B^{(2)}_{i,j}|X_j))=0
\end{align*} 
and covariance matrix $$\widetilde{\Sigma}(x_i)=2(1-\rho)\EX[X_j X_j^T(x_i^T X_j-(x_i^T X_j)^2)],$$ 
as
\begin{align*}
\text{Cov}((B^{(1)}_{i,j}-B^{(2)}_{i,j})X_j)  &= \EX[(B^{(1)}_{i,j}-B^{(2)}_{i,j})^2X_jX_j^T] \\ 
&=\Ex(\EX[(B^{(1)}_{i,j}-B^{(2)}_{i,j})^2X_jX_j^T|X_j]) \\
&=\Ex(X_jX_j^T\EX[(B^{(1)}_{i,j}-B^{(2)}_{i,j})^2|X_j] )\\
&=\Ex(X_jX_j^T\EX[((B^{(1)}_{i,j})^2+(B^{(2)}_{i,j})^2-2B^{(1)}_{i,j}B^{(2)}_{i,j})|X_j] )\\
&=\Ex(X_jX_j^T
[2x_i^TX_j-2x_i^TX_j(x_i^TX_j+\rho(1-x_i^TX_j))] 
)\\
&=\Ex(X_jX_j^T
[2x_i^TX_j-2(x_i^TX_j)^2-\rho2(1-x_i^TX_j)x_i^TX_j])\\ 
&=2(1-\rho)\EX[X_j X_j^T(x_i^T X_j-(x_i^T X_j)^2)].
\end{align*}
The classical multivariate central limit theorem then yields
\begin{equation}
    \label{eq:normal}
n^{-1/2}\sum_j(B^{(1)}_{i,j}-B^{(2)}_{i,j})X_j \xrightarrow{D} \mathcal{N}(0,\widetilde{\Sigma}(x_i)).
\end{equation}
The strong law of large numbers (SLLN) ensures that $\frac{1}{n}\bX^T\bX=\frac{1}{n}W^TS_PW\xrightarrow{a.s} \Delta$ and therefore, 
$nW^TS_P^{-1}W\xrightarrow{a.s} \Delta^{-1}$.
Therefore, by the multivariate Slutsky's Theorem, conditional on $X_i=x_i$, we have that 
\begin{align}
\label{eq:conv}
\sqrt{n}\left((B^{(1)}-B^{(2)})U_P S_P^{-1/2} W\right)_i \xrightarrow{D} \mathcal{N}(0,\widetilde\Sigma(x_i,\rho)).
\end{align}
By setting $W^{(1)}=(W^{*,1})^T W$ and $ W^{(2)}=(W^{*,2})^TW$, the requisite result then follows from multivariate Slutsky's applied to Eq. (\ref{eq:key2}) and (\ref{eq:conv}),
and integrating the above display over the possible values of $x_i$ with respect to distribution $F$.
\end{proof}

\subsection{Consistency of generalized omnibus embeddings}
\label{sec:genomniconsis}
Here, we prove our consistency result, Theorem \ref{theorem:genOMNI_consistency} from Section \ref{sec:genomni}, which is, in part, what makes the generalized omnibus embedding useful for estimation.
To prove Theorem \ref{theorem:genOMNI_consistency}, we will need a Bernstein concentration inequality bound given immediately below, in \ref{ssec:bern}; and number of intermediate supporting lemmas (Lemmas \ref{lemma:eig_order}--\ref{lem:Q1}) all of which are suitably adapted from \cite{levin_omni_2017}.

Again, to ease notation, we will on occasion suppress the explicit dependence on $n$ in the subscript of $\bX=\bX_n$, $\fM=\fM_n$, $\TP=\TP_n$, etc., noting that this dependence is implicitly understood throughout.
Recall that the spectral decomposition of $\TP$ is given by
$$\widetilde{P}=\Ex(\fM)=U_{\TP}S_{\TP}U_{\TP}^T,$$ 
where $U_{\TP}\in\mathbb{R}^{mn\times d}$ and $
S_{\TP}\in\mathbb{R}^{d\times d}$.
Also recall that the adjacency spectral embedding of $\fM$ is given by ASE$(\fM,d)=U_{\fM}S_{\fM}^{1/2}$.
\subsubsection{Concentration inequality via matrix Bernstein}
\label{ssec:bern}
\begin{lemma}
\label{lemma:conc}
Let $F$ be a distribution on a set $\mathcal{X}\in \R^d $ satisfying $\langle x,x' \rangle \in [0,1]$ for all $x,x'\in \mathcal{X}$. 
Let $X_1,X_2,\cdots$ $,X_n,Y\stackrel{i.i.d}{\sim}F$, and let $P=\bX\bX^T$ where $\bX=[X_1^T|X_2^T|\cdots|X_n^T]^T$. Let $$(A_n^{(1)},A_n^{(2)},\cdots,A_n^{(m)},\bX_n)\sim \mathrm{JRDPG}(F,n,m,R)$$ be a sequence of correlated $\mathrm{RDPG}$ random graphs, and let $\fM_n$ denote the generalized omnibus matrix as in Definition \ref{def:genOMNI}. Then for $n$ sufficiently large, with high probability, 
$$\|\mathfrak{M}-\EX \mathfrak{M}\|\leq 4m\sqrt{(n-1)\log mn}.
$$
\end{lemma}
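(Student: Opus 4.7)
The plan is to apply Tropp's matrix Bernstein inequality, conditional on $\bX$, to the decomposition
\[
\fM-\TP=\sum_{\ell=1}^{m}\sum_{1\le p<q\le n}\big(A^{(\ell)}_{pq}-P_{pq}\big)\,\big(C_\ell\otimes E_{pq}\big),
\]
where $C_\ell=[c_\ell^{(i,j)}]_{i,j\in[m]}$ is the (symmetric) $m\times m$ coefficient matrix for graph $\ell$, and $E_{pq}\in\mathbb{R}^{n\times n}$ is the symmetric elementary matrix with $1$'s in positions $(p,q)$ and $(q,p)$. Under the JRDPG assumption, conditional on $\bX$ the scalar weights $A^{(\ell)}_{pq}-P_{pq}$ are mutually independent, mean-zero, and bounded by $1$ in absolute value, so the summands form an independent family of mean-zero symmetric matrices, exactly the setting of matrix Bernstein.

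For the uniform summand bound, the Kronecker structure gives $\|C_\ell\otimes E_{pq}\|=\|C_\ell\|\cdot\|E_{pq}\|=\|C_\ell\|$. Using the symmetry $c_\ell^{(i,j)}=c_\ell^{(j,i)}$ (forced by assumption 3 of Definition \ref{def:genOMNI}), we have $\|C_\ell\|^2=\|C_\ell^2\|$, and $(C_\ell^2)_{ij}=\sum_k c_\ell^{(i,k)}c_\ell^{(j,k)}\le \widetilde L$ by definition of $\widetilde L$, whence $\|C_\ell\|\le\sqrt{m\widetilde L}=:R$.

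For the variance statistic, the identities $E_{pq}^2=E_{pp}+E_{qq}$ and $\sum_{p<q}(E_{pp}+E_{qq})=(n-1)I_n$, together with $\mathbb{E}\big[(A^{(\ell)}_{pq}-P_{pq})^2\mid\bX\big]=P_{pq}(1-P_{pq})\le 1/4$, yield
\[
V:=\Big\|\sum_{\ell,p<q}\mathbb{E}\big[(A^{(\ell)}_{pq}-P_{pq})^2\big]\,(C_\ell\otimes E_{pq})^2\Big\|\;\le\;\frac{n-1}{4}\Big\|\sum_{\ell=1}^{m}C_\ell^2\Big\|\;\le\;\frac{m^2\widetilde L(n-1)}{4},
\]
where the last step uses $\big(\sum_\ell C_\ell^2\big)_{ij}\le m\widetilde L$ and the standard max-row-sum upper bound on the spectral norm of a symmetric nonnegative matrix.

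Matrix Bernstein then gives $\PX(\|\fM-\TP\|\ge t\mid\bX)\le 2mn\exp\!\big(-t^2/(2V+2Rt/3)\big)$. Setting $t=Cm\sqrt{\widetilde L(n-1)\log mn}$, the $V$ contribution dominates $Rt/3$ asymptotically (the former is linear in $n$ while the latter scales as $\sqrt{n\log(mn)}$), and choosing $C$ sufficiently large forces the right-hand side to be $O((mn)^{-a_0})$ for any prescribed $a_0\ge 2$; the bound is deterministic in $\bX$, so deconditioning on $\bX$ delivers the first inequality w.h.p. in the sense of Definition \ref{def:whp}. The second inequality is then immediate from $\widetilde L\le m$, which follows from $c_\ell^{(i,k)}\in[0,1]$ and there being $m$ summands over $k$. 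The one step I expect to drive the rest is recognizing the Kronecker factorization $C_\ell\otimes E_{pq}$ of each summand, because it is precisely this structure that allows the $m$-dependence (through the coefficient matrices $C_\ell$) and the $n$-dependence (through the graph edges $E_{pq}$) to decouple cleanly in both $R$ and $V$.
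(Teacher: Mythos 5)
Your proposal is correct and follows essentially the same route as the paper: the identical Kronecker decomposition $\fM-\E\fM=\sum_{\ell}\sum_{p<q}(A^{(\ell)}_{pq}-P_{pq})(C_\ell\otimes E_{pq})$, a matrix Bernstein bound conditional on $\bX$ with the same variance statistic of order $m^2\widetilde L(n-1)$, and the same deconditioning step. The only cosmetic difference is that you bound the summand norm by $\sqrt{m\widetilde L}$ via $\|C_\ell\|^2=\|C_\ell^2\|$ whereas the paper uses the max-row-sum bound $\|C_\ell\|\le L\le m$; both are harmless since the variance term dominates.
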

\begin{proof}
Condition on $\bX$ and let $P=\bX\bX^T$, so that 
\[\EX \mathfrak{M}=\widetilde{P}=\begin{bmatrix}P&P&\cdots&&P\\P&P&\cdots&&P\\\vdots&\vdots&\ddots&&\vdots\\P&P&\cdots&&P\end{bmatrix}.\]
For all $l\in[m]$ and $i,j\in[n],$ we define an auxiliary block matrix $E^{(l)}_{i,j}\in\R^{mn\times mn}$ which will help us express the difference $\mathfrak{M}-\EX \mathfrak{M}$ as a sum of independent Hermitian matrices, which will allow us to apply Bernstein's matrix bound (see, for example, Theorem 5.4.1 in \cite{vershynin2018high}).
For $i\neq j$, let 
$e^{ij}=e_ie_j^T+e_je_i^T$ 
where $e_i$ is a vector in $\R^n$ with all its entries equal to 0 except the i-th entry which is equal to 1,
so that $e^{ij}$ is a matrix whose entries are all equal to 0 except the $(i,j)$th and $(j,i)$th entries which are equal to 1.
For $\ell\in[m]$, we define the $m\times m $  matrix $C^{(\ell)}$ (symmetric because $\mathfrak{M}$ is symmetric)  as follows, 
 \[ C^{(\ell)}=\begin{bmatrix}c_\ell^{(1,1)}&c_\ell^{(1,2)}&\cdots&&c_\ell^{(1,m)}\\
 c_\ell^{(2,1)}&c_\ell^{(2,2)}&\cdots&&c_\ell^{(2,m)}\\\vdots&\vdots&\ddots&&\vdots\\
 c_\ell^{(m,1)}&c_\ell^{(m,2)}&\cdots&&c_\ell^{(m,m)}\end{bmatrix}.\]
We then define the $mn\times mn$ block (symmetric) matrix $E^{(\ell)}_{i,j}$ as the Kronecker product of $C^{(\ell)}$ and $e^{ij}$, i.e., $E^{(\ell)}_{i,j}=C^{(\ell)}\otimes e^{ij}$. Now, we can write $\mathfrak{M}-\EX \mathfrak{M}$ as 
\begin{align*} 
    \mathfrak{M}-\EX \mathfrak{M}= \sum_{1\leq \ell\leq m}\sum_{1\leq i<j\leq n}(A^{(\ell)}_{ij}-P_{ij})E^{(\ell)}_{i,j} = \sum_{1\leq i<j\leq n}\Big(\underbrace{\sum_{1\leq \ell\leq m}(A^{(\ell)}_{ij}-P_{ij})E^{(\ell)}_{i,j}}_{:=B_{i,j,m}}\Big).
\end{align*}
These $B_{i,j,m}$ are then symmetric, mean-zero, independent matrices, as we have grouped edges indexed by the same $(i,j)$ pair across graphs to account for the across graph correlation and achieve the desired independence.

As we have written $\mathfrak{M}-\EX \mathfrak{M}$ as the sum of $\binom{n}{2}$ symmetric, mean-zero, independent matrices
$B_{i,j,m}$ where for all $i,j\in[n]$, we have that
$$\|B_{i,j,m}\|\leq m\max\limits_{\ell\in[m]}\|C^{(\ell)}\|\leq m\Big(\max\limits_{\ell\in[m]}\max\limits_{q\in[m]}\sum\limits_{k=1}^{m}c_\ell^{(q,k)}\Big).$$We can now apply the matrix Bernstein inequality to derive the desired concentration of $\mathfrak{M}-\EX \mathfrak{M}$.
To this end, let 
\begin{align*}
L&:=\max\limits_{\ell\in[m]}\max\limits_{q\in[m]}\sum\limits_{k=1}^{m}c_\ell^{(q,k)}\leq m;
\end{align*}
\noindent 
To apply the matrix Bernstein inequality, it remains to compute the variance term
\begin{align}
\label{eq:varbndMB}
\nu(\mathfrak{M}-\EX \mathfrak{M})&=\left\|
\sum_{1\leq i<j\leq n}\Ex[B_{i,j,m}^2]\right\|\notag\\
&\leq\left\|\sum_{1\leq i<j\leq n}\sum_{1\leq \ell\leq m}\Ex\left[(A^{(\ell)}_{ij}-P_{ij})^2(E^{(\ell)}_{i,j})^2\right]\right\| \notag\\ &\hspace{1cm}  +2\left\|\sum_{1\leq i<j\leq n}\sum_{ \ell_1<\ell_2}\Ex\left[(A^{(\ell_1)}_{ij}-P_{ij})(A^{(\ell_2)}_{ij}-P_{ij})E^{(\ell_1)}_{i,j}E^{(\ell_2)}_{i,j}\right]\right\|.
\end{align}

\noindent
To bound the variance term $\nu(\mathfrak{M}-\EX \mathfrak{M})$ we will bound the two terms in Eq.\@ (\ref{eq:varbndMB}) independently. Let $D_{ij}=e^{ij}e^{ij}=e_ie_i^T+e_je_j^T\in\R^{n\times n}$. 
The mixed-product property of Kronecker products implies that 
$$
E^{(\ell_1)}_{i,j}E^{(\ell_2)}_{i,j}=(C^{(\ell_1)}\otimes e^{ij})(C^{(\ell_2)}\otimes e^{ij})=C^{(\ell_1)}C^{(\ell_2)}\otimes D_{ij}.
$$
Note that for any symmetric matrix $A$, we have that $\|A\|\leq\sqrt{\|A\|_1\|A\|_{\infty}}=\|A\|_\infty$
(see, for example, \cite{horn85:_matrix_analy} Ex. 5.6.21), 
where $\|\cdot\|_1$ is the maximum column sum matrix norm and $\|\cdot\|_\infty$ the maximum row sum matrix norm.
The first term is then bounded above via
\begin{align*}
\left\|\sum_{1\leq i<j\leq n}\sum_{1\leq \ell\leq m}\Ex\left[(A^{(\ell)}_{ij}-P_{ij})^2(E^{(\ell)}_{i,j})^2\right]\right\| \notag&= \Big\|\underbrace{\sum_{1\leq \ell\leq m}\sum_{1\leq i<j\leq n}P_{ij}(1-P_{ij})[(C^{(\ell)})^2\otimes D_{ij}]}_{:=\Delta_1}\Big\|\\&\leq\|\Delta_1\|_{\infty}\\
&\leq \frac{1}{4}(n-1)\max_{q\in[m]} \sum_{\ell=1}^{m}\sum_{r=1}^{m}\sum_{k=1}^{m}c_\ell^{(q,k)}c_\ell^{(k,r)}\\
&=\frac{1}{4}(n-1)\max_{q\in[m]} \sum_{\ell=1}^{m}\sum_{k=1}^{m}c_\ell^{(q,k)}\alpha(k,\ell)\\
&\leq \frac{1}{4}(n-1)mL.
\end{align*}
The second term is then bounded above via
\begin{align*}
    &\left\|\sum_{1\leq i<j\leq n}\sum_{ \ell_1<\ell_2}\Ex\left[(A^{(\ell_1)}_{ij}-P_{ij})(A^{(\ell_2)}_{ij}-P_{ij})E^{(\ell_1)}_{i,j}E^{(\ell_2)}_{i,j}\right]\right\|
    \\&\hspace{5mm}= \Big\|\underbrace{\sum_{1\leq i<j\leq n}P_{ij}(1-P_{ij})\sum_{\ell_1< \ell_2}\rho_{\ell_1,\ell_2}C^{(\ell_1)}C^{(\ell_2)}\otimes D_{ij}}_{:=\Delta_2}\Big\|\\
    &\hspace{5mm}\leq\sqrt{\|\Delta_2\|_{\infty}\|\Delta_2\|_1}\\
    &\hspace{5mm}\leq \frac{1}{4}(n-1)
    \sqrt{
    \left(\max_{q\in[m]} \sum_{\ell_1<\ell_2}\sum_{r=1}^m\sum_{k=1}^{m}c_{\ell_1}^{(q,k)}c_{\ell_2}^{(k,r)}\right)\left(\max_{r\in[m]} \sum_{\ell_1<\ell_2}\sum_{q=1}^m\sum_{k=1}^{m}c_{\ell_1}^{(q,k)}c_{\ell_2}^{(k,r)} \right)}.
\end{align*}
Note that 
\begin{align*}
\max_{q\in[m]} \sum_{\ell_1<\ell_2}\sum_{r=1}^m\sum_{k=1}^{m}c_{\ell_1}^{(q,k)}c_{\ell_2}^{(k,r)}&= \max_{q\in[m]} \sum_{\ell_1<\ell_2}\sum_{k=1}^{m}c_{\ell_1}^{(q,k)}\alpha(k,\ell_2)\\
&\leq m\cdot \max_{q\in[m]} \sum_{\ell_1=1}^m\sum_{k=1}^{m}c_{\ell_1}^{(q,k)}\\
&= m\cdot \max_{q\in[m]} \sum_{\ell_1=1}^m\alpha(q,\ell_1)\\
&= m^2,
\end{align*}
and that a similar bound holds for $\max_{r\in[m]} \sum_{\ell_1<\ell_2}\sum_{q=1}^m\sum_{k=1}^{m}c_{\ell_1}^{(q,k)}c_{\ell_2}^{(k,r)}$.
Therefore, we have that the second term in the variance is bounded by $(n-1)m^2/4$ yielding a total variance bound of 
$$\nu(\mathfrak{M}-\EX \mathfrak{M})\leq (n-1)m^2.$$
To apply the matrix Bernstein bound, let $t=4m\sqrt{(n-1)\log mn}$, and then
\begin{align*}
    \PX\left[\|\mathfrak{M}-\EX \mathfrak{M}\|\geq 4m\sqrt{(n-1)\log mn}\right]&\leq 2nm\cdot \text{exp}\left\{\frac{-8m^2(n-1)\log mn}{\nu(\mathfrak{M}-\EX \mathfrak{M})+\frac{4m^2L\sqrt{(n-1)\log mn}}{3}}
    \right\}\\
    &\leq 
    2nm\cdot \text{exp}\left\{\frac{-8m^2(n-1)\log mn}{(n-1)m^2+\frac{4m^3\sqrt{(n-1)\log mn}}{3}}
    \right\}\\
    &\leq 
    2nm\cdot \text{exp}\left\{\frac{-8m^2(n-1)\log mn}{2(n-1)m^2}
    \right\}\\
    &= 2m^{-3}n^{-3},
\end{align*}
where the third inequality holds for $n$ sufficiently large ($\sqrt{n}\geq \frac{4}{3}m\log mn$) as we assume $m$ is not growing in $n$.
Integrating over the $\bX$ then yields the desired result.
\end{proof}
\subsubsection{Supporting Lemmas}
\begin{lemma}[Observation 2 in \cite{levin_omni_2017}]
\label{lemma:eig_order}
Let $F$ be a distribution on a set $\mathcal{X}\in \R^d $ satisfying $\langle x,x' \rangle \in [0,1]$ for all $x,x'\in \mathcal{X}$. 
Let $X_1,X_2,\cdots$ $,X_n,Y\stackrel{i.i.d}{\sim}F$, and let $P=\bX\bX^T$ where $\bX=[X_1^T|X_2^T|\cdots|X_n^T]^T$. 
Then with probability at least $1-\frac{d^2}{n^2}$, we have that
$$|\lambda_i(P)-n\lambda_i \Ex(YY^T)|\leq 2d\sqrt{n\log n}.$$
Furthermore, with high probability there exists a constant $C>0$ such that for all $i\in[d]$, $\lambda_i(P)\geq Cn\delta$ and 
$\lambda_i(\TP)\geq Cnm\delta$, where $\delta=\lambda_d(\Ex(YY^T))$.
\end{lemma}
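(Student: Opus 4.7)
The plan is to reduce the first claim to a concentration estimate on the $d \times d$ Gram matrix $\bX^T\bX = \sum_{i=1}^n X_iX_i^T$ via Weyl's inequality, and then pass to $\widetilde P$ using the Kronecker structure. First I would observe that $P = \bX\bX^T$ and $\bX^T\bX$ share the same nonzero spectrum, and that $\bX^T\bX/n$ is an empirical average of the i.i.d.\ matrices $X_iX_i^T \in \R^{d\times d}$ with $\Ex[X_iX_i^T] = \Ex[YY^T]$. The condition $\langle x,x\rangle \in [0,1]$ for $x\in\mathcal{X}$ forces $\|X_i\|_2 \le 1$, hence every entry of $X_iX_i^T$ lies in $[-1,1]$.

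Next, I would apply Hoeffding's inequality entry-wise: for each $(k,\ell)\in[d]^2$,
\[
\p\!\left(\big|(\bX^T\bX/n)_{k\ell} - \Ex[YY^T]_{k\ell}\big| \ge 2\sqrt{(\log n)/n}\right) \le 2/n^2.
\]
A union bound over the $d^2$ entries yields the bound $\|\bX^T\bX/n - \Ex[YY^T]\|_F \le 2d\sqrt{(\log n)/n}$ with probability at least $1 - 2d^2/n^2$ (up to a harmless constant that can be absorbed for $n$ large). Since the spectral norm is dominated by the Frobenius norm, Weyl's inequality gives $|\lambda_i(\bX^T\bX) - n\lambda_i(\Ex[YY^T])| \le 2d\sqrt{n\log n}$, which is exactly the claimed bound on $|\lambda_i(P) - n\lambda_i(\Ex[YY^T])|$ because $\lambda_i(P) = \lambda_i(\bX^T\bX)$ for $i\in[d]$.

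The second claim then follows in two steps. Since $\Delta := \Ex[YY^T]$ has full rank $d$ by assumption, $\lambda_d(\Ex[YY^T]) = \delta > 0$ is a fixed positive constant. Hence $\lambda_i(P) \ge n\delta - 2d\sqrt{n\log n} \ge \tfrac12 n\delta$ for all $n$ sufficiently large, giving the first inequality with $C = 1/2$. For $\widetilde P = J_m \otimes P$, the standard Kronecker identity $\mathrm{spec}(A\otimes B) = \{\lambda_i(A)\lambda_j(B)\}_{i,j}$, together with the fact that $J_m$ has eigenvalues $m$ (with multiplicity $1$) and $0$ (with multiplicity $m-1$), shows that the $d$ largest eigenvalues of $\widetilde P$ are exactly $m\lambda_1(P),\ldots,m\lambda_d(P)$. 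The bound $\lambda_i(\widetilde P) \ge Cnm\delta$ follows immediately.

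No step here is a serious obstacle; the entire argument is essentially an entry-wise Hoeffding bound followed by Weyl, and then a spectral computation for $J_m \otimes P$. The only mild care point is tracking constants so that the probability bound tightens to $1 - d^2/n^2$ as stated (this requires choosing the constant in Hoeffding's threshold slightly more carefully, or simply absorbing the factor of $2$ into the ``with high probability'' statement of Definition~\ref{def:whp}).
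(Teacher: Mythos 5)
Your argument is correct and is essentially the standard proof of this result: the paper itself imports the lemma from Observation 2 of \cite{levin_omni_2017} without reproving it, and that source's argument is exactly your route (entrywise Hoeffding on $\bX^T\bX=\sum_i X_iX_i^T$, a union bound over the $d^2$ entries, Weyl's inequality via the Frobenius norm, and the Kronecker spectral identity for $\TP=J_m\otimes P$). The only loose end is the constant in the probability bound ($1-2d^2/n^2$ versus $1-d^2/n^2$), which you correctly flag as harmless since the lemma is only ever invoked through the paper's ``with high probability'' convention of Definition \ref{def:whp}.
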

\noindent Assuming $\delta>0$, combining Lemma \ref{lemma:eig_order} with the fact that the rows of $U_{\TP}S_{\TP}^{1/2}$ are bounded in Euclidean norm by $1$,  we obtain 
\begin{equation}
\label{eq:U_P}
    \|U_{\TP}\|_{2\rightarrow\infty}\leq C(mn)^{-1/2} \hspace{.1cm}\text{ w.h.p.}
\end{equation}
\begin{lemma}
\label{lemma:evector_subspace}
With notation as in Lemma \ref{lemma:eig_order}, assume that $\delta>0$ so that $\Ex(YY^T)$ is full rank.
Let the singular value decomposition of $U_{\TP}^TU_{\mathfrak{M}}\in\mathbb{R}^{d\times d}$ be given by $V_1 \Sigma V_2^T$.
Then, there exists a constant $C>0$ such that
\begin{align*}
    \|U_{\TP}^TU_{\mathfrak{M}}-V_1V_2^T\|_F\leq C\frac{\log mn}{n} \hspace{0.5cm} \text{w.h.p.}
\end{align*}
\end{lemma}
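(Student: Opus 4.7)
The plan is to combine the concentration estimate from Lemma \ref{lemma:conc} and the eigengap estimate from Lemma \ref{lemma:eig_order} via the Davis--Kahan theorem, and then convert the resulting $\sin\Theta$ bound into the desired statement using the standard relationship between the singular values of $U_{\TP}^T U_{\fM}$ and the principal angles between $\mathrm{col}(U_{\TP})$ and $\mathrm{col}(U_{\fM})$.

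First I would observe that, since $\TP$ is exactly rank $d$, its $(d{+}1)$-st eigenvalue is $0$ and the gap controlling the top-$d$ invariant subspace is simply $\lambda_d(\TP)$. By Lemma \ref{lemma:eig_order}, $\lambda_d(\TP)\geq Cnm\delta$ with high probability; combined with $\|\fM-\TP\|\leq Cm^{3/2}\sqrt{(n-1)\log mn}$ from Lemma \ref{lemma:conc}, the useful variant of Davis--Kahan (\cite{DK_usefulvariant}) yields
\begin{equation*}
\|\sin\Theta(U_{\TP},U_{\fM})\|_F \;\leq\; \frac{2\sqrt{d}\,\|\fM-\TP\|}{\lambda_d(\TP)} \;\leq\; \frac{C\sqrt{m\log mn}}{\sqrt{n}}
\end{equation*}
with high probability, where the final constant absorbs $d$ and $\delta^{-1}$.

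Second, I would relate the left-hand side of the lemma to the bound above. Writing the SVD $U_{\TP}^T U_{\fM}=V_1\Sigma V_2^T$ with $\Sigma=\mathrm{diag}(\sigma_1,\ldots,\sigma_d)$, the orthogonal polar factor $V_1V_2^T$ is the closest orthogonal matrix (in both spectral and Frobenius norm) to $U_{\TP}^T U_{\fM}$, and hence
\begin{equation*}
\|U_{\TP}^T U_{\fM} - V_1 V_2^T\|_F^2 \;=\; \sum_{i=1}^d (1-\sigma_i)^2.
\end{equation*}
The $\sigma_i$ are the cosines $\cos\theta_i$ of the principal angles $\theta_i\in[0,\pi/2]$, so the half-angle identities $1-\cos\theta_i = 2\sin^2(\theta_i/2)$ and $\sin\theta_i = 2\sin(\theta_i/2)\cos(\theta_i/2)$ give $(1-\sigma_i)^2 = 4\sin^4(\theta_i/2) \leq \sin^4\theta_i$. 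Summing and using $\sum_i \sin^4\theta_i \leq (\sum_i \sin^2\theta_i)^2 = \|\sin\Theta\|_F^4$ yields
\begin{equation*}
\|U_{\TP}^T U_{\fM} - V_1 V_2^T\|_F \;\leq\; \|\sin\Theta(U_{\TP},U_{\fM})\|_F^{\,2} \;\leq\; \frac{Cm\log mn}{n}.
\end{equation*}

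The calculation is essentially routine once these ingredients are lined up. The principal point of care is ensuring that the Davis--Kahan estimate is applied with $\lambda_d(\TP)$ (the deterministic, true rank-$d$ gap of $\TP$) in the denominator, rather than any random gap of $\fM$; this is what allows the $nm$ in the denominator to fully cancel the $m^{3/2}\sqrt{n\log mn}$ perturbation in the numerator and produce a bound that vanishes in $n$. The passage from $\|\sin\Theta\|_F$ to the squared bound is the only nontrivial-looking step, but it reduces to a one-line half-angle identity valid for all principal angles in $[0,\pi/2]$, so no additional probabilistic machinery beyond Lemmas \ref{lemma:conc} and \ref{lemma:eig_order} is required.
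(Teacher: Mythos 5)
Your proposal is correct and follows essentially the same route as the paper: a Davis--Kahan bound on the principal angles between $\mathrm{col}(U_{\TP})$ and $\mathrm{col}(U_{\fM})$ (using Lemmas \ref{lemma:conc} and \ref{lemma:eig_order}), followed by the elementary observation that $1-\sigma_i\leq \sin^2\theta_i$ turns the squared sine bound into the stated $O(m\log mn/n)$ rate. The only cosmetic differences are that the paper places $\lambda_d(\fM)$ (controlled via Weyl) in the Davis--Kahan denominator where you use $\lambda_d(\TP)$ directly, and it bounds $\bigl(\sum_i(1-\sigma_i)^2\bigr)^{1/2}\leq d\max_i\sin^2\theta_i$ where you pass through $\|\sin\Theta\|_F^2$; both give the same bound.
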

\begin{proof}[Adapted from the proof of Proposition 16 in \cite{lyzinski15_HSBM}]
Working on the intersection of the sets where both Lemma \ref{lemma:conc} and \ref{lemma:eig_order} hold (noting this set has high probability), note that Weyl's theorem gives that $\lambda_d(\fM)\geq Cnm$ for some constant $C$.
Let $\sigma_1\geq \sigma_2\geq\cdots\geq\sigma_d$ be the singular values of $U_{\TP}^TU_{\fM}$, so that $\sigma_i=cos(\theta_i)$ where the $\theta_i$'s are the principal angles between the subspaces spanned by $U_{\TP}$ and $U_{\fM}$.
The Davis-Kahan theorem (see, for example, Theorem 3.6 in \cite{Bhatia1997} Theorem VII.3.1) then implies that with high probability (where $C$ is a constant that can change line-to-line)
\begin{align*}
    \|U_{\TP}^TU_{\mathfrak{M}}-V_1V_2^T\|_F&=\sqrt{\sum_i(1-\sigma_i)^2}\\
    &\leq \sum_i(1-\sigma_i^2)\\
    &\leq d\max_i |\sin(\theta_i)|^2\\
    &\leq C\frac{d^2\|\fM-\TP\|^2}{\lambda_d(\fM)^2}\\
    &\leq C\frac{m^2(n-1)\log mn}{n^2m^2}\\
    &\leq C\frac{\log mn}{n},
\end{align*}
where the bounds in the fourth line follows from Davis-Kahan, and those in the second-to-last line follow from Lemma \ref{lemma:conc} (numerator) and Lemma \ref{lemma:eig_order} (denominator).
\end{proof}
\vspace{0.2cm}
\normalfont
\noindent

\begin{lemma}
\label{lemma:V_bounds}
With the assumptions and notation of Lemma \ref{lemma:evector_subspace}, let $V=V_1V_2^T$. 
Then we have that
\begin{align}
\label{eq:31}
    \|VS_{\fM}-S_{\TP}V\|_F\leq Cm^{2}\log mn \hspace{0.5cm} w.h.p.,
\end{align}
\begin{align}
\label{eq:32}
    \|VS_{\fM}^{1/2}-S_{\TP}^{1/2}V\|_F\leq C\frac{m^{3/2}\log mn}{n^{1/2}} \hspace{0.5cm} w.h.p.,
\end{align}
\begin{align}
\label{eq:33}
    \|VS_{\fM}^{-1/2}-S_{\TP}^{-1/2}V\|_F\leq C\frac{m^{1/2}\log mn}{n^{3/2}} \hspace{0.5cm} \text{w.h.p.},
\end{align}
where $S_{\fM}^{-1},$ and $S_{\TP}^{-1}$ are understood to be the pseudoinverses of $S_{\fM},$ and $S_{\TP}$ in the unlikely event these matrices are singular.
\end{lemma}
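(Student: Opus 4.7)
The plan is to prove (\ref{eq:31}) directly and then derive (\ref{eq:32}) and (\ref{eq:33}) by Sylvester-type algebra. I work throughout on the high-probability event where Lemmas \ref{lemma:conc}, \ref{lemma:eig_order}, and \ref{lemma:evector_subspace} all hold, so that $\|\fM - \TP\| \le Cm^{3/2}\sqrt{n\log(mn)}$, $\lambda_d(\TP), \lambda_d(\fM) \ge Cmn$, and $\|V - H\|_F \le Cm\log(mn)/n$, where $H := U_{\TP}^T U_{\fM}$.

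For (\ref{eq:31}), I would start from the Sylvester-like identity
$$H S_{\fM} - S_{\TP} H \;=\; U_{\TP}^T(\fM - \TP) U_{\fM},$$
obtained by left-multiplying $\fM U_{\fM} = U_{\fM} S_{\fM}$ by $U_{\TP}^T$, right-multiplying $U_{\TP}^T \TP = S_{\TP} U_{\TP}^T$ by $U_{\fM}$, and subtracting. Adding and subtracting $V$ on each side gives
$$V S_{\fM} - S_{\TP} V \;=\; U_{\TP}^T(\fM - \TP) U_{\fM} + (V - H) S_{\fM} + S_{\TP}(H - V).$$
The last two summands each have Frobenius norm at most $Cm^2\log(mn)$ by combining Lemma \ref{lemma:evector_subspace} with $\|S_{\TP}\|,\|S_{\fM}\| \le Cmn$. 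The principal challenge is bounding the first term, since the naive estimate $\sqrt{d}\|\fM-\TP\|$ overshoots the target by a factor of $\sqrt{n}$.

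To sharpen this, I would split
$$U_{\TP}^T(\fM-\TP) U_{\fM} \;=\; U_{\TP}^T(\fM-\TP) U_{\TP}\, V + U_{\TP}^T(\fM-\TP)\bigl(U_{\fM} - U_{\TP}V\bigr).$$
For the second summand, the Procrustes form of the Davis--Kahan theorem gives $\|U_{\fM} - U_{\TP}V\| \le C\|\fM-\TP\|/\lambda_d(\TP) \le Cm^{1/2}\sqrt{\log(mn)/n}$, which combined with $\|U_{\TP}^T(\fM-\TP)\|_F \le \sqrt{d}\|\fM-\TP\|$ yields a contribution of order $m^2\log(mn)$. For the first summand, I would bound each of the $d^2$ entries of the $d\times d$ matrix $U_{\TP}^T(\fM-\TP) U_{\TP}$ by a scalar Bernstein inequality, exploiting the structural $2\to\infty$ bound $\|U_{\TP}\|_{2\to\infty} \le C/\sqrt{mn}$ from Eq.\ (\ref{eq:U_P}) together with the independent, bounded, zero-mean block structure of $\fM - \TP$ from the proof of Lemma \ref{lemma:conc}; this yields $\|U_{\TP}^T(\fM-\TP) U_{\TP}\|_F = O(\log(mn))$, well below the target.

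To deduce (\ref{eq:32}) from (\ref{eq:31}), I use the algebraic identity
$$S_{\TP}^{1/2}\bigl(V S_{\fM}^{1/2} - S_{\TP}^{1/2} V\bigr) + \bigl(V S_{\fM}^{1/2} - S_{\TP}^{1/2} V\bigr) S_{\fM}^{1/2} = V S_{\fM} - S_{\TP} V,$$
a Sylvester equation in $X := V S_{\fM}^{1/2} - S_{\TP}^{1/2} V$. Since $S_{\TP}^{1/2}$ and $S_{\fM}^{1/2}$ are positive definite with minimum eigenvalues at least $\sqrt{Cmn}$, vectorization gives $\|X\|_F \le \|V S_{\fM} - S_{\TP} V\|_F/(2\sqrt{Cmn})$, so (\ref{eq:31}) yields (\ref{eq:32}). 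Analogously, from
$$V S_{\fM}^{-1/2} - S_{\TP}^{-1/2} V = -S_{\TP}^{-1/2}\bigl(V S_{\fM}^{1/2} - S_{\TP}^{1/2} V\bigr) S_{\fM}^{-1/2},$$
together with $\|S_{\TP}^{-1/2}\|, \|S_{\fM}^{-1/2}\| \le C/\sqrt{mn}$, (\ref{eq:32}) yields (\ref{eq:33}). The main obstacle is the sharp entry-wise control of $U_{\TP}^T(\fM-\TP) U_{\TP}$ via the two-to-infinity structure of $U_{\TP}$ and a Bernstein argument; the remainder is algebraic manipulation.
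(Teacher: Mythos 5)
Your proof is correct and follows essentially the same route as the paper's: both arguments isolate the term $U_{\TP}^T(\fM-\TP)U_{\TP}$ and control it by entrywise scalar concentration (you via Bernstein with the entrywise bound from Eq.\ (\ref{eq:U_P}), the paper via McDiarmid with a Cauchy--Schwarz bookkeeping), handle the remaining pieces with Lemma \ref{lemma:evector_subspace} and a Davis--Kahan bound on $U_{\fM}-U_{\TP}V$, and then pass to (\ref{eq:32}) and (\ref{eq:33}) by the same entrywise eigenvalue-gap division, since for diagonal $S_{\fM},S_{\TP}$ your Sylvester identities reduce exactly to the paper's formulas $V_{ij}(\lambda_i(\fM)-\lambda_j(\TP))/(\lambda_i^{1/2}(\fM)+\lambda_j^{1/2}(\TP))$ combined with the $\Omega(mn)$ eigenvalue lower bounds. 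One quantitative overclaim: the concentration step does not yield $\|U_{\TP}^T(\fM-\TP)U_{\TP}\|_F=O(\log mn)$ once the $m$-dependence is tracked --- each Bernoulli $A^{(h)}_{\alpha\beta}$ enters a given entry with total coefficient of order $m/n$ over the $\sim m^2$ congruent index pairs, giving a variance proxy of order $m^3$ and hence a bound of order $m^{3/2}\log^{1/2}(mn)$ (consistent with the paper's stated $O(m^2\log mn)$) --- but since this still sits below the $Cm^2\log mn$ target for (\ref{eq:31}), the conclusion is unaffected; note also that your identity $\fM U_{\fM}=U_{\fM}S_{\fM}$ implicitly uses that the top $d$ eigenvalues of $\fM$ are positive, which holds on the high-probability event you condition on.
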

\begin{proof}[Adapted from the proof of Proposition 17 in \cite{lyzinski15_HSBM}]
Let 
$$R:=\um-\up\up^T\um,
$$
so that 
\begin{align*}
VS_{\fM}&=(V-U_{\TP}^TU_{\fM})\sm+\up^T\um\sm\\
&=(V-U_{\TP}^TU_{\fM})\sm+\up^T\fM\um I_{\pm}\\
&=(V-U_{\TP}^TU_{\fM})\sm+\up^T(\fM-\TP)\um I_{\pm}+\up^T\TP\um I_{\pm}\\
&= (V - \up^T\um)\sm+ \up^T(\fM-\TP)R I_{\pm}+\up^T(\fM-\TP)\up\up^T\um I_{\pm}+
\up^T\TP\um I_{\pm}\\
&=(V - \up^T\um)\sm+ \up^T(\fM-\TP)RI_{\pm}+\up^T(\fM-\TP)\up\up^T\um I_{\pm}+
\spp\up^T\um I_{\pm},
\end{align*}
where $I_{\pm}$ is a random diagonal matrix with $\pm1$ entries on its diagonal indicating whether the signs of the eigenvalues associated with $\um$ agree for $\fM$ and for $|\fM|$.
We will first bound the Frobenius norm of $\up^T(\fM-\TP)\up$.
Specifically, we will prove that there exists a constant $C$ such that w.h.p.
\begin{equation}
\label{eq:u(M-P)u}
    \|U_{\TP}^T({\mathfrak{M}}-\TP)U_{\TP}\|_F\leq  d\|U_{\TP}^T({\mathfrak{M}}-\TP)U_{\TP}\|_{\text{max}}\leq Cdm^2\sqrt{\log mn}.
\end{equation}
The proof proceeds as follows.
Note that $U_{\TP}^T({\mathfrak{M}}-\TP)U_{\TP}\in\R^{d\times d}$ and $\|U_{\TP}^T({\mathfrak{M}}-\TP)U_{\TP}\|_{\text{max}}=\max\limits_{i,j\in[d]}\Big|\langle{(\mathfrak{M}-\TP)U_{\cdot j},U_{\cdot i}}\rangle\Big|$. For any $1\leq i,j \leq d$,
\begin{align*}
    \langle{(\mathfrak{M}-\TP)U_{\cdot j},U_{\cdot i}}\rangle &= U_{i \cdot}^T(\mathfrak{M}-\TP)U_{\cdot j} \\&=2\sum_{1\leq\tilde{k}<\tilde{l}\leq mn}(\mathfrak{M}_{\tilde{k},\tilde{l}}-\TP_{\tilde{k},\tilde{l}})U_{\tilde{k} i}U_{\tilde{l} j}-\sum_{\tilde{k}}
    \TP_{\tilde{k},\tilde{k}}U_{\tilde{k} i}U_{\tilde{k} j}\\&= 2\sum_{1\leq k<l\leq n}\Big(\sum_{s=1}^m\sum_{t=1}^m\mathfrak{M}^{(s,t)}_{k,l}-m^2P_{k,l}\Big)U_{k i}U_{l j}-m\sum_{k=1}^n
   P_{k,k}U_{k i}U_{k j}\\&=2\sum_{1\leq k<l\leq n}\Big(\sum_{s=1}^m\sum_{t=1}^m\sum_{\ell=1}^mc_{\ell}^{(s,t)}A^{(\ell)}_{k,l}-m^2P_{k,l}\Big)U_{k i}U_{l j}-m\sum_{k=1}^n
   P_{k,k}U_{k i}U_{k j}\\&=2\sum_{1\leq k<l\leq n}\underbrace{\Big(\sum_{\ell=1}^m \Big(\sum_{s=1}^m\alpha(s,\ell)\Big)A^{(\ell)}_{k,l}-m^2P_{k,l}\Big)}_{\text{mean }0 \text{ r.v.}}U_{k i}U_{l j}-m\sum_{k=1}^n
   P_{k,k}U_{k i}U_{k j}.
\end{align*}
Conditioned on $\bX$ (and hence on $P$), $ U_{i \cdot}^T(\mathfrak{M}-\TP)U_{\cdot j}$ consists of two terms. The first term is a sum of $\binom{n}{2}$ independent, mean zero, bounded random variables taking values in $[-2m^2U_{ki}U_{lj},2m^2U_{ki}U_{lj}]$ and the second term is of order $O(m)$, and thus, dominated by the first term. By Hoeffding's inequality,
\begin{align*}
    \PX\bigg[\Big|2\sum_{1\leq\tilde{k}<\tilde{l}\leq mn}(\mathfrak{M}_{\tilde{k},\tilde{l}}-\TP_{\tilde{k},\tilde{l}})U_{\tilde{k} i}U_{\tilde{l} j}\Big|\geq 8m^2\sqrt{\log mn}\bigg]&\leq 2\exp\bigg(\frac{-32m^4\log mn}{\sum\limits_{1\leq k<l \leq n}(4m^2U_{ki}U_{lj})^2}\bigg)\\&\leq 2(mn)^{-2}.
\end{align*}
Integrating over $\bX$ yields the desired result that there exists a constant $C>0$ such that
$$\|U_{\TP}^T({\mathfrak{M}}-\TP)U_{\TP}\|_F\leq Cm^2\sqrt{\log mn}\ \  \text{ w.h.p.}$$
Consider the events (where $C$ is an appropriately chosen constant)
\begin{align*}
    \mathcal{E}_1&:=\left\{\|\up^T(\fM-\TP)\up\|_F\leq Cm^2\log  mn \right\}\\
    \mathcal{E}_2&:=\left\{\text{ the statement of Lemma \ref{lemma:eig_order} holds} \right\}\\
        \mathcal{E}_3&:=\left\{\text{ the statement of Lemma \ref{lemma:evector_subspace} holds} \right\}\\
    \mathcal{E}_4&:=\left\{ \text{ the statement of Lemma \ref{lemma:conc} holds} \right\}
\end{align*}
As each $\mathcal{E}_i$ is a high probability event, $\mathcal{E}=\cap_{i=1}^4\mathcal{E}_i$ is also a high probability event.
In what follows, we will condition on the events in $\mathcal{E}$ occurring.

Conditioning on $\mathcal{E}$,
from the Davis-Kahan theorem (shown in detail in Lemma \ref{lem:Q1}), we have that
\begin{equation}
\label{eq:boundingeeee}
\|\um-\up\up^T\um\|
\leq \sqrt{ \frac{C\log mn}{n} }
\end{equation}
holds for some constant $C$.
Also, Weyl's theorem \cite[Section 6.3]{horn85:_matrix_analy} with Lemmas \ref{lemma:eig_order} and \ref{lemma:conc} imply that 
$$|\lambda_{j}(\fM)|\leq Cm\sqrt{n\log  mn}$$
for $j>d$,
and that
$$|\lambda_{j}(\fM)-\lambda_j(\TP)|\leq Cm\sqrt{n\log  mn}$$
for $j\leq d$.
As
for $j\leq d$, we have that there is a constant $C>0$ such that $\lambda_j(\TP)\geq Cnm\delta$, and so $\lambda_i(|\fM|)=\lambda_i(\fM)\geq Cnm\delta$ for all $i\in[d]$ and an appropriately chosen constant $C$ (abusing notation, the two $C$'s need not be equal). Therefore, we have
that $I_{\pm}=I_d$ so that 
$$
\spp\up^T\um I_\pm=\spp\up^T\um=\spp(\up^T\um-V)+\spp V.
$$
We have then that
\begin{align*}
\|V\sm-\spp V\|_F&\leq 
\|V-\up^T\um\|_F(\|\sm\|+\|\spp\|)\\
&\hspace{2mm}+
\|\up^T(\fM-\TP)R\|_F+
\|\up^T(\fM-\TP)\up\up^T\um\|_F.
\end{align*}
Now, we have that 
$
\|\fM\|,\|\TP\|\leq mn.
$
It follows then that there exists a constant $C$ such that (where the first bound follows from Lemma \ref{lemma:evector_subspace}, and the second by combining Lemma \ref{lemma:conc} and Eq. (\ref{eq:boundingeeee}))
\begin{align*}
\|V-\up^T\um\|_F(\|\sm\|+\|\spp\|)&\leq Cm\log mn\\
\|\up^T(\fM-\TP)R\|_F&\leq\sqrt{d}\|\up^T(\fM-\TP)R\|\leq Cm\log mn,
\end{align*}
so that
\begin{align*}
\|V\sm-\spp V\|_F\leq Cm\log mn + \|\up^T(\fM-\TP)\up\|_F\leq Cm^{2}\log mn.
\end{align*}
as desired, thus proving Eq. (\ref{eq:31}).

To prove Eq.\@ (\ref{eq:32}) and (\ref{eq:33}), note that 
we have shown that with high probability, both of the following events hold:
\begin{itemize}
\item[i.] $\|V\sm-\spp V\|_F\leq Cm^{2}\log mn$ (by the first part of the Lemma).
\item[ii.] For all $i$, $j\in[d]$, $\lambda_i(\fM),\lambda_j(\TP)$ are of order $\Omega(nm\delta)$ (by Lemmas \ref{lemma:eig_order} and \ref{lemma:conc}).
\end{itemize}
Given these events, as we have that 
$$(\sm^{1/2}-\spp^{1/2})(\sm^{1/2}+\spp^{1/2})=
(\sm-\spp)\Rightarrow 
(\sm^{1/2}-\spp^{1/2})=
(\sm-\spp)(\sm^{1/2}+\spp^{1/2})^{-1},
$$
the $i,j$-th entry of 
$V\sm^{1/2}-\spp^{1/2} V$ is equal to
$$V_{i,j}(\lambda_i^{1/2}(\fM)-\lambda_j^{1/2}(\TP))=
\frac{V_{i,j}(\lambda_i(\fM)-\lambda_j(\TP))}{\lambda_i^{1/2}(\fM)+\lambda_j^{1/2}(\TP)}
$$
and Eq. (\ref{eq:32}) follows. 
To prove Eq. (\ref{eq:33}), note that 
the $i,j$-th entry of 
$V\sm^{-1/2}-\spp^{-1/2} V$ is equal to
$$V_{i,j}(\lambda_i^{-1/2}(\fM)-\lambda_j^{-1/2}(\TP))=
\frac{V_{i,j}(\lambda_j(\TP)^{1/2}-\lambda_i(\fM)^{1/2})}{\sqrt{\lambda_i(\fM)\lambda_j(\TP)}}.
$$
The proof then follows immediately.
\end{proof}
\begin{lemma}
\label{lem:2toinfU}
With the assumptions in Lemma \ref{lemma:evector_subspace}, denote the SVD of $U_{\TP}^TU_{\mathfrak{M}}$ as $V_1\Sigma V_2^T$ and set $V:=V_1V_2^T$. 
There exists a constant $C$ such that w.h.p.
\begin{equation}
\label{eq:claim2}
    \|({\mathfrak{M}}-\TP)U_{\TP}\|_{2\rightarrow \infty}\leq C\sqrt{d}m\sqrt{\log mn}.
\end{equation}
\end{lemma}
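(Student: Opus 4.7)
Since $\|B\|_{2\to\infty}=\max_h\|e_h^\top B\|_2$ by Lemma \ref{lemma:2toinf}, my plan is to fix a row index $h=n(s-1)+i$, bound each of the $d$ scalar entries $e_h^\top(\fM-\TP)u_j$ (where $u_j$ is the $j$-th column of $U_{\TP}$) by $O(\sqrt{m\log mn})$ with probability at least $1-(mn)^{-a_0}$ via a scalar Bernstein inequality, pass to the Euclidean norm of the row through $\|v\|_2\leq\sqrt{d}\|v\|_\infty$, and finally union-bound over the $mn\cdot d$ pairs $(h,j)$. I would operate throughout on the high-probability event from Eq.\@ (\ref{eq:U_P}) that $\|U_{\TP}\|_{2\to\infty}\leq C/\sqrt{mn}$, since this entrywise ``flatness'' of $U_{\TP}$ is what delivers a usable per-term bound in Bernstein.

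The non-trivial step is to express $e_h^\top(\fM-\TP)u_j$ as a genuine sum of independent, centered, bounded variables. Writing a column index $k=n(t-1)+k'$ and using $(\fM)_{h,k}=\sum_\ell c_\ell^{(s,t)}A^{(\ell)}_{i,k'}$ together with $\TP_{h,k}=P_{i,k'}$, I would regroup the sum around the independent Bernoullis $\{A^{(\ell)}_{i,k'}\}$ as
\[
e_h^\top(\fM-\TP)u_j \;=\; \sum_{\ell=1}^{m}\sum_{k'\neq i}\big(A^{(\ell)}_{i,k'}-P_{i,k'}\big)\,Y^{(j)}_{\ell,k'}, \qquad Y^{(j)}_{\ell,k'}\,:=\,\sum_{t=1}^{m}c_\ell^{(s,t)}(u_j)_{n(t-1)+k'},
\]
up to a deterministic $O(1/\sqrt{n})$ contribution from $k'=i$ (using that $A^{(\ell)}$ is hollow), which is lower order. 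Under JRDPG the centered Bernoullis $\{A^{(\ell)}_{i,k'}-P_{i,k'}\}_{\ell\in[m],\,k'\neq i}$ are (conditionally on $\bX$) mutually independent, so this is an honest sum of $m(n-1)$ independent zero-mean bounded summands. From $c_\ell^{(s,t)}\in[0,1]$, $\sum_\ell c_\ell^{(s,t)}=1$, $\alpha(s,\ell)=\sum_t c_\ell^{(s,t)}\leq m$, and $|(u_j)_r|\leq\|U_{\TP}\|_{2\to\infty}\leq C/\sqrt{mn}$, I would extract the per-summand bound $|Y^{(j)}_{\ell,k'}|\leq C\sqrt{m/n}$ and, via Cauchy--Schwarz together with the column-sum identity, the variance proxy
\[
\sum_{\ell,k'}\big(Y^{(j)}_{\ell,k'}\big)^2\;\leq\;\sum_\ell\sum_t\big(c_\ell^{(s,t)}\big)^2\,\|u_j\|_2^2\;\leq\;\sum_t\sum_\ell c_\ell^{(s,t)}\;=\;m.
\]
Bernstein's inequality then yields $|e_h^\top(\fM-\TP)u_j|\leq C\sqrt{m\log mn}$ off an event of probability at most $(mn)^{-a_0}$, taking the constant large enough so that the sub-Gaussian term $\sqrt{\sigma^2\log mn}$ dominates the sub-exponential term $R\log mn$ (which happens automatically since $R=O(\sqrt{m/n})$).

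Combining the $d$ coordinates via $\|v\|_2\leq\sqrt{d}\|v\|_\infty$ yields the row bound $\|e_h^\top(\fM-\TP)U_{\TP}\|_2\leq C\sqrt{dm\log mn}\leq C\sqrt{d}\,m\sqrt{\log mn}$, and a union bound over the $mn\cdot d$ pairs $(h,j)$ closes the argument. The main obstacle, as I see it, is exposing the independence structure correctly: along a fixed row of $\fM$, entries at columns sharing a common residue modulo $n$ share the \emph{same} underlying Bernoullis, so the naive $mn$-term expansion of $e_h^\top(\fM-\TP)u_j$ is \emph{not} a sum of independent random variables. Regrouping around the independent coin flips $\{A^{(\ell)}_{i,k'}\}$ is precisely what both recovers independence and lets the constraint $\sum_\ell c_\ell^{(s,t)}=1$ collapse the variance proxy from a naive $O(m^2)$ down to $O(m)$; without that collapse, Bernstein would deliver only a per-entry bound of order $m\sqrt{\log mn}$ and the final estimate would carry an extra spurious factor of $\sqrt{m}$.
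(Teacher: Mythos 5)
Your proof is correct, and it reaches the stated bound by a closely related but not identical route. Both arguments reduce the two-to-infinity norm to the max-entry norm at the cost of a $\sqrt{d}$ factor, rewrite each entry $\langle(\fM-\TP)U_{\cdot j},e_h\rangle$ as a sum of independent, centered, bounded variables, apply a concentration inequality, and union bound; the difference lies in how the independence is exposed and which inequality is used. The paper exploits the exact replication structure of $U_{\TP}$ (its rows repeat with period $n$, since $U_{\TP}S_{\TP}^{1/2}$ equals $\vec{1}_m\otimes\bX$ up to rotation), which collapses the $mn$-term sum to $\sum_{k=1}^n\big[\sum_\ell\alpha(s,\ell)A^{(\ell)}_{i,k}-mP_{i,k}\big]U_{k,j}$, a sum of $n$ independent terms each bounded by $m|U_{k,j}|$, and then applies Hoeffding with $\sum_k(2mU_{k,j})^2\leq 4m^2$ to get a per-entry bound of order $m\sqrt{\log mn}$. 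You instead keep the $m(n-1)$ underlying Bernoullis as the independent summands, never invoking replication, and pay for the per-summand bound with the delocalization estimate $\|U_{\TP}\|_{2\to\infty}\leq C(mn)^{-1/2}$ from Eq.\@ (\ref{eq:U_P}); the Cauchy--Schwarz collapse of the variance proxy to $O(m)$ (valid since $c_\ell^{(s,t)}\in[0,1]$ and $\sum_\ell c_\ell^{(s,t)}=1$) then lets Bernstein deliver a per-entry bound of order $\sqrt{m\log mn}$, a factor of $\sqrt{m}$ sharper than the paper's and a fortiori sufficient for Eq.\@ (\ref{eq:claim2}). Your route is marginally more robust, in that it would survive a modification of the off-diagonal blocks that destroyed the exact row-replication of $U_{\TP}$ while preserving delocalization, whereas the paper's is shorter because Hoeffding needs no variance computation. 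The only imprecision is cosmetic: once the $m$-dependence is tracked, the hollow-diagonal term you set aside is $O(\sqrt{m/n})$ rather than $O(1/\sqrt{n})$, but it remains negligible against $\sqrt{m\log mn}$.
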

\begin{proof} We note first that
$$\frac{1}{\sqrt{d}}\|(\mathfrak{M}-\TP)U_{\TP}\|_{2\rightarrow \infty}\leq \|(\mathfrak{M}-\TP)U_{\TP}\|_{\max}=\max_{i\in[mn],j\in[d]}|\langle (\mathfrak{M}-\TP)U_{\cdot j},e_i\rangle|,$$
where $e_i\in\R^{mn}$ is the unit vector with all of its entries equal to $0$, except for the $i$-th entry.

Let $s\in[m]$ arbitrary. 
There exists a matrix $W$ such that 
$$\up \spp^{1/2} W=\bZ\ \ \Rightarrow\ \  \up=\bZ W^T (\spp)^{-1/2}=\begin{bmatrix}
\bX W^T (\spp)^{-1/2}\\
\bX W^T (\spp)^{-1/2}\\
\vdots\\
\bX W^T (\spp)^{-1/2}
\end{bmatrix}.$$
Therefore, for all $1\leq k\leq n$, $U_{k,j}=U_{k+n,j}=\cdots=U_{k+(m-1)n,j}$.
For each $(s-1)n+1\leq i\leq sn$ and $1\leq j\leq d$,
\begin{align}
\label{eq:proc_max_norm}
    \langle(\mathfrak{M}-\TP)U_{\cdot j},e_i\rangle &= e_i^T(\mathfrak{M}-\TP)U_{\cdot j}\\&=\sum_{\tilde{k}=1}^{mn}(\mathfrak{M}_{i,\tilde{k}}-\TP_{i,\tilde{k}})U_{\tilde{k},j}\nonumber\\&=\Large\sum_{k=1}^{n}\Big[\sum_{t=1}^{m}\mathfrak{M}^{(s,t)}_{i,k}-mP_{i,k}\Big]U_{k,j}\nonumber\\&=\sum_{k=1}^{n}\Big[\sum_{t=1}^{m}\sum_{\ell=1}^{m}c_{\ell}^{(s,t)}A^{(\ell)}_{i,k}-mP_{i,k}\Big]U_{k,j}\nonumber\\&=\sum_{k=1}^{n}\Big[\sum_{\ell=1}^{m}\alpha(s,\ell)A^{(\ell)}_{i,k}-mP_{i,k}\Big]U_{k,j}\nonumber,
\end{align} where $\alpha(s,\ell):=\sum\limits_{t=1}^{m}c_{\ell}^{(s,t)}\geq 0$ and $\sum\limits_{\ell=1}^{m}a(s,\ell)=m$ for all $s\in [m]$.
Conditioning on $\bX$ (and hence on $\mathbf{P}$) for all $s\in[m]$, for any $(s-1)n+1\leq i\leq sn$ and $1\leq j\leq d$, the above expansion is a sum of $n$ independent (in $k$), bounded, mean zero random variables taking values in $[-mU_{k,j},mU_{k,j}]$. Hence, by Hoeffding's inequality, 
\begin{align*}
    \PX\Big(\Big|\sum_{\tilde{k}=1}^{mn}(\mathfrak{M}_{i,\tilde{k}}-\TP_{i,\tilde{k}})U_{\tilde{k},j}\Big|\geq 2m\sqrt{\log mn}\Big)\leq 2\exp{\Bigg(-\frac{8m^2\log mn}{\sum_{k=1}^{n}(2mU_{k,j})^2}\Bigg)}\leq 2(mn)^{-2},
\end{align*}
where we used the fact that $\sum_{k=1}^{n}(2mU_{k,j})^2\leq 4m^2$ as the columns of $\up$ are norm $1$.
Therefore, by integrating over $\bX$ we have that w.h.p. there exists a constant $C>0$ such that $\|(\mathfrak{M}-\TP)U_{\TP}\|_{2\rightarrow \infty}\leq C\sqrt{d}m\sqrt{\log mn}$ as desired.
\end{proof}

\begin{lemma}
\label{lem:Q1}
With the assumptions in Lemma \ref{lemma:evector_subspace}, denote the SVD of $U_{\TP}^TU_{\mathfrak{M}}$ as $V_1\Sigma V_2^T$ and set $V:=V_1V_2^T$. Set also $Q^{(1)}=U_{\mathfrak{M}}-U_{\TP}V$, there exists a constant $C$ such that w.h.p.,
\begin{equation*}
    \|Q^{(1)}\|\leq C\frac{\log^{1/2}mn}{n^{1/2}}.
\end{equation*}
\end{lemma}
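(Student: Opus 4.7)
The plan is to split $Q^{(1)}=\um-\up V$ along the orthogonal decomposition
$$\um - \up V = (I-\up\up^T)\um \;+\; \up\bigl(\up^T\um - V\bigr),$$
where the first summand is the piece of $\um$ lying outside the range of $\up$, naturally controlled by a Davis--Kahan $\sin\Theta$ bound, and the second summand is the Procrustes correction already controlled by Lemma \ref{lemma:evector_subspace}. So the argument reduces to bounding these two pieces separately in spectral norm and observing which one dominates.

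For the first piece, I would appeal to the spectral-norm $\sin\Theta$ theorem of \cite{davis70,Bhatia1997,DK_usefulvariant}, which gives
$$\|(I-\up\up^T)\um\| \;=\; \|\sin\Theta(\um,\up)\| \;\leq\; \frac{\|\fM-\TP\|}{\lambda_d(\TP)-\lambda_{d+1}(\fM)}.$$
The numerator is controlled by Lemma \ref{lemma:conc}, $\|\fM-\TP\|\leq Cm^{3/2}\sqrt{(n-1)\log mn}$. For the denominator, Lemma \ref{lemma:eig_order} supplies $\lambda_d(\TP)\geq Cnm\delta$, while Weyl's inequality plus Lemma \ref{lemma:conc} pushes $\lambda_{d+1}(\fM)\leq \|\fM-\TP\|$ into strictly smaller order, so the spectral gap is of order $\Omega(nm)$. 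Combining these yields
$$\|(I-\up\up^T)\um\| \;\leq\; C\,\frac{m^{3/2}\sqrt{n\log mn}}{nm} \;=\; C\,\frac{m^{1/2}\log^{1/2}mn}{n^{1/2}}$$
with high probability. For the second piece, since $\up$ has orthonormal columns, $\|\up\|=1$, and then
$$\|\up(\up^T\um-V)\| \;\leq\; \|\up^T\um - V\| \;\leq\; \|\up^T\um - V\|_F \;\leq\; C\,\frac{m\log mn}{n},$$
by Lemma \ref{lemma:evector_subspace}. Comparing the two bounds, the Davis--Kahan term $m^{1/2}\log^{1/2}(mn)/n^{1/2}$ dominates $m\log(mn)/n$ whenever $n \gtrsim m\log mn$, which is the regime of interest; a final application of the triangle inequality then delivers the claimed bound.

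I do not expect a serious obstacle here, since the three nontrivial ingredients (the eigenvalue lower bound on $\TP$ of order $nm$, the spectral concentration of $\fM - \TP$ of order $m^{3/2}\sqrt{n\log mn}$, and the Procrustes approximation $\up^T\um \approx V$) have all been established in the preceding lemmas. The only care needed is in verifying that the spectral gap appearing in the Davis--Kahan denominator really is of order $nm$ and not something smaller; this follows from the combination of Lemma \ref{lemma:eig_order} with Weyl applied to $\lambda_{d+1}(\fM)$, as indicated above.
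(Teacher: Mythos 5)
Your proposal is correct and follows essentially the same route as the paper: the identical orthogonal decomposition $\um-\up V=(I-\up\up^T)\um+\up(\up^T\um-V)$, with the first piece controlled by a Davis--Kahan $\sin\Theta$ bound using Lemmas \ref{lemma:conc} and \ref{lemma:eig_order} (the paper writes the denominator as $\lambda_d(\fM)$ rather than the gap $\lambda_d(\TP)-\lambda_{d+1}(\fM)$, but both are of order $nm$ by Weyl) and the second piece controlled by Lemma \ref{lemma:evector_subspace}. The only cosmetic difference is that the paper verifies $\|(I-\up\up^T)\um\|=\|\sin\Theta(\um,\up)\|$ by passing through the projection $\um\um^T$, which you implicitly assume.
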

\begin{proof}
Following the reasoning from Lemma 6.8 in \cite{cape2toinfty}, we add and substract $U_{\TP}U_{\TP}^TU_{\mathfrak{M}}$ and by triangle inequality,
\begin{align*}
    \|Q^{(1)}\|=\|U_{\mathfrak{M}}-U_{\TP}V\|\leq\|U_{\mathfrak{M}}-U_{\TP}U_{\TP}^TU_{\mathfrak{M}}\|+\|U_{\TP}(U_{\TP}^TU_{\mathfrak{M}}-V)\|.
\end{align*}
The first term can be rewritten as follows,
\begin{align*}
    \|U_{\mathfrak{M}}-U_{\TP}U_{\TP}^TU_{\mathfrak{M}}\|=\|U_{\mathfrak{M}}U_{\mathfrak{M}}^T-U_{\TP}U_{\TP}^TU_{\mathfrak{M}}U_{\mathfrak{M}}^T\|=\|(I-U_{\TP}U_{\TP}^T)U_{\mathfrak{M}}U_{\mathfrak{M}}^T\|=\|\sin{\Theta(U_{\mathfrak{M}},U_{\TP})}\|.
\end{align*}

\noindent 
Given the intersection of the events in the statements of Lemmas \ref{lemma:eig_order} and \ref{lemma:conc}, the Davis-Kahan theorem gives that, 
\begin{align*}
    \|U_{\mathfrak{M}}-U_{\TP}U_{\TP}^TU_{\mathfrak{M}}\|\leq \frac{\|{\mathfrak{M}}-\TP\|}{\lambda_d(\fM)}\leq C\frac{\log^{1/2}mn}{n^{1/2}}.
\end{align*}
Further, we can bound $\|U_{\TP}(U_{\TP}^TU_{\mathfrak{M}}-V)\|$ using Lemma \ref{lemma:evector_subspace}; as the intersection of the events in Lemma \ref{lemma:eig_order}, \ref{lemma:evector_subspace} and \ref{lemma:conc} has high probability, this leads us to the desired result.
\end{proof}

\vspace{.5cm}
\noindent
We now address the proof of Theorem \ref{theorem:genOMNI_consistency}.
\begin{proof} (Consistency of generalized omnibus embeddings)
Let $W_n$ be a sequence of matrices such that $\bZ=\bZ^*W_n$, and let $V$ be as in Lemma \ref{lemma:V_bounds}.
Define the matrices $Q^{(1)},Q^{(2)}$ as follows
\begin{align*}
    Q^{(1)}&= U_{\mathfrak{M}}-U_{\TP}V; \\
    Q^{(2)}&=U_{\TP}U_{\TP}^TU_{\mathfrak{M}}-U_{\TP}V.
\end{align*}
Note that, as defined, 
$\fM U_{\fM}=U_{\fM}S_{\fM}I_{\pm}$ where $I_{\pm}$ is a random sign matrix indicating whether the signs of the eigenvalues associated with $U_{\fM}$ agree for $\fM$ and $|\fM|$.
Given the events of Lemmas \ref{lemma:eig_order} and \ref{lemma:conc}, we have that (by Weyl's Theorem) $I_{\pm}=I_d$; as this term appears only in the residual terms ($H_2$-$H_5$) of the below decomposition, and as, when bounding the residual terms we assume the events of Lemmas \ref{lemma:eig_order} and \ref{lemma:conc}, we write, with a slight abuse of notation, $\fM U_{\fM}=U_{\fM}S_{\fM}$ in the below decomposition.
We decompose the term $U_{\mathfrak{M}}S_{\mathfrak{M}}^{1/2}-U_{\TP}S_{\TP}^{1/2}V$ as follows,
 \begin{align}
\label{eq:cons}
   U_{\mathfrak{M}}S_{\mathfrak{M}}^{1/2}-U_{\TP}S_{\TP}^{1/2}V &= \underbrace{(\mathfrak{M}-\TP)U_{\TP}S_{\TP}^{-1/2}V}_{:=H_1}+
   \underbrace{(\mathfrak{M}-\TP)U_{\TP}(VS_{\mathfrak{M}}^{-1/2}-S_{\TP}^{-1/2}V)}_{:=H_2}\nonumber\\
   &-\underbrace{U_{\TP}U_{\TP}^T(\mathfrak{M}-\TP)U_{\TP}VS_{\mathfrak{M}}^{-1/2}}_{:=H_3}+
   \underbrace{(I-U_{\TP}U_{\TP}^T)(\mathfrak{M}-\TP)Q^{(1)}S_{\mathfrak{M}}^{-1/2}}_{:=H_4}\nonumber\\&+
   \underbrace{Q^{(2)}S_{\mathfrak{M}}^{1/2}+U_{\TP}(VS_{\mathfrak{M}}^{1/2}-S_{\TP}^{1/2}V)}_{:=H_5}.
\end{align}
For the $H_i$'s, $i=1,2,3,4$, there exists a constant $C>0$ such that the following bounds hold w.h.p.:
\begin{align*}
    \|H_1\|_{2\rightarrow \infty}&\leq \| S_{\TP}^{-1/2}\| \cdot\|(\mathfrak{M}-\TP)U_{\TP}\|_{2\rightarrow \infty}\\
    &\leq  C\frac{m^{1/2}\log^{1/2}mn}{n^{1/2}},   \hspace{10mm}\text{ [Lemmas \ref{lemma:eig_order}, \ref{lem:2toinfU}]}\\ 
    \|H_2\|_{2\rightarrow \infty}&\leq \|VS_{\mathfrak{M}}^{-1/2}-S_{\TP}^{-1/2}V \|\cdot \|(\mathfrak{M}-\TP)U_{\TP} \|_{2\rightarrow \infty}\\
    &\leq 
    C\frac{m^{3/2}\log^{3/2}mn}{n^{3/2}}, \hspace{10mm} \text{ [Lemmas \ref{lemma:V_bounds}, \ref{lem:2toinfU}]}\\
    \|H_3\|_{2\rightarrow \infty}&\leq 
    \|U_{\TP}\|_{2\rightarrow\infty} 
    \|\up^T(\mathfrak{M}-\TP)\up\|\cdot 
    \|S_{\mathfrak{M}}^{-1/2}\|
    \\
    &\leq 
    C\frac{m\log mn}{n},  \hspace{10mm} \text{ [Lemma \ref{lemma:eig_order}; Eq. (\ref{eq:U_P}),(\ref{eq:u(M-P)u})]}\\
    \|H_4\|_{2\rightarrow \infty}&
    \leq \|(I-U_{\TP}U_{\TP}^T)\|_{2\rightarrow\infty}
    \|(\mathfrak{M}-\TP)\| 
    \|Q^{(1)}\| 
    \|S_{\mathfrak{M}}^{-1/2}\| \\
    &\leq C\frac{m^{1/2}\log mn}{n^{1/2}}. \hspace{10mm} \text{ [Lemmas \ref{lemma:eig_order}, \ref{lemma:conc}, \ref{lem:Q1}
    ]}.
\end{align*}

Next, by Lemmas \ref{lemma:eig_order}, \ref{lemma:evector_subspace} and Eq. (\ref{eq:U_P}) we get 
\begin{align*}
\|Q^{(2)}S_{\fM}^{1/2}\|_{2\rightarrow \infty}\leq 
\|U_{\TP}\|_{2\rightarrow\infty} 
\|U_{\TP}^TU_{\mathfrak{M}}-V \|  
\cdot \|S_{\fM}^{1/2} \| 
\leq \frac{C\log mn}{n} ,
\end{align*}
and by Lemma \ref{lemma:V_bounds} and Eq (\ref{eq:U_P}) we get \begin{align*}
    \|U_{\TP}(VS_{\fM}^{1/2}-S_{\TP}^{1/2}V)\|_{2\rightarrow \infty}\leq 
    \|U_{\TP}\|_{2\rightarrow\infty}\|(VS_{\fM}^{1/2}-S_{\TP}^{1/2}V)\|\leq 
    \frac{Cm\log mn}{n}.
\end{align*} 
Hence, by triangle inequality,
\begin{align*}
    \|H_5\|_{2\rightarrow\infty}\leq C\frac{m\log mn}{n}.
\end{align*}
Plugging into Eq. (\ref{eq:cons}) the bounds above, for sufficiently large $n$, applying triangle inequality yields the desired result.
\end{proof}


\subsection{Central Limit Theorem for the rows of the generalized omnibus embedding}
\label{sec:genomnipf}

To prove Theorem \ref{thm:genOMNI} from Section \ref{sec:genomni}, we adapt the proof of Theorem 1 in \cite{levin_omni_2017} and further we extend it to correlated networks. The main difficulty in this adaptation is the more complex structure of the general $\fM$, which requires a number of modifications that we have already completed. 
\begin{itemize}
\item Account for the model correlation in Lemmas \ref{lemma:conc}, \ref{lemma:V_bounds} and \ref{lem:2toinfU}.
\item The proofs of the Bernstein matrix concentration result and Lemma \ref{lemma:V_bounds} necessitate a more delicate decomposition of $\fM$ in order to leverage classical concentration inequality results.
Moreover, the proof of Lemma \ref{lemma:conc} highlights how the coefficient matrices $C^{(l)}$ of the adjacency matrices $A^{(l)}$ in $\mathfrak{M}$ fully characterize the omnibus  matrix $\mathfrak{M}$. 
    \item 
    We adapt the general exchangeability result in the proof of Lemma 5 of \cite{levin_omni_2017} (used there to bound the term `$B_1$') to our current setting, and the general block-form of $\fM$ still allows for a weaker (within block) exchangeability argument to be employed, which is sufficient for our purposes.
    \item The model correlation and the weights in the $\fM$ matrix necessitate novel decompositions to compute the relevant covariance structures.
    \item Considering row-wise differences of $\fM$ is a key contribution to the literature.
\end{itemize}
\noindent
The overall layout of the proof is as follows. 
Let $\bhx_{\fM_n}=\mathrm{ASE}(\fM_n,d)=U_{\fM_n}S_{\fM_n}^{1/2}\in\R^{mn\times d}$ and let $\bZ_n:=[\bX_n^T|\bX_n^T|\cdots|\bX_n^T]^T\in\mathbb{R}^{mn\times d}$. 
Fix indices $i\in[n]$ and $s\in[m]$, and let $h=n(s-1)+i$. The quantity of interest is the $h$-th row (i.e., $i$-th row from $s$-th block) of
$$n^{1/2}\Big(\bhx_{\fM_n}V_n^TW_n-\bZ_n
\Big),$$
where $V_n,W_n\in\R^{d\times d}$ are suitable orthogonal transformations and $W_n$ is such that $\bZ_n= U_{\TP_n}S_{\TP_n}^{1/2}W_n$. This quantity can be decomposed into a sum of matrices (Eq. (\ref{eq:cons})) as (dropping the subscripted dependence on $n$)
\begin{align*}
n^{1/2}\Big(U_{\fM}S_{\fM}^{1/2}-U_{\TP}S_{\TP}^{1/2}V\Big)_h V^T W &=n^{1/2}\Big((\mathfrak{M}-\TP)U_{\TP}S_{\TP}^{-1/2}V\Big)_hV^T W +(n^{1/2}R_hV)V^T W\\
&=n^{1/2}\Big((\mathfrak{M}-\TP)U_{\TP}S_{\TP}^{-1/2}W\Big)_h+n^{1/2}R_h W,
\end{align*}
where $R_h\in\R^{mn\times d}$ is the residual matrix. 
Next, Theorem \ref{lemma:normality} establishes that $$n^{1/2}\Big((\mathfrak{M}-\TP)U_{\TP}S_{\TP}^{-1/2}W\Big)_h $$
converges in distribution to a mixture of normals.

We begin with a more limited conditional central limit theorem.
Recall that, by the definition of the JRDPG, the latent positions of the expected omnibus matrix $\EX \fM=\TP=U_{\TP}S_{\TP}U_{\TP}^T$ are given by 
\[Z^{\star}=\begin{bmatrix}X^{\star}\\X^{\star}\\\vdots\\X^{\star} \end{bmatrix}=U_{\TP}S_{\TP}^{1/2}\in\R^{mn\times d}.\]
Recall that the matrix of the true latent positions is denoted by  $Z=[X^T X^T \cdots X^T]^T\in\R^{mn\times d}$, so that $Z=Z^{\star}W$ for some suitable-chosen orthogonal matrix $W$.

\begin{theorem}
\label{lemma:normality}
With notation and assumptions as in Theorem \ref{thm:genOMNI}, fix some $i\in[n]$ and some $s\in[m]$ and let $h=n(s-1)+i$. Conditional on $X_i=x_i\in\R^{d}$, there exists a sequence of $d$-by-$d$ orthogonal matrices $\{W_n\}_n$ such that
\begin{align*}
    n^{1/2}\Big[(\fM_n-\TP_n)U_{\TP_n}S_{\TP_n}^{-1/2}\Big]_hW_n\xrightarrow{\mathcal{L}}\mathcal{N}(0,\Check\Sigma(x_i)),
\end{align*}
where 
\begin{align*}
        \Check\Sigma_{\rho}(x_i;s)&=\frac{1}{m^2}\Big(\underbrace{\sum_{q=1}^{m}\alpha^2(s,q)}_{\text{method coefficient}}+\underbrace{2\sum_{q<l}\alpha(s,q)\alpha(s,l)\rho_{q,l}}_{\text{model coefficient}}\Big)\underbrace{\Delta^{-1}\EX[x_i^TX_j(1-x_i^TX_j)X_jX_j^T]\Delta^{-1}}_{:=\Sigma(x_i)} 
\end{align*}
is a covariance matrix that depends on $x_i$.
\end{theorem}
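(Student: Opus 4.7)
The strategy is to reduce the row of interest to a sum of (conditionally) i.i.d.\ row vectors, apply the classical multivariate CLT, and then invoke Slutsky's theorem after controlling $S_{\TP}$ by the SLLN. I proceed in four steps.

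\textbf{Step 1 (Reparametrize via $\bZ$).} Since $\TP = \bZ\bZ^T$ is rank $d$ and positive semidefinite, one may choose $W_n \in \mathcal{O}_d$ so that $U_{\TP} S_{\TP}^{1/2} W_n = \bZ$, hence $U_{\TP} S_{\TP}^{-1/2} = \bZ\, W_n^T S_{\TP}^{-1}$. Right-multiplying by $W_n$,
\begin{equation*}
n^{1/2}\bigl[(\fM - \TP)U_{\TP} S_{\TP}^{-1/2}\bigr]_h W_n
\;=\; n^{1/2}\bigl[(\fM - \TP)\bZ\bigr]_h \,\bigl(W_n^T S_{\TP}^{-1} W_n\bigr).
\end{equation*}
Because $\bZ^T \bZ = W_n^T S_{\TP} W_n = m\,\bX^T \bX$, the SLLN yields $\tfrac{1}{mn}\, W_n^T S_{\TP} W_n \to \Delta$ almost surely, and so $mn\, W_n^T S_{\TP}^{-1} W_n \to \Delta^{-1}$ a.s. By Slutsky's theorem, it therefore suffices to identify the limit of $\tfrac{1}{m}\, n^{-1/2}\bigl[(\fM - \TP)\bZ\bigr]_h \,\Delta^{-1}$.

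\textbf{Step 2 (Unpack the block structure).} For $h = n(s-1)+i$, the $(h, n(t-1)+j)$ entry of $\fM$ equals $\sum_{\ell=1}^m c_\ell^{(s,t)} A^{(\ell)}_{i,j}$, and the same entry of $\TP$ equals $X_i^T X_j$. Using $\bZ = \vec{1}_m \otimes \bX$ and the convexity identity $\sum_\ell c_\ell^{(s,t)} = 1$,
\begin{equation*}
\bigl[(\fM - \TP)\bZ\bigr]_h
\;=\; \sum_{j=1}^n \sum_{\ell=1}^m \alpha(s,\ell)\,\bigl(A^{(\ell)}_{i,j} - X_i^T X_j\bigr)\,X_j^T,
\end{equation*}
where $\alpha(s,\ell) = \sum_t c_\ell^{(s,t)}$ are the cumulative block-row weights appearing in the statement of Theorem \ref{thm:genOMNI}.

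\textbf{Step 3 (Conditional multivariate CLT).} Fix $X_i = x_i$ and write the display above as $\sum_{j=1}^n Y_j$. For $j \neq i$, the random rows $Y_j$ are i.i.d.\ (conditionally on $X_i$), and iterated expectation with $\E[A^{(\ell)}_{i,j}\mid X_i, X_j] = X_i^T X_j$ yields $\E[Y_j \mid X_i = x_i] = 0$. Under the JRDPG model, conditional on $(X_i, X_j)$ the variables $\{A^{(\ell)}_{i,j}\}_{\ell=1}^m$ are mutually independent Bernoulli$(X_i^T X_j)$, so the cross terms $\ell \neq \ell'$ in $\E[Y_j^T Y_j \mid X_i = x_i]$ vanish and only the squared weights survive:
\begin{equation*}
\E[Y_j^T Y_j \mid X_i = x_i]
\;=\; \Bigl(\sum_{\ell=1}^m \alpha^2(s,\ell)\Bigr)\,
\EX\bigl[(x_i^T X_j - (x_i^T X_j)^2)\, X_j X_j^T\bigr].
\end{equation*}
The single $j=i$ term is bounded and thus contributes $O(n^{-1/2})$, which is negligible. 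The classical multivariate Lindeberg--L\'evy CLT therefore gives
\begin{equation*}
n^{-1/2}\bigl[(\fM - \TP)\bZ\bigr]_h^T
\xrightarrow{D}
\mathcal{N}\!\Bigl(0,\ \bigl(\textstyle\sum_\ell \alpha^2(s,\ell)\bigr)\,\EX[(x_i^T X_j - (x_i^T X_j)^2)\,X_j X_j^T]\Bigr).
\end{equation*}

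\textbf{Step 4 (Assemble).} Combining Step 3 with the Slutsky reduction of Step 1 (applying $\tfrac{1}{m}\Delta^{-1}$ on the right of the row vector, equivalently $\tfrac{1}{m}\Delta^{-1}$ on both sides of the covariance in transposed form) yields the desired limiting law $\mathcal{N}(0,\check\Sigma(x_i))$ with $\check\Sigma(x_i)$ exactly as written in the statement. The main subtlety is the coefficient bookkeeping in Step 2--3: one must verify carefully that after summing $c_\ell^{(s,t)}$ over the column blocks $t$ one obtains $\alpha(s,\ell)$, and that the conditional independence of $\{A^{(\ell)}_{i,j}\}_\ell$ causes the cross terms to vanish, so that only $\sum_\ell \alpha^2(s,\ell)$ survives in the limiting covariance.
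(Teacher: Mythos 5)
Your proposal is correct and follows essentially the same route as the paper's own proof: reparametrize via $\bZ = U_{\TP}S_{\TP}^{1/2}W_n$, expand the $h$-th row of $(\fM-\TP)\bZ$ as a sum of conditionally i.i.d.\ mean-zero terms with weights $\alpha(s,\ell)$, use the conditional independence of the $A^{(\ell)}$ to reduce the covariance to $\bigl(\sum_\ell \alpha^2(s,\ell)\bigr)\EX[(x_i^TX_j-(x_i^TX_j)^2)X_jX_j^T]$, and conclude by the multivariate CLT plus Slutsky with $nW_n^TS_P^{-1}W_n \to \Delta^{-1}$. The only differences are cosmetic (you track $S_{\TP}$ directly via $\bZ^T\bZ = m\,\bX^T\bX$ and compute the covariance as a variance of a weighted sum of independent Bernoullis, whereas the paper expands the square and uses $(\sum_\ell\alpha(s,\ell))^2=m^2$), and your handling of the bounded $j=i$ diagonal term matches the paper's.
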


\begin{proof} Following the proof of Lemma 6 in \cite{levin_omni_2017}, for each $n=1,2,\cdots$, choose orthogonal $W_n\in\R^{d\times d}$ so that $\bZ=\bZ^{\star}W_n=U_{\TP}\spp^{1/2} W_n$. 
Dropping the explicit dependence on $n$, we rewrite the term 
$n^{1/2}\Big[(\fM-\TP)U_{\TP}S_{\TP}^{-1/2}\Big]_hW$  as follows
\begin{align}
\label{eq:mainclt}
    n^{1/2}\Big[(\fM-\TP)U_{\TP}S_{\TP}^{-1/2}\Big]_h W&=
    n^{1/2}\Big[(\fM-\TP)U_{\TP}S_{\TP}^{-1/2}W\Big]_h \notag \\ 
    &= n^{1/2}\Big[(\fM-\TP)U_{\TP}S_P^{1/2}W W^TS_{\TP}^{-1}W\Big]_h\notag \\ 
     &= n^{1/2}\Big[(\fM-\TP)\bZ \Big]_h W^TS_{\TP}^{-1}W\notag \\ 
     &=\frac{n^{-1/2}}{m}\Big[\fM \bZ-\TP \bZ\Big]_h\left[nW^TS_{P}^{-1}W\right].
\end{align}
Consider next the scaled rows of $\fM \bZ-\TP \bZ$.  
We write (where for a matrix $A$, either $(A)_k$ or $[A]_k$ are used to denote the $k$-th row of $A$ and $(A)_{hk}$ or $[A]_{hk}$ denoting $A_{hk}$)
\begin{align*}
    \frac{n^{-1/2}}{m}[\fM \bZ-\TP \bZ]_h&=
    \frac{n^{-1/2}}{m}\sum_{k=1}^{mn}(\fM-\TP)_{hk}(\bZ)_k
    \\&=\frac{n^{-1/2}}{m}\sum_{\ell=1}^m\sum_{j=1}^{n}\Bigg(\fM_{i,j}^{(s,\ell)}-P_{ij}\Bigg)X_j\\&=\frac{n^{-1/2}}{m}\sum_{j\neq i}\Bigg(\sum_{\ell=1}^m \fM_{i,j}^{(s,\ell)}-mP_{ij}\Bigg)X_j-n^{-1/2}P_{ii}X_i\ .
\end{align*}

\vspace{0.2cm}
Conditioning on $X_i=x_i\in\R^d$, we first observe that 
\begin{align*}
    \frac{P_{ii}}{n^{1/2}}X_i=\frac{x_i^Tx_i}{n^{1/2}}x_i\xrightarrow{a.s.}0.
\end{align*}
Moreover, the remaining portion of the scaled sum becomes 
\begin{align*}
    &n^{-1/2}\sum_{j\neq i}\frac{1}{m}\Bigg(\sum_{\ell=1}^m \fM_{i,j}^{(s,\ell)}-m(x_i^TX_j)\Bigg)X_j\\
    &=n^{-1/2}\sum_{j\neq i}\frac{1}{m}\bigg(\sum_{\ell=1}^{m}\left(\sum_{q=1}^m c_q^{(s,\ell)}A^{(q)}_{i,j}\right)-m(x_i^TX_j)\bigg)X_j\\
    &=n^{-1/2}\sum_{j\neq i}\frac{1}{m}\bigg(\sum_{q=1}^m\underbrace{\bigg(\sum_{\ell=1}^{m}c_q^{(s,\ell)}\bigg)}_{:=\alpha(s,q)} A^{(q)}_{i,j}-m(x_i^TX_j)\bigg)X_j,
\end{align*} where $\sum\limits_{q=1}^{m}\alpha(s,q)=m$ for all $s\in [m]$.
The above expression is a sum of $n-1$ independent $0$-mean random variables, the $$\frac{1}{m}\Big(\sum\limits_{q=1}^m\alpha(s,q) A^{(q)}_{i,j}-m(x_i^TX_j)\Big)X_j\Big),$$ each with covariance matrix denoted by $\Grave\Sigma(x_i)$, computed as follows (suppressing the conditioning on $X_i=x_i$) 
\begin{align*}
\Grave\Sigma(x_i)&=
    \frac{1}{m^2}\Ex\Big(\bigg(\sum_{q=1}^{m}\alpha(s,q)A^{(q)}_{i,j}-m(x_i^TX_j)\bigg)^2X_jX_j^T\Big)\\
    &=\frac{1}{m^2}\EX\Big[\EX\Big[\Big(\sum_{q=1}^{m}\alpha(s,q)A^{(q)}_{i,j}-m(x_i^TX_j)\Big)^2X_jX_j^T\,\Big|\,X_j\Big]\Big]\\
    &=\frac{1}{m^2}\EX\Big[\bigg(\EX\Big[\Big(\sum_{q=1}^m \alpha(s,q)A_{i,j}^{(q)}\Big)^2|X_j\Big]-m^2(x_i^TX_j)^2\bigg)X_jX_j^T\Big]\\
    &=\frac{1}{m^2}\EX\Bigg[\bigg(\Big(\sum_{q=1}^{m}\alpha^2(s,q)\Big)x_i^TX_j+\Big(2\sum_{q<l}\alpha(s,q)\alpha(s,l)-m^2\Big)(x_i^TX_j)^2\\&\hspace{2.5cm}+\Big(2\sum_{q<l}\alpha(s,q)\alpha(s,l)\rho_{q,l}\Big)x_i^TX_j(1-x_i^TX_j)\bigg)X_jX_j^T\Bigg]
    \\&=\frac{1}{m^2}\left(\sum_{q=1}^{m}\alpha^2(s,q)+2\sum_{q<l}\alpha(s,q)\alpha(s,l)\rho_{q,l}\right)\EX\Big[\left(x_i^TX_j-(x_i^TX_j)^2\right)X_jX_j^T\Big],
\end{align*} where the third equality holds because conditioning on $P=XX^T$,
$$\EX\Big[\sum_{q=1}^{m}\alpha(s,q)A^{(q)}\,\Big| P\Big]=mXX^T.$$
\noindent
The fourth equality holds from the fact that conditioning on $X_i$ and $X_j$, $$\EX[(A^{(q)}_{i,j})^2|X_i,X_j]=X_i^TX_j$$ for all $q\in [m]$, and from the fact that conditioning on $X_i$ and $X_j$, $(A^{(q)}_{i,j},A^{(l)}_{i,j})$ are $\rho_{q,l}$-correlated Bernoulli($X_i^TX_j)$ random variables, $$\EX[A^{(q)}_{i,j}A^{(l)}_{i,j}|X_i,X_j]=(X_i^TX_j)^2+\rho_{q,l}X_i^TX_j(1-X_i^TX_j)$$ for all $q\neq l$. Lastly, the final equality holds by observing $$m^2=\left(\sum_{q=1}^m\alpha(s,q)\right)^2=\sum_{q=1}^m\alpha(s,q)^2+2\sum_{q<l}\alpha(s,q)\alpha(s,l).$$ 

\noindent
Thus, by the multivariate central limit theorem we have that
\begin{align}
\label{eq:clt}
   n^{-1/2}\sum_{j\neq i}\frac{1}{m}\Bigg(\sum_{q=1}^m \fM_{i,j}^{(s,q)}-m(x_i^TX_j)\Bigg)X_j  \xrightarrow{\mathcal{L}}\mathcal{N}(0,\Grave\Sigma(x_i)).
\end{align}
Next, recall (as in the proof of Theorem \ref{thm:rhoCLT}) $nW_n^TS_P^{-1}W_n\xrightarrow{a.s.}\Delta^{-1}$, so that 
by the multivariate version of Slutsky's theorem in Eq. (\ref{eq:mainclt}) we get
\begin{align*}
    n^{1/2}\Big[(\fM-\TP)U_{\TP}S_{\TP}^{-1/2}\Big]_hW_n\xrightarrow{\mathcal{L}}\mathcal{N}(0, \Check\Sigma_{\rho}(x_i;s)),
\end{align*}
where $ \Check\Sigma_{\rho}(x_i;s))=\Delta^{-1}\Grave\Sigma(x_i)\Delta^{-1}$.
\end{proof}

\vspace{.3cm}
\noindent
This equips us for the proof of Theorem \ref{thm:genOMNI}.
\begin{proof} (Central Limit Theorem for the rows of the generalized omnibus matrix)
Fix the index $h\in[mn]$ as $h=n(s-1)+i$, where $s\in[m]$, $i\in[n]$. Recall that for each $n=1,2,\cdots$, we choose orthogonal $W_n\in\R^{d\times d}$ so that $\bZ=\bZ^{\star}W_n=U_{\TP}\spp^{1/2} W_n$. 
With $V_n$ defined as in Lemma \ref{lemma:evector_subspace}, the matrix difference $$n^{1/2}\Big(U_{\fM_n}S_{\fM_n}^{1/2}-U_{\TP_n}S_{\TP_n}^{1/2}V_n\Big)_h V_n^T W_n$$
can be decomposed into a sum of matrices (as shown in Eq.(\ref{eq:cons}) of the proof of Theorem \ref{theorem:genOMNI_consistency}) as  follows
$$n^{1/2}\Big(U_{\fM_n}S_{\fM_n}^{1/2}-U_{\TP_n}S_{\TP_n}^{1/2}V_n\Big)_hV_n^T W_n=n^{1/2}\Big((\mathfrak{M}_n-\TP_n)U_{\TP_n}S_{\TP_n}^{-1/2}W_n\Big)_h+n^{1/2}R_{h,n}V_nW_n,$$
where $R_{h,n}\in\R^{mn\times d}$ is the $h-$th row of the residual matrix defined as $R=H_2-H_3+H_4+H_5$. 
\noindent
Further, in the proof of Theorem \ref{theorem:genOMNI_consistency} it is shown that for any $i=2,3,5$ we have
\begin{align*}
    n^{1/2}\|H_i\|_{2\rightarrow\infty}\leq \frac{Cm\log mn}{n^{1/2}}\hspace{1cm}\text{w.h.p.}
\end{align*}
It remains to provide an appropriate bound for $H_4$.

Rather than showing $\sqrt{n}\|H_4\|_{2\rightarrow\infty}$ converges to $0$ in probability,
we will prove directly that
$\sqrt{n}\|(H_4)_h\|_{2}$ converges to $0$ in probability, which is sufficient.
Adapting the exchangeability bound on the analogous term from \cite{levin_omni_2017}, we recall that (recalling that $I_{\pm}$ is the random diagonal sign matrix designed to give $\fM U_{\fM}=U_{\fM}S_{\fM}I_{\pm}$)
\begin{align*}
H_4:=&(I-U_{\TP}U_{\TP}^T)(\mathfrak{M}-\TP)Q^{(1)}S_{\mathfrak{M}}^{-1/2}I_{\pm}\\
=&(I-U_{\TP}U_{\TP}^T)(\mathfrak{M}-\TP)(\um-\up\up^T\um)S_{\mathfrak{M}}^{-1/2}I_{\pm}\\
&+(I-U_{\TP}U_{\TP}^T)(\mathfrak{M}-\TP)(\up\up^T\um-\up V)S_{\mathfrak{M}}^{-1/2}I_{\pm}\ .
\end{align*}
Denote the first term in this expression via $H_{41}$ (the second via $H_{42}$), and note that
\begin{align*}
H_{41}:=&(I-U_{\TP}U_{\TP}^T)(\mathfrak{M}-\TP)(\um-\up\up^T\um)S_{\mathfrak{M}}^{-1/2}I_{\pm}\\
=&\underbrace{(I-U_{\TP}U_{\TP}^T)(\mathfrak{M}-\TP)(I-\up\up^T)\um\um^T}_{:=E_1}(\um S_{\mathfrak{M}}^{-1/2}I_{\pm})\\
H_{42}:=&(I-U_{\TP}U_{\TP}^T)(\mathfrak{M}-\TP)(\up\up^T\um-\up V)S_{\mathfrak{M}}^{-1/2}I_{\pm}\\
=&(I-U_{\TP}U_{\TP}^T)(\mathfrak{M}-\TP)\up(\up^T\um-\up) VS_{\mathfrak{M}}^{-1/2}I_{\pm}\ .
\end{align*}
Considering first $H_{41}$, we have that 
\begin{align*}
\|(H_{41})_h\|_2\leq 
\|(E_1)_h\|_2\|\um \sm^{-1/2}\|.
\end{align*}
Consider now the term $\|(E_1)_h\|_{2}$.
For any symmetric matrix $B\in\mathbb{R}^{k\times k}$, we define $\Pi_{d}({B})$ to be the orthogonal projection onto the eigenspace corresponding to the eigenvectors of $B$ with the $d$ largest (in magnitude) eigenvalues.
Similarly, let $\Pi_{d}^{\perp}({\bf B})$ denote the orthogonal projection onto $(\Pi_{d}({B}))^\perp$.

Note that for any permutation matrix $Q$, we have that
\begin{align*}
\Pi_{d}(QB Q^T)&=Q\Pi_{d}(B)Q^T\\
\Pi_{d}^\perp(QB Q^T)&=Q\Pi_{d}^\perp(B)Q^T.
\end{align*}
As in \cite{levin_omni_2017}, for any $(B, H)\in\mathbb{R}^{k\times k}\times \mathbb{R}^{k\times k}$, define the operator $\mathcal{L}(B,H)$ via:
	$$\mathcal{L}(B, H)=\Pi_d^{\perp}(H)(B-H)\Pi_d^{\perp}(H)\Pi_d(B).$$
Let us consider permutations $\TQ\in\mathbb{R}^{mn\times mn}$ of the form
$$\TQ=I_m \otimes Q
$$
for permutations $Q\in\mathbb{R}^{n\times n}$.
Note that, for such $\TQ$, we have that
$$
\TQ \fM \TQ^T=\begin{pmatrix}
Q\fM^{(1,1)}Q^T & Q\fM^{(1,2)}Q^T &\cdots &Q\fM^{(1,m)}Q^T\\
Q\fM^{(1,2)}Q^T & Q\fM^{(2,2)}Q^T &\cdots &Q\fM^{(2,m)}Q^T\\
\vdots& \vdots &\ddots&\vdots\\
Q\fM^{(1,m)}Q^T & Q\fM^{(2,m)}Q^T &\cdots &Q\fM^{(m,m)}Q^T
\end{pmatrix},
$$
and that for each $(i,j)$ pair,
$$Q\fM^{(i,j)}Q^T=\sum_\ell c^{(i,j)}_\ell QA^{(\ell)}Q^T.$$
Similarly, 
$$\TQ \TP \TQ^T=J_m\otimes (Q P Q^T).
$$
Note that $\mathcal{L}(\fM, \TP)=E_1$, and as these orthogonal projections are unique, we have
	\begin{equation} \label{eq:E1_underpermutation}
	\begin{aligned}
	    \mathcal{L}&(\TQ \fM  \TQ^T, \TQ \TP \TQ^T)\\
	    &=\TQ (I-\up\up^T)\TQ^T\TQ(\fM-\TP)\TQ^T \TQ(I-\up \up^T) \TQ^T \TQ(\um \um^T)\TQ^T\\
	    &=\TQ E_1 \TQ^T.
	\end{aligned} \end{equation}
Since the rows of $\bX$ are i.i.d.\@ and the correlation across a given pair of graphs is the same for all edge pairs (i.e., $\text{corr}(A^{(k)}_{i,j},A^{(\ell)}_{i,j})=R_{k,\ell}$ independent of $i$ and $j$), together this implies that the matrix-pair entries of $(\fM,\TP)$ are equal in distribution to those of 
$(\TQ\fM \TQ^T,\TQ\TP \TQ^T)$.
Therefore, the entries of 
$\mathcal{L}(\TQ \fM  \TQ^T, \TQ \TP \TQ^T)=\TQ E_1\TQ^T$ are equal in law to those of $\mathcal{L}( \fM  ,  \TP )= E_1$.
Therefore, for each row $i$ we have that
$$
\|(\TQ E_1)_i\|^2=\|(\TQ E_1\TQ^T)_i\|^2\stackrel{\mathcal{L}}{=}\| (E_1)_i\|^2.
$$
This implies that 
if $Q_{i,j}=1$ then
for any $k\in[m]$, 
\begin{equation}
\label{eq:nodepend}
\Ex(\| (E_1)_{(k-1)n+i}\|^2)=\EX(\|(\TQ E_1 )_{(k-1)n+i}\|^2) =\Ex(\| (E_1)_{(k-1)n+j}\|^2).
\end{equation}
This guarantees that 
$$\EX(\|(E_1 )_{(k-1)n+i}\|^2)$$ depends only on $k$ and not on $i$,  and note that the analogous result follows immediately for $$\EX(\|(\widetilde QE_1 )_{(k-1)n+i}\|^2).$$
We can then define for $i,j\in[n]$,
$$r_k:=\EX(\|(\TQ E_1 )_{(k-1)n+i}\|^2) =\Ex(\| (E_1)_{(k-1)n+j}\|^2).$$
Observe that
$$\Ex(\|E_1\|^2_F)=\sum_{k=1}^m n r_k\geq n\max_k r_k.$$
Because $h=(s-1)n+i$, for $i \in [n]$, Eq.\eqref{eq:nodepend} and an application of Markov's inequality yield
\begin{align*}
    \p(\sqrt{n}\|(E_1)_h\|>t)&\leq \frac{n\Ex(\|(E_1)_h\|^2)}{t^2}=\frac{nr_s}{t^2}\\
    &\leq\frac{\Ex(\|E_1\|^2_F)}{t^2}.
\end{align*}
The entries of $\fM-\TP$ are bounded between $[-1,1]$, and $(I-\up \up^T)\um\um^T$ is rank $d$ (with spectral norm bounded by $1$). Hence, globally 
$$
\|E_1\|^2_F\leq \|I-\up \up^T\|^2 \|\fM-\TP\|_F^2 \|(I-\up \up^T)\um\um^T\|^2_F\leq m^2n^2 d^2.
$$
Now, we also have that, with high probability (so that the bad set has probability bounded above by $Cn^{-2}$), there exists a constant $C>0$ such that
\begin{align*}
\|E_1\|^2_F&\leq  d^2
\|I-\up \up^T\|^2 
\|\fM-\TP\|^2 
\|(I-\up \up^T)\um\|^2\|\um^T\|^2_F\\
&\leq C\frac{m^2 n\log^2 mn}{n }=Cm^2\log^2 mn \hspace{10mm} \text{by Lemma \ref{lemma:conc}, and  Eq.(\ref{eq:boundingeeee})}.
\end{align*}
We then have that there exists a constant $C>0$ such that
$$
\Ex(\|E_1\|^2_F)\leq C(m^2\log^2 mn+ n^{-2} m^2n^2 d^2)=C(m^2\log^2 mn+ m^2 d^2).
$$
Letting $t=n^{1/4}$ in our Markov bound, we see that 
$\p(\sqrt{n}\|(E_1)_h\|>n^{1/4})\rightarrow 0$.
As, w.h.p., we have that $\|\um \sm^{-1/2}\|\leq C/\sqrt{mn}$, we have that
\begin{align*}
n^{1/2}\|(H_{41})_h\|_2&\leq n^{1/3}\|(E_1)_h\|\, n^{1/6}\|\um \sm^{-1/2}\|\stackrel{P}{\rightarrow} 0.\end{align*}

Turning our attention to $H_{42}$, we have that (w.h.p.)
\begin{align*}
\|H_{42}\|_F
&\leq \|(I-U_{\TP}U_{\TP}^T)(\mathfrak{M}-\TP)(\up\up^T\um-\up V)S_{\mathfrak{M}}^{-1/2}I_{\pm}\|_F\\
&\leq \|(I-U_{\TP}U_{\TP}^T)\| \cdot\|\mathfrak{M}-\TP\|\cdot\|\up\|\cdot
\|\up^T\um- V\|_F
\|S_{\mathfrak{M}}^{-1/2}\|\\
&\leq C (m \sqrt{ n \log mn}) \left(\frac{\log mn}{n}\right)\left(\frac{1}{(mn)^{1/2}}\right)
\leq 
C \frac{m^{1/2}\log^{3/2}mn}{n},
\end{align*}
where the last line follows from Lemmas \ref{lemma:eig_order}, 
\ref{lemma:conc}, and
\ref{lemma:evector_subspace}.
Therefore, $n^{1/2}H_{42}$ converges to $0$ in probability, and combined this yields that the $h$-th row of $n^{1/2}H_4$ converges to $0$ as desired.

Since the matrix $W_n$ is unitary, the bounds above imply that the term $n^{1/2}R_{h,n}W_n$ converges to $0$ in probability.
Moreover, by Lemma \ref{lemma:normality} we have
\begin{align*}
   \lim_{n\rightarrow\infty}\PX\Big[ n^{1/2}\Big((\fM-\TP)U_{\TP}S_{\TP}^{-1/2}\Big)_hW_n\leq x\Big]\xrightarrow{\mathcal{D}}\int_{\text{suppF}}\Phi(x,\Check\Sigma_{\rho}(y;s))dF(y),
\end{align*}by integrating over the latent positions $X_i$. Finally, an application of Slutsky's theorem completes the proof.
\end{proof}

\subsection{Limiting correlation for the general omnibus embedding}\label{sec:app_pfdiffgenomni}

Here, we supply details for the computation of the limiting correlation across rows of the general omnibus embedding; this is the content of Theorem \ref{thm:pfdiffgenomni} from Section \ref{sec:indcorr_gen}.
\begin{proof}
We mimic the proof of Theorem \ref{thm:rhoCLT} here, and so omit some detail. Fix some $i\in[n]$ and some $s_1,s_2\in[m]$ and for $j=1,2$, let $h_j=n(s_j-1)+i\in [mn]$.
Conditioning on $X_i=x_i$, 
analogous to Eq.(\ref{eq:key2}), we write (where $\mathcal{Q}_n=V_n^TW_n$ as defined in the proof of Theorem \ref{thm:genOMNI})
\begin{align*}
    n^{1/2} &\left( \Big(\bhx_{\fM}V_n^TW_n\Big)_{h_1} -\Big(\bhx_{\fM}V_n^TW_n\Big)_{h_2}\right)\\
    &= n^{1/2}\Big((\mathfrak{M}_n-\TP_n)\bZ\Big)_{h_1}- n^{1/2}\Big((\mathfrak{M}_n-\TP_n)\bZ\Big)_{h_2}+o_P(1)\\
&=\frac{n^{-1/2}}{m}\left(\sum_{j\neq i}\Big(\sum_{q=1}^m\fM_{i,j}^{(s_1,q)}-\fM_{i,j}^{(s_2,q)}\Big)X_j\right)[nW_n^TS_P^{-1}W_n]+o_P(1)\\
&=n^{-1/2}\left(\sum_{j\neq i}\frac{1}{m}\Big(\sum_{q=1}^m\alpha(s_1,q)A_{i,j}^{(q)}-\alpha(s_2,q)A_{i,j}^{(q)}\Big)X_j\right)[nW_n^TS_P^{-1}W_n]+o_P(1).
\end{align*}
Each of the $n-1$ terms, $\frac{1}{m}\Big(\sum_{q=1}^m\alpha(s_1,q)A_{i,j}^{(q)}-\alpha(s_2,q)A_{i,j}^{(q)}\Big)X_j$,
is an independent, mean zero, random variable, with common covariance matrix $\Phi_{\rho}(x_i,s_1,s_2)$.
The desired result will then follow from an application of the multivariate central limit theorem and multivariate Slutsky theorems (as in the proof of Theorem \ref{thm:rhoCLT}), provided we can show the right form for $\Phi_{\rho}(x_i,s_1,s_2)$.
To this end, we consider (suppressing the conditioning on $X_i=x_i)$
\begin{align}
\label{eq:gencorrind}
\Phi_{\rho}(x_i,s_1,s_2)&:=
\frac{1}{m^2}\Ex\Big(\bigg(\sum_{q=1}^{m}\alpha(s_1,q)A^{(q)}_{i,j}-\sum_{q=1}^{m}\alpha(s_2,q)A^{(q)}_{i,j})\bigg)^2X_jX_j^T\Big)\notag\\
    &=\frac{1}{m^2}\bigg[\EX[(x_i^TX_j)X_jX_j^T]\sum_{q=1}^{m}(\alpha(s_1,q)-\alpha(s_2,q))^2\notag\\
    &\hspace{3mm}+2\EX[(x_i^TX_j)^2X_jX_j^T]\sum_{q<l}(\alpha(s_1,q)-\alpha(s_2,q))(\alpha(s_1,l)-\alpha(s_2,l))\Big)\notag\\
    &\hspace{3mm}+2\EX[x_i^TX_j(1-x_i^TX_j)X_jX_j^T]\sum_{q<l}(\alpha(s_1,q)-\alpha(s_2,q))(\alpha(s_1,l)-\alpha(s_2,l))\rho_{q,l}\bigg]\notag\\
        &=\frac{1}{m^2}\bigg[\sum_{q=1}^{m}(\alpha(s_1,q)-\alpha(s_2,q))^2\notag\\
    &\hspace{3mm}+2\sum_{q<l}(\alpha(s_1,q)-\alpha(s_2,q))(\alpha(s_1,l)-\alpha(s_2,l))\rho_{q,l}\bigg]\Sigma(x_i)\notag\\
    &=\frac{1}{m^2}\bigg(
    2\sum_{q<l}(\alpha(s_1,q)-\alpha(s_2,q))(\alpha(s_1,l)-\alpha(s_2,l))(\rho_{q,l}-1)\bigg)\Sigma(x_i),
\end{align}
where $\Sigma(x_i)=\EX[(x_i^TX_j-(x_i^TX_j)^2)X_jX_j^T]$, and the third equality follows from 
\begin{align*}
0&=\left(\sum_q \alpha(s_1,q)-\alpha(s_2,q) \right)^2\\
&=\sum_q \left(\alpha(s_1,q)-\alpha(s_2,q) \right)^2+2\sum_{q<l}(\alpha(s_1,q)-\alpha(s_2,q))(\alpha(s_1,l)-\alpha(s_2,l)).
\end{align*}
The proof then follows mutatis mutandis as that of Theorem \ref{thm:rhoCLT}.
\end{proof}

\subsection{Central Limit Theorem for the rows of the average embeddings}\label{sec:app_pfdiffgenomni2}
We supply details for the computation of the limiting covariance matrix of the rows of the sums of the aligned embeddings across the graphs, as described in Theorem \ref{thm:CLT-effective-samplesize} from Section \ref{sec:experiments-simulations}. As in the previous theorems, the proof mimics the one of Theorem \ref{thm:rhoCLT}, and so we omit some details. 

\begin{proof} To prove part a), fix some $i\in[n]$ . By the same arguments than Equations~\eqref{eq:key} and \eqref{eq:key11}, observe that
\begin{align*}
\sqrt{n} \left(\frac{1}{m}\sum_{s=1}^m \widehat \bX_{A_n}^{(k)} W^{(s)}_n -  \bX_n \right)_i=  \sqrt{n} \left(\frac{1}{m}\sum_{s=1}^m(A_n^{(s)} - P_n) U_PS_P^{-1/2}W_n \right)_i + O(n^{-1/2}\log n),
\end{align*}
where $W^{(s)} = W^{(s,1)}(W^{(s,2)})^T$, with $W^{(s,1)}\Lambda^{(s)}(W^{(s,2)})^T = U_P^TU_{A^{(s)}}$  the singular value decomposition of the matrix on the right hand side, and $W_n$ is a sequence of orthogonal matrices such that $U_{P_n}S_{P_n}^{1/2}W_n = \bX_n$, 

Now, observe that the first term can be written as
\begin{align*}
\sqrt{n} \left(\frac{1}{m}\sum_{s=1}^m(A_n^{(s)} - P_n) U_PS_P^{-1/2}W_n \right)_i    = & \sqrt{n} \left(\frac{1}{m}\sum_{s=1}^m(A_n^{(s)} - P_n) \bX_n W_n^TS_P^{-1}W_n \right)_i\\
= & \frac{1}{\sqrt{n}}\sum_{j\neq i} \left(\frac{1}{m}\sum_{s=1}^m(A_{ij}^{(s)} - P_{ij}) X_{j}\right) (nW_n^TS_P^{-1}W_n).
\end{align*}
Conditioning on $X_i=x_i$, the expression is a sum of $n-1$ independent terms with mean zero. The covariance matrix of each term can be calculated as
\begin{align*}
\mathbb{E}\left ( \left(\frac{1}{m}\sum_{s=1}^m(A_{ij}^{(s)} - P_{ij})\right)^2 X_{j}X_j^T\right) 
= & \mathbb{E}\left (\frac{1}{m^2} \mathbb{E}\left( \left. \left(\sum_{s=1}^m(A_{ij}^{(s)} - P_{ij})\right)^2\right| \bX \right) X_{j}X_j^T\right)\\
= & \mathbb{E}\left ( \frac{1}{m^2}\left(mP_{ij} - m^2P^2 + m(m-1)P_{ij}(P_{ij} + \rho(1- P_{ij})) \right) X_{j}X_j^T\right)\\
= & \left(\frac{1-\rho}{m} + \rho\right)\mathbb{E}\left ( P_{ij}(1 - P_{ij}) X_{j}X_j^T\right)\\
= & \widetilde{\Sigma}\left(x_i, \frac{1-\rho}{m} + \rho\right).
\end{align*}
As in Theorem~\ref{thm:rhoCLT}, the classical multivariate central limit theorem gives the desired result.

To prove part b), we follow similar arguments to the proofs of Theorems~\ref{thm:rhoCLT} and \ref{thm:pfdiffgenomni}. Fix some $i\in[n]$, and observe that
\begin{align*}
\sqrt{n} \left(\frac{1}{m}\sum_{s=1}^m \widehat \bX_{M_n}^{(s)} \tilde W_n -  \bX_n \right)_i=   \frac{\sqrt{n}}{m}\sum_{s=1}^m\left((M_n - \widetilde{P}_n) \bZ_n(\tilde W_n^TS_P^{-1}\tilde W_n) \right)_{n(s-1)+i} + o_P(1),
\end{align*}
where $\tilde W_n = V_n^TW_n$ as defined in the proof of Theorem~\ref{thm:omniCLT}. The first term can be written as
\begin{align*}
\frac{{n}^{1/2}}{m} & \sum_{s=1}^m\left((M_n - \widetilde{P}_n) \bZ_n \tilde W_n^TS_P^{-1}\tilde W_n\right)_{n(s-1)+i}\\
 & = \frac{\sqrt{n}}{m}\sum_{s=1}^m\sum_{t=1}^m\frac{1}{m}\left((M_n^{(s, t)} - {P}_n) \bX_n(\tilde W_n^TS_P^{-1}\tilde W_n) \right)_{i} \\
 &=  \frac{\sqrt{n}}{m}\sum_{j\neq i} \sum_{s=1}^m\sum_{t=1}^m\frac{1}{m}\left((M_{ij}^{(s, t)} - {P}_{ij}) X_j \right)(\tilde W_n^TS_P^{-1}\tilde W_n) \\
 &=  \frac{\sqrt{n}}{m}\sum_{j\neq i} \sum_{s, t}\frac{1}{m}\left(\left(\frac{1}{2}(A^{(s)}_{ij} + A_{ij}^{(t)}) - {P}_{ij}\right) X_j \right)(\tilde W_n^TS_P^{-1}\tilde W_n) \\
 &=  \frac{\sqrt{n}}{m}\sum_{j\neq i} \sum_{s, t}\left(\frac{1}{2m}(A^{(s)}_{ij} - P_{ij}) + \frac{1}{2m}(A_{ij}^{(t)} - {P}_{ij})  \right)X_j(\tilde W_n^TS_P^{-1}\tilde W_n) \\
 & =  \frac{\sqrt{n}}{m}\sum_{j\neq i} \left(\frac{1}{2m}\sum_t\sum_{s}(A^{(s)}_{ij} - P_{ij}) + \frac{1}{2m}\sum_s\sum_{t}(A_{ij}^{(t)} - {P}_{ij})  \right)X_j(\tilde W_n^TS_P^{-1}\tilde W_n) \\
 & =   \frac{\sqrt{n}}{m}\sum_{j\neq i} \left(\frac{1}{2}\sum_{s}(A^{(s)}_{ij} - P_{ij}) + \frac{1}{2}\sum_{t}(A_{ij}^{(t)} - {P}_{ij})  \right)X_j(\tilde W_n^TS_P^{-1}\tilde W_n) \\
 & =  \frac{1}{\sqrt{n}}\sum_{j\neq i} \left(\frac{1}{m}\sum_{s=1}^m(A_{ij}^{(s)} - P_{ij}) \right)X_{j}(n\tilde{W}_n^TS_P^{-1} \tilde{W}_n).
\end{align*}
The covariance of each of the terms is calculated in the same way as in part a), and the result follows by similar arguments.
	
\end{proof}

\end{document}